\newtheorem*{lemma*}{Lemma}
\newtheorem{lemma}[subsection]{Lemma}
\newtheorem*{theorem*}{Theorem}
\newtheorem{theorem}[subsection]{Theorem}
\newtheorem*{proposition*}{Proposition}
\newtheorem{proposition}[subsection]{Proposition}
\newtheorem*{corollary*}{Corollary}
\newtheorem{corollary}[subsection]{Corollary}
\theoremstyle{definition}
\newtheorem*{definition*}{Definition}
\newtheorem*{example*}{Example}
\newtheorem{example}[subsection]{Example}
\theoremstyle{remark}
\newtheorem*{remark*}{Remark}
\newtheorem{remark}[subsection]{Remark}
 \newtheorem{definition}[subsection]{Definition}
\newcommand{\pd}[2]{\dfrac{\partial#1}{\partial#2}}
\renewcommand{\phi}{\varphi}
\newcommand{\be}{\begin{enumerate}}
\newcommand{\ee}{\end{enumerate}}
\title{ A note on the Formanek Weingarten function}
\author{Claudio Procesi}
\begin{document}\maketitle
\begin{abstract}
The aim of this   note is to compare work of Formanek \cite{formanek2} on  a certain construction of central polynomials with that of Collins \cite{Coll} on integration on unitary groups.

These two quite disjoint topics share the construction of the same function  on the symmetric group, which the second author calls {\em Weingarten function}. 

By joining these two approaches we succeed in giving a simplified and {\em very natural} presentation of both Formanek and Collins's Theory.
\end{abstract}
 
   \section{Schur Weyl duality}
   \subsection{Basic results} We need to recall some basic facts on the representation Theory of the symmetric and the linear group.
   
   Let $V$  be a vector space of finite dimension $d$ over a field $F$ which in this note can be taken as $\mathbb Q$  or $\mathbb C$.   On the tensor power $V^{\otimes k}$  act  both the symmetric group  $S_k$ and the linear group $GL(V)$, Formula \eqref{azio}, furthermore if $F=\mathbb C$ and $V$ is equipped with a Hilbert space structure one has an induced   Hilbert space structure on  $V^{\otimes k}$.  The unitary group $U(d)\subset GL(V)$ acts on  $V^{\otimes k}$ by unitary matrices.
   \begin{equation}\label{azio}
\sigma  \cdot u_{1}\otimes u_{2}\otimes \ldots \otimes u_{k}:=u_{\sigma^{-1}  (1)}\otimes u_{\sigma ^{-1}  (2)}\otimes \ldots \otimes u_{\sigma ^{-1}  (k)},$$$$ g  \cdot u_{1}\otimes u_{2}\otimes \ldots \otimes u_{k}:=gu_{1}\otimes gu_{2}\otimes \ldots \otimes gu_{k},\ \sigma\in S_k,\ g\in GL(V).
\end{equation}
The first step of Schur Weyl duality is the fact that the two operator algebras $ \Sigma_{k}(V) ,B_{k,d}$ generated respectively by $S_k$ and $GL(V)$  acting on $V^{\otimes k}$, are both semisimple and each the centralizer of the other.

In particular  the algebra $ \Sigma_{k}(V) \subset End(V^{\otimes k})=End(V)^{\otimes k}$ equals the subalgebra $\Sigma_{k}(V) =\left(End(V)^{\otimes k}\right)^{GL(V)}$ of  invariants  under the conjugation  action of the group $GL(V)\to End(V)^{\otimes k},\ g\mapsto g\otimes g\otimes\ldots \otimes g.$

From this, the double centralizer Theorem and work of Frobenius and Young one has that, 
  under the action of these two commuting groups,  the space $V^{\otimes k}$ decomposes into the direct sum          
   \begin{equation}\label{swd}
V^{\otimes k}=\oplus_{\lambda\vdash k,\ ht(\lambda)\leq d}M_\lambda\otimes S_\lambda(V)
\end{equation} over all partitions $\lambda$ of $k$  of height $\leq d$,  (the height $ht(\lambda)$ denotes the number of elements or {\em rows}, nonzero, of $ \lambda $).  \label{hei}

$M_\lambda$  is an irreducible representation of $S_k$ while $ S_\lambda(V)$, called a {\em Schur functor} is an irreducible polynomial representation of  $GL(V)$, which remains irreducible also when restricted to $U(d)$.  The partition with a single row $k$  corresponds   to the trivial representation of $S_k$  and to the symmetric power $S^k(V)$ of $V$.    The partition with a single column $k$  corresponds   to the sign representation of $S_k$  and to the exterior power $\bigwedge^k(V)$ of $V$       .\smallskip

The character theory of the two groups can be deduced from these representations.  We shall denote by $\chi_\lambda(\sigma)$ the character of the permutation $\sigma$  on $M_\lambda$. As for $S_\lambda(V)$ its character is expressed by a symmetric function  $S_\lambda(x_1,\ldots,x_d)$ restriction to the first $d$ variables of a stable symmetric function called {\em Schur function}.
Of this deep and beautiful Theory, see \cite{macdonald},  \cite{fulton},  \cite{fultonharris},  \cite{Weyl}, \cite{P7}, we shall use only two remarkable formulas, the {\em hook formula}  due to Frame,
Robinson and Thrall~\cite{sagan},  expressing the dimension   $\chi_\lambda(1)$ of $M_\lambda$  and the {\em hook-content formula} of  Stanley,  cf.  \cite[Corollary 7.21.4]{stanley})  expressing the dimension  $s_\lambda(d):= S_\lambda( 1,\ldots, 1)= S_\lambda( 1^d)$ of $S_\lambda(V)$.

We display partitions by Young diagrams, as in the figure below. 

By $\tilde\lambda$ we denote the dual partition obtained by exchanging rows and columns. The {\em boxes}, cf. \eqref{box},  of the diagram are indexed by pairs $(i,j)$ of coordinates.  \footnote{We use the {\em english notation}}  Given then one of the boxes $u$ we define its {\em hook number}  $h_u$ and its {\em content} $c_u$  as follows:

\begin{definition} \label{hook.numbers2}
Let $\lambda$ be a partition of $n$ and let $u=(i,j) \in \lambda$ be a
box in the corresponding Young diagram.  The {\em hook number}\index{hook number}
$h_u=h(i,j)$ and the {\em content} $c_u$ are defined as follows:
\begin{eqnarray} \label{eqn7}
h_u=h(i,j)= \lambda_i + \check \lambda_j -i-j +1,\quad c_u=c(i,j):=j-i.
\end{eqnarray}
\end{definition}
\begin{example}
Note that the box $u=(3,4)$ defines a hook in the diagram
$\lambda$, and $h_u$ equals the length (number of boxes) of this
hook:\bigskip

\bigskip

\ \  \qquad\ \ \ \  \qquad4\smallskip

\ytableausetup{centertableaux}\label{YD}
\quad\ytableaushort{\none}
*{13, 11, 10, 8, 6, 6, 6}* 
[*(yellow)]{0,0,3+7,3+1,3+1,3+1,3+1}
\vskip-2.8cm 3

\vskip3cm
 
 In this figure, we have $\lambda=(13,11,10,8,6^3),\ ht(\lambda)=7$ with $u=(3,4)$.

Then $\check \lambda=(7^6, 4^2, 3^2, 2, 1^2)$ and $h_u=\lambda_3 +
\check \lambda_4 -3-4+1= 10 + 7 -6 =11$.

Here is another example:
In the following diagram of shape $\lambda=(8,3,2,1)$,  each hook
number $h_u$, respectively content $c_u$ is written inside its box in the diagram $\lambda$:\bigskip

\vbox{\quad  \begin{Young}
11&9&7&5&4&3&2&1\cr
5&3&1\cr
3&1\cr
1\cr
\end{Young}\vskip-1.78cm  \quad\quad\quad\quad \quad \quad\quad\quad\quad\quad\quad\quad ,\quad \begin{Young}\label{conte}
0&1&2&3&4&5&6&7\cr
-1&0&1\cr
-2&-1\cr
-3\cr
\end{Young}}
\end{example}

\begin{theorem}[The hook and hook--content formulas] \label{thm2} Let $\lambda\vdash k$ be a\index{hook formula}
  partition of $k$ and $\chi_\lambda(1) $ and $ s_{\lambda }(d)$ be the dimension of the corresponding irreducible representation $M_\lambda$ of  $S_k$   and $S_\lambda(V)$ of $GL(V),\ \dim(V)=d$. Then
  \begin{equation}\label{sth}
s_{\lambda }(d)=\prod_{u\in \lambda}\frac{d+c_u}{h_u},\quad \chi_\lambda (1)=\frac{k!}{\prod_{u\in \lambda}h_u}.
\end{equation}
 
\end{theorem}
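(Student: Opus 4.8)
The plan is to reduce both identities to two classical determinantal formulas --- the Weyl dimension formula for $S_\lambda(V)$ and the Frobenius determinant formula for the degree $\chi_\lambda(1)$ --- together with one combinatorial lemma that rewrites a Vandermonde product in terms of hook lengths. Throughout I fix an integer $n\ge ht(\lambda)$, pad $\lambda$ with zeros to $n$ parts, and set $\ell_i:=\lambda_i+n-i$ for $i=1,\dots ,n$, so that $\ell_1>\ell_2>\cdots >\ell_n\ge 0$. The two inputs to quote from the references \cite{macdonald}, \cite{fulton}, \cite{fultonharris} are, first, the bialternant (Weyl) formula $S_\lambda(x_1,\dots ,x_n)=\det(x_i^{\ell_j})/\det(x_i^{\,n-j})$, which evaluated at $x_1=\dots =x_n=1$ gives
\[
s_\lambda(n)=\prod_{1\le i<j\le n}\frac{\ell_i-\ell_j}{j-i},
\]
and, second, the Frobenius formula $\chi_\lambda(1)=k!\,\det\bigl(1/(\lambda_i-i+j)!\bigr)_{1\le i,j\le n}$, which after extracting the Vandermonde from the matrix of monic falling factorials becomes
\[
\chi_\lambda(1)=\frac{k!\,\prod_{1\le i<j\le n}(\ell_i-\ell_j)}{\ell_1!\,\ell_2!\cdots \ell_n!}.
\]

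The heart of the proof is the following \emph{hook lemma}: for every row $i$ the multiset of hook lengths of the boxes in that row equals $\{1,2,\dots ,\ell_i\}\setminus\{\ell_i-\ell_t:\ i<t\le n\}$. I would prove this from the arm--leg decomposition $h(i,j)=(\lambda_i-j)+(\check\lambda_j-i)+1$. As $j$ runs from $1$ to $\lambda_i$ the arm $\lambda_i-j$ decreases by $1$ at each step while the leg $\check\lambda_j-i$ is weakly decreasing, so the $\lambda_i$ hook values $h(i,1)>\cdots>h(i,\lambda_i)$ form a strictly decreasing sequence inside $\{1,\dots ,\ell_i\}$; the $n-i$ integers of this interval that do not occur are exactly the $\ell_i-\ell_t$ with $i<t\le n$, each such gap corresponding to a drop of $\check\lambda_j$, i.e.\ to a corner of the diagram lying weakly below row $i$. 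Multiplying the hook lengths along row $i$ gives $\prod_{u\in\text{row }i}h_u=\ell_i!/\prod_{t>i}(\ell_i-\ell_t)$, and multiplying over all rows yields
\[
\prod_{u\in\lambda}h_u=\frac{\ell_1!\,\ell_2!\cdots \ell_n!}{\prod_{1\le i<j\le n}(\ell_i-\ell_j)}.
\]

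Comparing this last identity with the Frobenius formula gives at once $\chi_\lambda(1)=k!/\prod_{u\in\lambda}h_u$. For the hook--content formula I would take $n=d$ and substitute the same identity into the Weyl formula: since $\prod_{1\le i<j\le d}(j-i)=\prod_{i=1}^{d}(d-i)!$, one obtains
\[
s_\lambda(d)=\frac{1}{\prod_{u\in\lambda}h_u}\prod_{i=1}^{d}\frac{\ell_i!}{(d-i)!}
=\frac{1}{\prod_{u\in\lambda}h_u}\prod_{i=1}^{d}\,\prod_{j=1}^{\lambda_i}(d-i+j),
\]
and since $d-i+j=d+c(i,j)$ the last double product equals $\prod_{u\in\lambda}(d+c_u)$, which is precisely \eqref{sth}.

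The classical inputs quoted at the start and the factorial bookkeeping at the end are routine; the one place that needs genuine care is the hook lemma --- establishing it as an honest equality of \emph{sets}, not merely of cardinalities, by pinning down exactly which integers along a fixed row are omitted. One could instead prove the hook formula on its own by the probabilistic hook-walk argument of \cite{sagan}, but the determinantal route above has the advantage of producing both formulas uniformly, which fits the spirit of this note.
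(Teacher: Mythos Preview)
Your argument is correct: the hook lemma you state is the standard identity (Macdonald, \S I.1, Example~1) and, together with the Weyl bialternant evaluation and the Frobenius determinant for $\chi_\lambda(1)$, it does yield both formulas in \eqref{sth}. The sketch you give for the lemma is accurate; the only point to watch is that when you pad $\lambda$ to $n$ parts the missing values $\ell_i-\ell_t$ for $t$ with $\lambda_t=0$ account precisely for the gap between $h(i,1)=\lambda_i+\check\lambda_1-i$ and $\ell_i=\lambda_i+n-i$, so the set equality holds as claimed.

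The paper, however, takes a quite different route. It does not prove the hook formula at all --- that is simply quoted from Frame--Robinson--Thrall via \cite{sagan} --- and it derives the hook--content formula in \S\ref{shcf} \emph{from} the hook formula using the Jucys factorization $P=\sum_\rho d^{c(\rho)}\rho=d\prod_{i=2}^{k}(d+J_i)$. The scalar by which the central element $P$ acts on $M_\mu$ is computed twice: once as $k!\,s_\mu(d)/\chi_\mu(1)$ via Schur--Weyl duality and orthogonality of characters, and once as $\prod_{u\in\mu}(d+c_u)$ by applying $P$ to a Young basis vector and invoking the Jucys--Murphy eigenvalue theorem $J_i v_T=c_T(i)v_T$. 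Equating the two and inserting the hook formula gives \eqref{sth}.

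Your determinantal approach is more self-contained and proves both identities uniformly, without assuming one to obtain the other; it is also the argument one finds in \cite{macdonald} and \cite{stanley}. The paper's approach is less elementary --- it rests on the Young basis and Theorem~\ref{JMu} --- but it is the one that fits the narrative here: the Jucys factorization is already central to the treatment of the Weingarten function, and this derivation exhibits the hook--content formula as a direct shadow of that factorization, explaining conceptually why the contents $c_u$ appear.
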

The remarkable Formula of Stanley, Theorem 15.3 of 
\cite{Stan0},                     exhibits 
$  s_{\lambda }(d)$ as a polynomial of degree $k=|\lambda|$  in $d$  with zeroes the integers $-c_u$  and leading coefficient  $\prod_{u\in \lambda}h_u^{-1}$, see \S \ref{shcf} for a proof.

\subsubsection{Matrix invariants}
The dual of the algebra  $End(V)^{\otimes k}$ can be identified, in a $GL(V)$ equivariant way, to   $End(V)^{\otimes k}$ by
the pairing formula:
$$\langle A_1\otimes A_2\dots\otimes A_k\mid B_1\otimes B_2\dots\otimes B_k\rangle:=tr(A_1\otimes A_2\dots\otimes A_k\circ
 B_1\otimes B_2\dots\otimes B_k)$$$$=tr(A_1B_1\otimes A_2B_2\dots\otimes A_kB_k)=\prod_{i=1}^k tr(A_iB_i).$$

Under this isomorphism the multilinear invariants of matrices are identified with the $GL(V)$  invariants
of  $End(V)^{\otimes m}$ which in turn are spanned by the elements of the symmetric group, hence by the elements of Formula \eqref{carsv}.
These are  explicited by Formula \eqref{phis} as in Kostant        \cite{kostant}.\begin{proposition}\label{lmuin} The space $\mathcal T_d(k)$ of  multilinear invariants of $k$, $d\times d$ matrices  is identified with  $End_{GL(V)}(V ^{\otimes k})$  and it is linearly spanned
by the functions:
\begin{equation}\label{carsv}
T_\sigma     (X_1,X_2,\dots,X_d):=tr(\sigma^{-1}\circ X_1\otimes X_2\otimes\dots\otimes   X_d),\ \sigma\in S_k.
\end{equation}
If $\sigma=(i_1i_2\dots i_h)\dots (j_1j_2\dots j_\ell)(s_1s_2\dots s_t)$ is the cycle
decomposition of $\sigma$ then  we have that
$T_\sigma     (X_1,X_2,\dots,X_d)$ equals \begin{equation}
\label{phis}=tr(X_{i_1}X_{i_2}\dots X_{i_h})\dots tr(X_{j_1}X_{j_2}\dots X_{j_\ell})
tr(X_{s_1}X_{s_2}\dots X_{s_t}).
\end{equation}
\end{proposition}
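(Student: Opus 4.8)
The plan is to establish three things in sequence: first, that $\mathcal{T}_d(k)$ is naturally identified with $\mathrm{End}_{GL(V)}(V^{\otimes k})$; second, that this space is spanned by the operators coming from $S_k$, i.e. by the $\sigma$; and third, that under the trace pairing these operators become precisely the functions $T_\sigma$ of \eqref{phis}. The first point is a direct consequence of the identification $\mathrm{End}(V)^{\otimes k} = \mathrm{End}(V^{\otimes k})$ together with the dual pairing displayed just before the statement: a multilinear function on $k$ copies of $\mathrm{End}(V)$ is an element of $(\mathrm{End}(V)^{\otimes k})^* \cong \mathrm{End}(V)^{\otimes k}$, and it is $GL(V)$-invariant (as a matrix invariant) exactly when the corresponding tensor lies in $(\mathrm{End}(V)^{\otimes k})^{GL(V)}$. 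But the excerpt has already recorded that $\Sigma_k(V) = (\mathrm{End}(V)^{\otimes k})^{GL(V)} = \mathrm{End}_{GL(V)}(V^{\otimes k})$, so the first and second points are handled at once by Schur--Weyl duality: $\Sigma_k(V)$ is spanned by the images of the permutations $\sigma \in S_k$ acting via \eqref{azio}.

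The substantive computation is the third point: evaluating the pairing to get formula \eqref{phis}. I would take $X_1 \otimes \cdots \otimes X_k \in \mathrm{End}(V)^{\otimes k}$ and pair it against the operator $\sigma^{-1} \in \mathrm{End}(V^{\otimes k})$, so that $T_\sigma(X_1,\ldots,X_k) = \mathrm{tr}(\sigma^{-1} \circ (X_1 \otimes \cdots \otimes X_k))$. Choosing a basis $e_1,\ldots,e_d$ of $V$ and writing the trace as a sum over basis tensors $e_{a_1} \otimes \cdots \otimes e_{a_k}$, the action of $\sigma^{-1}$ permutes the tensor factors while each $X_i$ contributes a matrix entry, and the trace forces an index-matching condition that couples $a_i$ to $a_{\sigma(i)}$ along the orbits of $\sigma$. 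Collecting the matrix entries around each cycle $(i_1\, i_2\, \cdots\, i_h)$ of $\sigma$ produces exactly a factor $\sum \, (X_{i_1})_{a\,b}(X_{i_2})_{b\,c}\cdots = \mathrm{tr}(X_{i_1} X_{i_2} \cdots X_{i_h})$, and the product over all cycles gives \eqref{phis}. The one point requiring care is bookkeeping the inverse: the convention \eqref{azio} sends $\sigma$ to the operator with $\sigma^{-1}$ in the subscripts, and the pairing uses $\sigma^{-1}$, so one must check that these conventions combine to give the cycles of $\sigma$ (not $\sigma^{-1}$) in the stated order; this is a routine but genuine consistency check.

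The main obstacle is purely notational rather than conceptual: keeping the index conventions for the $S_k$-action, the trace pairing, and the cycle decomposition mutually consistent so that the cycles appear with the correct orientation in \eqref{phis}. It is worth noting that the reference to Kostant \cite{kostant} is precisely for this explicit evaluation, so I would either carry out the basis computation cleanly for a single transposition and a single long cycle (from which the general case follows by multiplicativity of the trace over a block-decomposition of $\sigma$ into its cycles) or cite \cite{kostant} directly. The spanning statement itself needs no new argument beyond what Schur--Weyl duality already supplies in the excerpt.
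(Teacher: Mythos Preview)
Your proposal is correct, and the first two points (the identification with $\mathrm{End}_{GL(V)}(V^{\otimes k})$ and the spanning by permutations) are handled exactly as the paper handles them, by invoking the Schur--Weyl material already recorded in the excerpt.

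For the explicit formula \eqref{phis}, however, you take a different route from the paper. You propose a direct matrix-entry computation: expand the trace over the tensor basis $e_{a_1}\otimes\cdots\otimes e_{a_k}$, track how the index $a_i$ gets coupled to $a_{\sigma(i)}$ through the action of $\sigma^{-1}$, and collect the resulting matrix entries around each cycle. The paper instead exploits multilinearity to reduce to rank-one endomorphisms $X_i = u_i\otimes\phi_i$, then uses the calculus of decomposable tensors ($u\otimes\phi\circ v\otimes\psi = \langle\phi\,|\,v\rangle\, u\otimes\psi$ and $\mathrm{tr}(u\otimes\phi)=\langle\phi\,|\,u\rangle$) to evaluate $\mathrm{tr}(\sigma^{-1}\circ X_1\otimes\cdots\otimes X_k)=\prod_i\langle\phi_i\,|\,u_{\sigma(i)}\rangle$, from which the cycle factorization is read off. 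Your basis approach is more pedestrian but entirely self-contained and avoids the intermediate operator identities the paper derives (formulas \eqref{formuu0}--\eqref{formuu}); the paper's rank-one approach is coordinate-free and makes the permutation action on the $\phi$-side more transparent, which is convenient for later use. Either way the bookkeeping issue you flag about the orientation of cycles is the only genuine subtlety, and you have identified it correctly.
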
 \begin{proof}  
 Since the identity of Formula \eqref{phis} is multilinear it is enough to prove it on   the
decomposable tensors of $End(V)=V\otimes V^*$ which are the endomorphisms of rank 1,  $u\otimes \phi: v\mapsto \langle\phi\,|\,v\rangle u$.

So given $X_i:=u_i\otimes \phi_i$ and an element $\sigma\in S_k$ in the symmetric group we have
$$\sigma^{-1}\circ u_1\otimes \phi_1\otimes u_2\otimes \phi_2\otimes\ldots\otimes u_k\otimes \phi_k(v_1\otimes v_2\otimes \ldots\otimes v_k)$$
$$\stackrel{\eqref{azio}}= \prod_{i=1}^k\langle\phi_i\,|\,v_i\rangle u_{\sigma  (1)}\otimes u_{\sigma  (2)}\otimes \ldots \otimes u_{\sigma  (k)}
$$
$$u_1\otimes \phi_1\otimes u_2\otimes \phi_2\otimes\ldots\otimes u_k\otimes \phi_m\circ \sigma^{-1} (v_1\otimes v_2\otimes \ldots\otimes v_k)$$
$$= \prod_{i=1}^m\langle\phi_i\,|\,v_{\sigma(i)}\rangle u_{1}\otimes u_{2}\otimes \ldots \otimes u_{k}= \prod_{i=1}^k\langle\phi_{\sigma^{-1}(i)}\,|\,v_{i}\rangle u_{1}\otimes u_{2}\otimes \ldots \otimes u_{k}
$$$$\implies \sigma^{-1}\circ  u_1\otimes \phi_1\otimes u_2\otimes \phi_2\otimes\ldots\otimes u_m\otimes \phi_k= u_{\sigma  (1)}\otimes \phi_1\otimes u_{\sigma  (2)}\otimes \phi_2\otimes \ldots \otimes u_{\sigma  (k)}\otimes \phi_k
  $$\begin{equation}\label{formuu0}\implies  u_1\otimes \phi_1\otimes u_2\otimes \phi_2\otimes\ldots\otimes u_k\otimes \phi_k\circ \sigma=  u_1\otimes \phi_{\sigma(1)}\otimes u_2           \otimes \phi_{\sigma(2)}\otimes\ldots\otimes u_k\otimes \phi_{\sigma(k)}.
\end{equation} So we need to understand in matrix formulas the invariants \begin{equation}\label{formuu}
tr(\sigma^{-1} u_1\otimes \phi_1\otimes u_2\otimes \phi_2\otimes\ldots\otimes u_k\otimes \phi_k)=\prod_{i=1}^k\langle\phi_i\,|\,u_{\sigma(i)}\rangle  .
\end{equation} We need to use the rules
$$u\otimes\phi\circ v\otimes \psi=u\otimes\langle\phi\,|\,v\rangle\psi,\quad tr(u\otimes \phi)=\langle\phi\,|\,u\rangle $$ from which the formula easily follows by induction.\end{proof}
\begin{remark}\label{latt} We can extend the Formula   \eqref{carsv} to  the group algebra 
\begin{equation}\label{carsv1}
t(\sum_{\tau\in S_d }a_{ \tau}  \tau)(X_1,\ldots,X_d )
:=\sum_{\tau\in S_d }a_{\tau}   
T_\tau    (X_1,X_2,\dots,X_d).
\end{equation}\end{remark}
\subsection{The symmetric group}
The algebra of the symmetric group $S_k$ decomposes into the direct sum 
$$ F[S_k]=\oplus_{\lambda \vdash k} End(M_\lambda)$$of the matrix algebras associated to the irreducible representations $M_\lambda$ of partitions $\lambda\vdash k$.  Denote by $\chi_\lambda$   the corresponding character of $S_k$ and  by       $e_{\lambda}\in End(M_\lambda)\subset F[S_k]$ the corresponding central unit.  These elements form a basis of orthogonal idempotents of the center  of $ F[S_k]$. 

For a finite group $G$  let $e_i $ be the central idempotent of an irreducible representation with character $\chi_i $. One has the Formula:
\begin{equation}\label{unigr}
I)\quad e_{i}=\frac{\chi_i(1)}{|G|} \sum_{g\in G}\bar  \chi_i(g)g,\quad II)\quad \chi_i(e_j)=\begin{cases}
\chi_i(1)\quad \text{if}\  i=j\\0 \quad \quad \quad \text{if}\ i\neq j\\
\end{cases} .\end{equation} This is equivalent to the {\em orthogonality of characters}
\begin{equation}\label{ooc}
\frac 1{|G|} \sum_{g\in G})\bar \chi_i(g)\chi_j(g)=\delta^i_j.
\end{equation}
As for the algebra $ \Sigma_{k}(V) $,  it is isomorphic to  $ F[S_k]$  if and only if $d\geq k$. Otherwise it is a homomorphic image of $ F[S_k]$  with kernel the ideal generated by any antisymmetrizer in $d+1$ elements. This ideal is the direct sum of the $End(M_\lambda)$ with $ht(\lambda)>d$, where $ht(\lambda)$, the  {\em height} of $\lambda$, cf. page \pageref{hei} is also the length of its first column.   So that
\begin{equation}\label{akd}
 \Sigma_{k}(V) =\oplus_{\lambda \vdash k,\ ht(\lambda)\leq d} End(M_\lambda)
\end{equation}

\subsection{The function $W\!g(d,\mu)$}
We start with a computation of a character.
\begin{definition}\label{crho}
Given a permutation $\rho\in S_k$ we denote by $c(\rho)$ the number of cycles into which it decomposes, and $\pi(\rho)\vdash k$  the partition of $k$ given by the lengths   of these cycles.   Notice that $c(\rho)=ht(\pi(\rho))$.\label{piro}

Given a partition $\mu\vdash k$ we denote by
\begin{equation}\label{cmu}
(\mu):=\{\rho\mid\pi(\rho)=\mu\},\quad  C_{\mu}:=\sum_{\rho\mid \pi(\rho )=\mu}\rho=\sum_{\rho\in (\mu ) }\rho.
\end{equation}                     
\end{definition}
The sets $(\mu):=\{\rho\mid\pi(\rho)=\mu\}$ are the conjugacy classes of $S_k$  and, thinking of  $ F[S_k]$ as functions from $S_k$ to $F$ we have that $C_{\mu}$ is the characteristic function of the corresponding conjugacy class.
Of course the elements $C_\mu$ form a basis of the center of the group algebra   $ F[S_k]$.
\begin{proposition}\label{carro}1)\quad  For every pair of positive integers $k,d$ the function $P$ on  $S_k$ given by $P:\rho\mapsto d^{c(\rho)}$ is the character   of the permutation action of $S_k$ on $V^{\otimes k},$  $ \dim_{F}(V)=d$.

2)\quad  The symmetric bilinear form  on   $F[S_k]$  given by $\langle \sigma\mid\tau\rangle:=d^{c(\sigma\tau)}$  has as kernel the ideal generated by the antisymmetrizer on $d+1$ elements. In particular if $k\leq d$ it is non degenerate.

\end{proposition}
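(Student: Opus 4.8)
The plan is to derive both statements from one trace computation on $V^{\otimes k}$. First I would fix a basis $e_1,\dots,e_d$ of $V$, giving the basis of decomposable tensors $e_{i_1}\otimes\cdots\otimes e_{i_k}$ of $V^{\otimes k}$ indexed by tuples $(i_1,\dots,i_k)\in\{1,\dots,d\}^k$. By Formula \eqref{azio}, $\rho\in S_k$ sends $e_{i_1}\otimes\cdots\otimes e_{i_k}$ to $e_{i_{\rho^{-1}(1)}}\otimes\cdots\otimes e_{i_{\rho^{-1}(k)}}$, again a basis vector; hence $\rho$ acts on $V^{\otimes k}$ by a permutation matrix in this basis, and $tr_{V^{\otimes k}}(\rho)$ is the number of fixed basis vectors, i.e. of tuples with $i_j=i_{\rho^{-1}(j)}$ for all $j$ --- equivalently of tuples constant along each cycle of $\rho$. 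There are $d^{c(\rho)}$ of them, so $tr_{V^{\otimes k}}(\rho)=d^{c(\rho)}=P(\rho)$, which proves 1).

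\textbf{Part 2.} Next I would recognize the bilinear form as the trace form of the representation. Let $\phi\colon F[S_k]\to\Sigma_k(V)\subseteq End(V^{\otimes k})$ be the algebra homomorphism afforded by the action \eqref{azio}, and set $B(a,b):=tr_{V^{\otimes k}}(\phi(a)\phi(b))$. Since $\phi$ is a homomorphism, $B(\sigma,\tau)=tr_{V^{\otimes k}}(\phi(\sigma\tau))=d^{c(\sigma\tau)}$ by Part 1, so $B$ is exactly the form $\langle\,\cdot\mid\cdot\,\rangle$ of the statement (its symmetry is the cyclicity of the trace, or equivalently $c(\sigma\tau)=c(\tau\sigma)$). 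Because $B(a,b)$ depends only on the operators $\phi(a),\phi(b)$, it vanishes whenever $a$ or $b$ lies in $\ker\phi$; thus $\ker\phi$ is contained in the radical of $B$.

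It then remains to prove the reverse inclusion, i.e. that the trace form $x,y\mapsto tr_{V^{\otimes k}}(xy)$ on the semisimple algebra $\Sigma_k(V)=\bigoplus_{ht(\lambda)\le d}End(M_\lambda)$ of \eqref{akd} is non-degenerate. Using the Schur--Weyl decomposition \eqref{swd}, as a module over $\Sigma_k(V)$ one has $V^{\otimes k}\cong\bigoplus_{ht(\lambda)\le d}M_\lambda^{\oplus s_\lambda(d)}$, so writing $x=\sum_\lambda x_\lambda$, $y=\sum_\lambda y_\lambda$ with $x_\lambda,y_\lambda\in End(M_\lambda)$ gives $tr_{V^{\otimes k}}(xy)=\sum_{ht(\lambda)\le d}s_\lambda(d)\,tr_{M_\lambda}(x_\lambda y_\lambda)$. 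On each block the standard matrix trace form $tr_{M_\lambda}(x_\lambda y_\lambda)$ on $End(M_\lambda)$ is non-degenerate, and the weights $s_\lambda(d)=\prod_{u\in\lambda}(d+c_u)/h_u$ are strictly positive when $ht(\lambda)\le d$ (each content satisfies $c_u=j-i\ge 1-ht(\lambda)\ge 1-d$, so $d+c_u\ge 1$, by the hook--content formula of Theorem \ref{thm2}). Hence $B$ is non-degenerate on $\Sigma_k(V)$, so its radical is exactly $\ker\phi$, which by the discussion around \eqref{akd} is the ideal generated by an antisymmetrizer on $d+1$ elements; and if $k\le d$ then every $\lambda\vdash k$ has $ht(\lambda)\le k\le d$, so that ideal is $0$ and $B$ is non-degenerate, as claimed. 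The only genuine input beyond bookkeeping is the non-vanishing of the multiplicities $s_\lambda(d)$ for $ht(\lambda)\le d$ --- the content of Schur--Weyl duality recorded in \eqref{swd} --- together with the standard non-degeneracy of the trace form on a characteristic-zero semisimple algebra; the main point to handle with care is simply the splitting of $tr_{V^{\otimes k}}$ along the decomposition \eqref{swd}.
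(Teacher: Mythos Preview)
Your argument is correct and follows the same line as the paper: Part 1 is the identical fixed-point count on the tensor basis, and Part 2 is exactly the observation that $\langle\sigma\mid\tau\rangle=tr_{V^{\otimes k}}(\phi(\sigma)\phi(\tau))$ is the trace form of the representation, whose radical is $\ker\phi$ because $\Sigma_k(V)$ is semisimple. The paper simply invokes the standard fact that the trace form of a faithful representation of a semisimple algebra in characteristic zero is non-degenerate, whereas you unpack that fact via the block decomposition \eqref{swd}; one small remark is that the appeal to the hook--content formula is unnecessary, since $s_\lambda(d)=\dim S_\lambda(V)>0$ is automatic for $ht(\lambda)\le d$ from \eqref{swd} alone.
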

\begin{proof}
1)\quad  If $e_1,\ldots,e_d$ is a given basis  of $V$ we have the induced  basis   of  $V^{\otimes k}$, $e_{i_1}\otimes\ldots\otimes e_{i_k}$ which is permuted by the symmetric group. For a permutation representation  the  trace of an element $\sigma$ equals the number of the elements of the basis fixed by $\sigma$.

If $\sigma=(1,2,\ldots,k)$ is one cycle then $e_{i_1}\otimes\ldots\otimes e_{i_k}$  is fixed by $\sigma$  if and only if  ${i_1}={i_2}=\ldots= {i_k} $ are equal, so equal to some $e_j$ so $tr(\sigma)=d$. 

For a product of $a$ cycles  of lengths $b_1,b_2,\ldots b_a$  which up to conjugacy we may consider as
$$(1,2,\ldots,b_1)(b_1+1,b_1+2,\ldots,b_1+b_2)\ldots (k-b_a,\ldots,k)$$  we see  that
 $e_{i_1}\otimes\ldots\otimes e_{i_k}$  is fixed by $\sigma$  if and only if  it is of the form
 $$   e_{i_1}^{\otimes b_1}\otimes e_{i_2}^{\otimes b_2}\otimes\ldots\otimes e_{i_a}^{\otimes b_a} ,$$ giving $d^a$ choices  for the indices ${i_1},{i_2},\ldots,{i_a}$.

2)\quad  In fact this is the trace form of the image $ \Sigma_{k}(V) $ of  $F[S_k]$ in the operators on $V^{\otimes m},\ \dim V=d$. Since $ \Sigma_{k}(V) $ is semisimple its trace form is non degenerate.
\end{proof}

\begin{corollary}\label{dcro}
\begin{equation}\label{dcro1}
I)\quad P=\sum_{\lambda\vdash k,\ ht(\lambda)\leq d}s_{\lambda }(d)\chi_\lambda, \quad  II)\quad d^{c(\rho)}=\sum_{\lambda\vdash k,\ ht(\lambda)\leq d}s_{\lambda }(d)\chi_\lambda(\rho).
\end{equation}
\end{corollary}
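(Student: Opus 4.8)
The plan is to read off both identities directly from the Schur--Weyl decomposition \eqref{swd} combined with the character computation just established. Since $II)$ is merely $I)$ evaluated at an arbitrary permutation $\rho\in S_k$, it is enough to prove the identity of class functions $I)$; moreover both sides will be genuine characters of $S_k$-representations, so it suffices to identify those representations.

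First I would invoke Proposition~\ref{carro}$(1)$: the function $P\colon\rho\mapsto d^{c(\rho)}$ is exactly the character of the permutation action of $S_k$ on $V^{\otimes k}$ with $\dim_F V=d$. Thus the left-hand side of $I)$ is the character of the $S_k$-module $V^{\otimes k}$. Next I would restrict the decomposition \eqref{swd} along the inclusion $S_k\hookrightarrow S_k\times GL(V)$. Because the two actions commute and $S_k$ acts trivially on the Schur functor factor $S_\lambda(V)$, each summand $M_\lambda\otimes S_\lambda(V)$ is, as an $S_k$-module, a direct sum of $\dim_F S_\lambda(V)$ copies of $M_\lambda$; and $\dim_F S_\lambda(V)=s_\lambda(d)$ by the hook--content formula (Theorem~\ref{thm2}). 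Hence $V^{\otimes k}\cong\bigoplus_{\lambda\vdash k,\ ht(\lambda)\le d} M_\lambda^{\oplus s_\lambda(d)}$ as an $S_k$-module, and comparing characters of the two sides gives $P=\sum_{\lambda\vdash k,\ ht(\lambda)\le d} s_\lambda(d)\chi_\lambda$, which is $I)$; evaluating at $\rho$ yields $II)$.

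There is no genuine obstacle here: the entire content sits in \eqref{swd} and Proposition~\ref{carro}, and the only point to verify is the bookkeeping that, from the point of view of $S_k$, the factor $S_\lambda(V)$ contributes only its dimension $s_\lambda(d)$ as a trivial multiplicity space. Equivalently, one may avoid the language of representations altogether and simply write $\chi_{V^{\otimes k}}(\rho)=\sum_\lambda \chi_\lambda(\rho)\,\chi_{S_\lambda(V)}(1)=\sum_\lambda \chi_\lambda(\rho)\,s_\lambda(d)$, using that $\rho$ acts as the identity on each $S_\lambda(V)$.
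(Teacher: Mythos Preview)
Your argument is correct and is exactly the unpacking of the paper's one-line proof ``This is immediate from Formula~\eqref{swd}'': you identify $P$ with the $S_k$-character of $V^{\otimes k}$ via Proposition~\ref{carro}(1) and then read off its decomposition from Schur--Weyl duality. One small remark: you do not need to invoke the hook--content formula to get $\dim_F S_\lambda(V)=s_\lambda(d)$, since in this paper $s_\lambda(d)$ is \emph{defined} as that dimension; Theorem~\ref{thm2} only supplies an explicit product expression for it.
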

\begin{proof}
This is immediate from Formula \eqref{swd}.
\end{proof}
We thus have, with $ht(\mu)$ the number of parts of $\mu$ (cf. page \pageref{piro}), that  \begin{equation}\label{lelep}
P:=\sum_{\rho\in S_k} d^{c(\rho)}\rho=\sum_{\mu\vdash k}d^{ht(\mu) }C_\mu
\end{equation}  is  an element of the center of the algebra   $ \Sigma_{k}(V) $ which we can thus write  \begin{equation}\label{lele}P=\sum_{\lambda\vdash k,\ ht(\lambda)\leq d}s_{\lambda }(d)\chi_\lambda=
\sum_{\rho\in S_k} d^{c(\rho)}\rho =\sum_{\lambda\vdash k,\ ht(\lambda)\leq d}r _{\lambda}(d)e_{\lambda}
\end{equation}  and we have:
\begin{proposition}\label{cdro}
\begin{equation}\label{cdro1}
r _{\lambda}(d)=\prod_{u\in\lambda}(d+c_u).
\end{equation}
\end{proposition}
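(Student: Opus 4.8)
The plan is to read off $r_\lambda(d)$ as the scalar by which the central element $P$ acts on the irreducible $S_k$-module $M_\lambda$, to compute that scalar by evaluating $\chi_\lambda$ on $P$ in two different ways, and then to rewrite the outcome using the two formulas of Theorem \ref{thm2}.

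First I would apply the character $\chi_\lambda$ (for a partition $\lambda$ with $ht(\lambda)\le d$, which is the only case occurring in \eqref{lele}) to the identity $P=\sum_{\mu\vdash k,\ ht(\mu)\le d} r_\mu(d)\,e_\mu$ of \eqref{lele}. By \eqref{unigr}, part II, $\chi_\lambda(e_\mu)=\chi_\lambda(1)\,\delta_{\mu\lambda}$, so this already gives $\chi_\lambda(P)=r_\lambda(d)\,\chi_\lambda(1)$.

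Next I would evaluate $\chi_\lambda(P)=\sum_{\rho\in S_k} d^{c(\rho)}\,\chi_\lambda(\rho)$ a second way, substituting for $d^{c(\rho)}$ its expansion from Corollary \ref{dcro}, part II:
\begin{equation*}
\chi_\lambda(P)=\sum_{\rho\in S_k}\ \sum_{\mu\vdash k,\ ht(\mu)\le d} s_\mu(d)\,\chi_\mu(\rho)\,\chi_\lambda(\rho)
=\sum_{\mu\vdash k,\ ht(\mu)\le d} s_\mu(d)\Bigl(\sum_{\rho\in S_k}\chi_\mu(\rho)\,\chi_\lambda(\rho)\Bigr).
\end{equation*}
The characters of $S_k$ are real, so orthogonality of characters \eqref{ooc} collapses the inner sum to $k!\,\delta_{\mu\lambda}$; since $\lambda$ itself satisfies $ht(\lambda)\le d$, only the term $\mu=\lambda$ survives and $\chi_\lambda(P)=k!\,s_\lambda(d)$. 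Comparing with the first evaluation, $r_\lambda(d)=k!\,s_\lambda(d)/\chi_\lambda(1)$.

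It then remains to insert Theorem \ref{thm2}: the hook formula gives $k!/\chi_\lambda(1)=\prod_{u\in\lambda} h_u$, the hook--content formula gives $s_\lambda(d)=\prod_{u\in\lambda}(d+c_u)/h_u$, and multiplying the two the hook numbers cancel, leaving $r_\lambda(d)=\prod_{u\in\lambda}(d+c_u)$. There is no genuine obstacle: the argument is a direct splicing of results already in the excerpt, and the only thing to watch is the bookkeeping of which partitions occur (those with $ht\le d$), which is harmless because the partition $\lambda$ in question is one of them. If one wished to avoid Theorem \ref{thm2}, one could instead note that $r_\lambda(d)=s_\lambda(d)\cdot\bigl(k!/\chi_\lambda(1)\bigr)$ is, by the description of $s_\lambda(d)$ recalled just after Theorem \ref{thm2} together with $k!/\chi_\lambda(1)\in\ZZ$, a monic polynomial of degree $k$ in $d$ vanishing exactly at the integers $-c_u$, which again forces the stated product.
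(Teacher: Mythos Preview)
Your proof is correct and follows essentially the same route as the paper: apply $\chi_\lambda$ to $P=\sum_\mu r_\mu(d)e_\mu$ using \eqref{unigr} II to isolate $r_\lambda(d)\chi_\lambda(1)$, recompute $\chi_\lambda(P)=\sum_\rho d^{c(\rho)}\chi_\lambda(\rho)$ via Corollary \ref{dcro} and character orthogonality as $k!\,s_\lambda(d)$, and then invoke the hook and hook--content formulas \eqref{sth} to turn $k!\,s_\lambda(d)/\chi_\lambda(1)$ into $\prod_{u\in\lambda}(d+c_u)$. The only cosmetic difference is that the paper phrases your second step as the inner product $(P,\chi_\lambda)=s_\lambda(d)$ of class functions rather than substituting the expansion of $d^{c(\rho)}$ pointwise, but this is the same computation.
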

\begin{proof}

  \label{weit}
 By Formula  \eqref{unigr}    we have:
 \begin{equation}\label{LeDu}
I)\quad e_{\lambda}=\frac{\chi_\lambda(1)}{k!} \sum_{\sigma\in S_k} \chi_\lambda(\sigma)\sigma,\quad II)\quad \chi_\lambda(e_\mu)=\begin{cases}
\chi_\lambda(1)\quad \text{if}\  \lambda=\mu\\0 \quad \quad \quad \text{if}\  \lambda\neq \mu\\
\end{cases} .\end{equation}

  One has thus, from Formulas \eqref{dcro1} I )  and    \eqref{LeDu} II) and denoting by  $(\chi_\lambda, P)$ the usual scalar product of characters:
$$ r _{\lambda}(d)=\frac{\sum_\rho d^{c(\rho)}\chi_\lambda( \rho)}{\chi_\lambda(1)}=\frac{ k!(  P,\chi_\lambda)}{\chi_\lambda(1)}=\frac{k!\,s_{\lambda }(d)}{ \chi_\lambda(1)} \stackrel{\eqref{sth}}=\prod_{u\in \lambda}(d+c_u).$$

\end{proof}

\begin{corollary}\label{invp}
The element $\sum_\rho d^{c(\rho)}\rho$ is invertible in $ \Sigma_{k}(V) $ with inverse
\begin{equation}\label{rmene}
( \sum_{\rho\in S_k}  d^{c(\rho)}\rho )^{-1}=\sum_{\lambda\vdash k,\ ht(\lambda)\leq d}(\prod_{u\in \lambda}(d+c_u))^{-1}e_{\lambda}.
\end{equation}
\end{corollary}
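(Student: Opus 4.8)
The plan is to read the inverse straight off the spectral decomposition of $P$ obtained in Proposition \ref{cdro}. Recall from Formula \eqref{lele} that, inside $\Sigma_{k}(V)$,
\[
P=\sum_{\rho\in S_k} d^{c(\rho)}\rho=\sum_{\lambda\vdash k,\ ht(\lambda)\leq d} r_{\lambda}(d)\,e_{\lambda},\qquad r_{\lambda}(d)=\prod_{u\in\lambda}(d+c_u),
\]
and that by \eqref{akd} the idempotents $e_{\lambda}$ with $ht(\lambda)\leq d$ form a complete orthogonal system in the center of $\Sigma_{k}(V)$: one has $e_{\lambda}e_{\mu}=\delta_{\lambda\mu}e_{\lambda}$ and $\sum_{\lambda}e_{\lambda}$ is the identity of $\Sigma_{k}(V)$. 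So first I would reduce the problem to showing that each coefficient $r_{\lambda}(d)$ appearing in the sum is a nonzero, hence invertible, scalar; granting that, the element $Q:=\sum_{\lambda}r_{\lambda}(d)^{-1}e_{\lambda}$ is a manifest two-sided inverse, since multiplying the two sums and collapsing with orthogonality gives $PQ=QP=\sum_{\lambda,\mu}r_{\lambda}(d)r_{\mu}(d)^{-1}e_{\lambda}e_{\mu}=\sum_{\lambda}e_{\lambda}$.

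The one point that requires an argument is the nonvanishing of $r_{\lambda}(d)=\prod_{u\in\lambda}(d+c_u)$ under the standing hypothesis $ht(\lambda)\leq d$. For a box $u=(i,j)\in\lambda$ one has $1\le i\le ht(\lambda)\le d$ and $j\ge 1$, whence $d+c_u=d+j-i\ge d+1-d=1>0$. Thus every factor is a positive integer, $r_{\lambda}(d)\ge 1$, and the reciprocals in \eqref{rmene} make sense. This is precisely where the height restriction built into \eqref{swd} and \eqref{akd} is used: a partition of height $d+1$ would contribute the bottom box $(d+1,1)$ of its first column, with content $c_u=-d$, and the factor $d+c_u$ would vanish — but no such $\lambda$ occurs in $\Sigma_{k}(V)$.

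I do not expect a genuine obstacle here; the only thing to keep in mind is that the statement lives in $\Sigma_{k}(V)$ and not in $F[S_k]$, where for $k>d$ the element $P$ is honestly non-invertible, its kernel being the ideal cut out by the antisymmetrizer on $d+1$ letters (cf. Proposition \ref{carro}, part 2). Assembling the above then yields Formula \eqref{rmene} verbatim.
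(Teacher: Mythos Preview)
Your argument is correct and is precisely the approach the paper intends: the corollary is stated without a separate proof because it follows immediately from the decomposition $P=\sum_{\lambda}r_{\lambda}(d)e_{\lambda}$ of Proposition~\ref{cdro} together with the orthogonality of the central idempotents $e_{\lambda}$ in $\Sigma_{k}(V)$. Your check that $d+c_u\ge 1$ whenever $ht(\lambda)\le d$ is the only point needing verification, and you handle it cleanly.
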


 As we shall see in \S \ref{coo}, it is interesting to study $(\sum_{\rho  \in S_k } d^{c(\rho)}\rho )^{-1}$ where $k$ is fixed and $d$ is a parameter. We can thus use formula  \eqref{rmene}  for $d\geq k$ and following Collins \cite{Coll} we write
\begin{equation}\label{wg}
(\sum_{\rho  \in S_k } d^{c(\rho)}\rho )^{-1}=\sum_{\rho  \in S_k } W\!g(d,\rho)\rho
:=W\!g(d,k) \end{equation} 
Since $W\!g(d,\rho)$  is a class function it depends only on the cycle partition $\mu=c(\rho)$ of $\rho$, so we may denote it by   $W\!g(d,\mu)$. We call the function  $W\!g(d,\rho)$ the {\em Formanek--Weingarten function}, since it was already introduced by Formanek in \cite{formanek2}.

From definition \eqref{cmu}   $C_\mu=\sum_{c(\rho)=\mu}\rho$ we can rewrite, $d\geq k$ 
\begin{equation}\label{wg0}C_\mu=\sum_{\rho\in S_k\mid c(\rho)=\mu}\!\!\!\!\!\!\!\rho,\qquad 
 W\!g(d,k)=(\sum_{\rho  \in S_k } d^{c(\rho)}\rho )^{-1}=\sum_{\mu \vdash k } W\!g(d,\mu)C_\mu
 .\end{equation}
Substituting $e_\lambda$ in  formula  \eqref{rmene}   with its expression of Formula  \eqref{LeDu}
$$ e_{\lambda}=\frac{\chi_\lambda(1)}{k!} \sum_{\sigma\in S_k} \chi_\lambda(\sigma)\sigma=\prod_{u\in\lambda}h_u^{-1} \sum_{\sigma\in S_k} \chi_\lambda(\sigma)\sigma$$  \begin{equation}\label{rmene1}
 W\!g(d,k):=\sum_{\rho  \in S_k } W\!g(d,\rho)\rho=\sum_{\lambda\vdash k}\prod_{u\in\lambda}\frac{ 1}{ h_u(d+c_u)} \sum_\tau \chi_\lambda(\tau)\tau\end{equation}
\begin{theorem}\label{wcom}
 \begin{equation}\label{rmene2} W\!g(d, \sigma)=  \sum_{\lambda\vdash k} \prod_{u\in\lambda}\frac{ 1}{ h_u(d+c_u)}\chi_\lambda(\sigma) = 
 \sum_{\lambda\vdash k} \frac{ \chi_\lambda(1)^2\chi_\lambda(\sigma)}{k!^2s_{\lambda }(d)}.\end{equation}
In particular $W\!g(d, \sigma)$ is a rational function of $d$ with poles at the integers $-k+1\leq i\leq k-1$  of order $p$ at $i$, $p(p+|i|)\leq k$. \end{theorem}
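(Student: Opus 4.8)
The plan is to read the first identity off directly from \eqref{rmene1}, then convert it to the second form using the hook and hook--content formulas, and finally extract the pole structure from the hook--content formula together with an elementary "staircase" estimate on contents.

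Both sides of \eqref{rmene1} are honest elements of $F[S_k]$ as soon as $d\ge k$, so comparing the coefficient of a fixed $\sigma\in S_k$ gives
$$W\!g(d,\sigma)=\sum_{\lambda\vdash k}\Bigl(\prod_{u\in\lambda}\frac{1}{h_u(d+c_u)}\Bigr)\chi_\lambda(\sigma),\qquad d\ge k.$$
The right-hand side is visibly a rational function of $d$, and by \eqref{wg}--\eqref{wg0} the left-hand side agrees with it for every integer $d\ge k$; hence the equality holds as an identity of rational functions in $d$, which removes the restriction $d\ge k$. For the second expression I substitute the hook formula $\chi_\lambda(1)=k!/\prod_{u\in\lambda}h_u$ and the hook--content formula $s_\lambda(d)=\prod_{u\in\lambda}(d+c_u)/h_u$ of Theorem~\ref{thm2}, which give $\prod_u h_u^{-1}=\chi_\lambda(1)/k!$ and $\prod_u(d+c_u)^{-1}=\chi_\lambda(1)/\bigl(k!\,s_\lambda(d)\bigr)$; multiplying these yields $\prod_{u\in\lambda}\frac{1}{h_u(d+c_u)}=\chi_\lambda(1)^2/\bigl(k!^2 s_\lambda(d)\bigr)$, which is the claimed formula.

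For the pole statement I use that, by the hook--content formula, $s_\lambda(d)=\prod_{u\in\lambda}(d+c_u)/\prod_{u\in\lambda}h_u$ is a polynomial of degree $k$ whose only roots are the integers $-c_u$, the root $d=i$ having multiplicity equal to $n_\lambda(i):=\#\{u\in\lambda: c_u=-i\}$, i.e.\ the length of the diagonal of content $-i$ in $\lambda$. Thus the $\lambda$-term of the second formula has, at $d=i$, a pole of order exactly $n_\lambda(i)$ (and is regular for non-integer $d$). It remains to bound $n_\lambda(i)$: the boxes of a fixed content form an initial segment $(1,1+c),(2,2+c),\dots,(p,p+c)$ of the corresponding diagonal, and if $(p,p+c)\in\lambda$ then $\lambda_1\ge\cdots\ge\lambda_p\ge p+c$, so $\lambda$ contains the full $p\times(p+c)$ rectangle and $p(p+c)\le|\lambda|=k$. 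Applying this with $c=-i\ge0$ when $i\le0$, and to the conjugate partition $\tilde\lambda$ (which has the same cardinality and the same diagonal lengths with content negated) when $i>0$, gives $n_\lambda(i)\bigl(n_\lambda(i)+|i|\bigr)\le k$ for every $\lambda$. Since a finite sum of rational functions has a pole of order at most the maximum of the orders of its terms, and $x\mapsto x(x+|i|)$ is increasing on $x\ge0$, any actual pole of $W\!g(d,\sigma)$ at $d=i$ has order $p$ with $p(p+|i|)\le k$; in particular the contents $c_u$ lie in $[-(k-1),k-1]$, so all poles occur at integers $-k+1\le i\le k-1$.

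The only point requiring a little care is that cancellations among the $\lambda$-terms could lower the order of a pole (or kill it) — but the statement only asserts the inequality $p(p+|i|)\le k$ for the order $p$ that actually occurs, and this is exactly what the "order of a sum is at most the worst order" argument delivers, so no genuine obstacle arises. The substantive content is the identification of the multiplicity of the root $d=i$ of $s_\lambda$ with a diagonal length and the resulting rectangle estimate.
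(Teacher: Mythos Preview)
Your proof is correct and follows essentially the same route as the paper. The paper treats the two displayed identities as already established by \eqref{rmene1} together with the hook and hook--content formulas, and devotes its proof entirely to the pole estimate, which it handles by the same rectangle argument you give: the $p$th box on the diagonal of content $-i$ (for $i\ge 0$, by symmetry) sits at the lower-right corner of a $p\times(p+|i|)$ rectangle contained in $\lambda$, forcing $p(p+|i|)\le k$. Your version is more explicit about the passage from integer $d\ge k$ to a rational-function identity and about the harmless possibility of cancellation among the $\lambda$-terms, but the substance is the same.
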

 \begin{proof}
We only need to prove the last estimate.  By symmetry we may assume that $i\geq 0$ then  the $p^{th}$ entry of $i$ is placed at the lower right corner of a rectangle of height $p$  and length $i+p$ (cf.  Figure at page \pageref{conte}). Hence if $\lambda\vdash k$, we have $i(p+i)\leq k$ and the claim.
\end{proof}
\subsubsection{A more explicit formula}
Formula \eqref{rmene2}, although explicit,   is a sum with alternating signs so that it is not easy to  estimate a given value or even to show that it is nonzero.

For $\sigma_0=(1,2,\ldots,k)$ a full cycle a better Formula is available. 
 First Formula \eqref{nfaaz} by Formanek  when $k=d$, and then   Collins Formula \eqref{nfaaz1} in general.

When $k=d$  we write  $W\!g(d, \sigma)=a_\sigma$ and then:

   \begin{equation}\label{nfaaz}
d!^2a_{\sigma_0}= (-1)^{d+1}\frac{d}{2d-1}\neq 0.
\end{equation}Collins  extends   Formula \eqref{nfaaz} to the case  $W\! g(d,\sigma_0)$ getting:
 \begin{equation}\label{nfaaz1}
W\! g(d,\sigma_0)=(-1)^{k-1}\mathtt C_{k-1}\prod_{-k+1\leq j\leq k-1}(d-j)^{-1}
\end{equation} with $\mathtt C_i:=\frac{(2i)!}{(i+1)!i!}=\frac{1}{i+1}\binom{2i}i$ the $i^{th}$ Catalan number. Which, since $$\mathtt C_{d-1}=\frac{(2d-2)!}{d!(d-1)!},\quad \prod_{-d+1\leq j\leq d-1}(d-j)=(2d-1)!$$  agrees, when $k=d$, with Formanek.

In order to prove  Formula \eqref{nfaaz1}  we need the fact  that
   $\chi_\lambda(\sigma_0)=0$ except when   $\lambda=(a,1^{k-a})$ is a {\em hook partition}, with the first row of some length  $a,\ 1\leq a\leq k$ and then the remaining $k-a$ rows of length 1.
   
   This is an easy consequence of the Murnaghan--Nakayama formula, see \cite{P7}.
   
    In this case we have    $\chi_\lambda(\sigma_0)=(-1)^ {k-a}$.
 We thus need to make explicit   the integers $s_{\lambda }(d),\chi_\lambda(1)$ for such a hook partition.

For $\lambda = (a,1^{k-a})$,   we get   that  the boxes are $$u=(1,j),\ j=1,\ldots, a,\ c_u=j-1,\ h_u=\begin{cases}
k\quad \text{if}\quad j=1\\  a-j+1\quad \text{if}\quad j\neq 1
\end{cases}$$ 
$$u=(i+1,1),\ i=1,\ldots, k-a,\quad   c_u= -i,\ h_u= k-a -i+1.\qquad$$
$$ \prod_uh_u=k\prod_{j=2}^a( a-j+1)\prod_{i=1}^{k-a}( k-a-i+1)=k(a-1)!({k-a} )!.$$
\begin{example}\label{box}

$a=8,\ k=11,\ (8,1^3)\vdash 11$ in coordinates\bigskip

\vbox{\quad  \begin{Young} 
1,1& 1\!,\!2&1\!,\!3&1\!,\!4&1\!,\!5&1\!,\!6&1\!,\!7&1\!,\!8\cr
2,1 \cr
3,1\cr
4,1\cr
\end{Young}  }
\bigskip

Hooks and content:
\bigskip

\vbox{\quad  \begin{Young}
11& 7&6&5&4&3&2&1\cr
3 \cr
2 \cr
1\cr
\end{Young}\vskip-1.78cm  \quad\quad\quad\quad \quad \quad\quad\quad\quad\quad\quad\quad ,\quad \begin{Young} 
0&1&2&3&4&5&6&7\cr
-1 \cr
-2 \cr
-3\cr
\end{Young}}\end{example} 
\bigskip

Thus we finally have, substituting in Formula \eqref{rmene2},  that 

  \begin{equation}\label{wg1}
W\!g(\sigma_0,d)=\sum_{a=1}^k (-1)^{k-a}\frac{1}{k(a-1)!({k-a} )! }\prod_{i=1-a}^{k-a }(d-i)^{-1}
\end{equation} 
 \begin{equation}\label{wg2}
= \sum_{a=1}^k (-1)^{k-a}\frac{\prod_{i= k-a+1}^{ k-1}(d-i) \prod_{i= -k+1}^{-a}(d-i) }{k(a-1)!({k-a} )!}\prod_{-k+1\leq j\leq k-1}(d-j)^{-1}.
\end{equation}

 One needs to show that 
$$  \sum_{a=1}^k (-1)^{ a}\frac{\prod_{i= k-a+1}^{ k-1}(d-i) \prod_{i= -k+1}^{-a}(d-i) }{k(a-1)!({k-a} )!}= \frac{\sum_{a=1}^k (-1)^{ a} \prod_{i= k-a+1}^{ k-1}i(d-i) \prod_{i=a}^{k-1}i(d+i) }{k!(k-1)!}$$
  \begin{equation}\label{ilcatt}
=P_k(d):= \frac1{k!}\sum_{b =0}^{k-1 } (-1)^{ b+1}  \binom { k-1 }{b } \prod_{i=  k -b }^{  k-1 } (d-i) \prod_{i=b+1}^{ k-1  } (d+i) =  (-1)^{k-1} \mathtt C_{k-1}.
\end{equation}
 By partial fraction decomposition we have that
 $$  \prod_{i=1-a}^{k-a }(d-i)^{-1} = \sum_{i=1-a}^{k-a }\frac{b_j}{d-j},$$$$b_0=\prod_{i=1-a,\ i\neq 0}^{k-a }( -i)^{-1}=[(-1)^{k-a}(a-1)!(k-a)!]^{-1}.  $$
 Therefore the partial fraction decomposition of  $ W\!g(\sigma_0,d)$, from Formula \eqref{wg1}, is
 $$ \sum_{a=1}^k\frac{1}{k[ (a-1)!(k-a)!]^2}\frac 1d+\sum_{-k+1\leq j\leq k-1,\ j\neq 0}\frac{c_j}{d-j}.$$
 On the other hand  the   partial fraction decomposition of  the product  of Formula \eqref{wg2},
 $$ \prod_{-k+1\leq j\leq k-1}(d-j)^{-1}=\frac {(-1)^{k-1}}{ (k-1)!^2 }\frac 1d+\sum_{-k+1\leq j\leq k-1,\ j\neq 0}\frac{e_j}{d-j}.$$
It follows that the polynomial $P_k(d)$ of Formula \eqref{ilcatt} is a constant $C$  with $$C\frac {(-1)^{k-1}}{ (k-1)!^2 }=\sum_{a=1}^k\frac{1}{k[ (a-1)!(k-a)!]^2}\implies C= (-1)^{k-1} \sum_{a=1}^k\frac{(k-1)!^2}{k[ (a-1)!(k-a)!]^2}.$$
 So finally we need to observe that 
 $$ \sum_{a=1}^k\frac{(k-1)!^2}{k[ (a-1)!(k-a)!]^2}=\frac{1} k \sum_{a=0}^{k-1}\binom{k-1}{ a }^2=\frac{1} k  \binom{2k-2}{ k-1 } =\mathtt C_{k-1}.$$
 In fact 
 $$\sum_{a=0}^{n}\binom{n}{ a }^2 =\binom {2n}n$$ as one can see simply noticing that a subset of $n$  elements  in $1,2,\ldots,2n$ distributes into $a$ numbers $\leq n$  and the remaining $n-a$  which are $>n$.
 
\qed  
 
\subsubsection{A Theorem of Collins, \cite{Coll} Theorem 2.2}For a partition $\mu\vdash k$ we have defined, in Formula \eqref{cmu}  
$C_{\mu}:=\sum_{\sigma\mid \pi(\sigma)=\mu}\sigma.$
Clearly we have for a sequence of partitions $ {\mu_1}, {\mu_2},\ldots, {\mu_i}$ 
\begin{equation}\label{prodcm}
C_{\mu_1} C_{\mu_2} \ldots C_{\mu_i}=\sum_{\mu\vdash k}A[\mu;\mu_1,\mu_2, \ldots,\mu_i]C_\mu
\end{equation} where $A[\mu;\mu_1,\mu_2, \ldots,\mu_i]\in\mathbb N$ counts the number of  times that a product of $i$ permutations  $  \sigma_1,\sigma_2, \ldots,\sigma_i$ of types  $  \mu_1,\mu_2, \ldots,\mu_i$ give a permutation $\sigma$ of type   $ \mu$.  These numbers are classically called {\em connection coefficients}.\begin{remark}\label{indep}
Notice that this number depends only on $\mu$ and not on $\sigma$.
\end{remark}  Set, for $i,h\in\mathbb N$:
 \begin{equation}\label{amui}
A[\mu,i,h]:=\sum_{ \substack{\mu_1,\mu_2, \ldots,\mu_i \mid \mu_j\neq 1^k\\\sum_{j=1}^i (k- ht(\mu_j))  = h  } }A[\mu;\mu_1,\mu_2, \ldots,\mu_i] \end{equation}
$$ A[\mu,h ]:=\sum_{i=1}^h(-1)^i A[\mu,i,h ].$$
  \begin{remark}\label{mintr}
For a permutation $\sigma\in S_k$ with $\pi(\sigma)=\mu$ we will write  \begin{equation}\label{mod}
|\sigma| =|\mu|:=k-ht(\mu).
\end{equation} This is the minimum number of transpositions with product $\sigma$ (see for this Proposition \ref{crf}).  

A minimal  product   of transpositions will also be called {\em reduced}.

We have $|\sigma\tau|\leq |\sigma|+|\tau|$, see Stanley \cite{Stan} p.446 for a poset interpretation.\end{remark}    
From Formula \eqref{rmene2} we know that each $W\!g(\sigma,d)$ is a rational function of $d$ with poles in      $0,\pm 1,\pm 2,\ldots ,\pm( k-1)$  of order $<k$, so we can expand it       in a power series in $d^{-1}$ converging for $d>k-1$  as in Formula \eqref{exp}:                        \begin{theorem}[\cite{Coll} Theorem 2.2]\label{invp1}
We have an expansion for $( \sum_{\rho\in S_k}  d^{c(\rho)}\rho )^{-1}$ as power series in $d^{-1}$:
\begin{equation}\label{exp}
 =d^{-k}(1+\sum_{\mu\vdash k}\left(\sum_{h=|\mu|}^\infty d^{-h} A[\mu, h] \right) C_\mu )
\end{equation}
 \end{theorem}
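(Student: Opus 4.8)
The plan is to invert $P=\sum_{\rho\in S_k}d^{c(\rho)}\rho$ inside the ring $\QQ[S_k][[d^{-1}]]$ of formal power series in $d^{-1}$ with coefficients in the group algebra, by a geometric series, and then to recognize the formal answer as the genuine Laurent expansion of $P^{-1}$ at $d=\infty$, whose convergence for $d>k-1$ is already guaranteed by Theorem \ref{wcom}.

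First I would isolate the dominant term. The only permutation of cycle type $1^k$ is the identity, so $C_{1^k}=1$ and $ht(1^k)=k$, whereas every other $\mu\vdash k$ satisfies $ht(\mu)<k$, that is $|\mu|=k-ht(\mu)\ge 1$. Hence, by \eqref{lelep},
\[
P=\sum_{\mu\vdash k}d^{ht(\mu)}C_\mu=d^{k}\bigl(1+Q\bigr),\qquad Q:=\sum_{\mu\vdash k,\ \mu\neq 1^k}d^{-|\mu|}C_\mu .
\]
Since the order in $d^{-1}$ of $Q^{i}$ is at least $i$, the element $Q$ lies in $d^{-1}\QQ[S_k][[d^{-1}]]$ and $(1+Q)^{-1}=\sum_{i\ge 0}(-Q)^{i}$ is a well-defined element of $\QQ[S_k][[d^{-1}]]$; therefore $P^{-1}=d^{-k}\bigl(1+\sum_{i\ge 1}(-1)^{i}Q^{i}\bigr)$.

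Next I would expand the powers of $Q$ by means of the connection coefficients of \eqref{prodcm}:
\[
Q^{i}=\sum_{\substack{\mu_1,\dots,\mu_i\vdash k\\ \mu_j\neq 1^k}}d^{-\sum_{j}|\mu_j|}\,C_{\mu_1}C_{\mu_2}\cdots C_{\mu_i}=\sum_{\mu\vdash k}\Bigl(\sum_{h\ge 0}d^{-h}A[\mu,i,h]\Bigr)C_\mu ,
\]
where the $C_\mu$-coefficient is obtained by grouping the tuples $(\mu_1,\dots,\mu_i)$ according to the value $h=\sum_{j}|\mu_j|$, the inner sum being exactly $A[\mu,i,h]$ of \eqref{amui}. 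For a fixed power $d^{-N}$ only finitely many pairs $(i,\mu)$ contribute, so the sums over $i$ and over $\mu$ may be interchanged. Two range estimates then finish the identification. Since each factor carries $|\mu_j|\ge 1$, a nonzero term of $A[\mu,i,h]$ forces $i\le\sum_j|\mu_j|=h$, so $\sum_{i\ge 1}(-1)^{i}A[\mu,i,h]=\sum_{i=1}^{h}(-1)^{i}A[\mu,i,h]=A[\mu,h]$; in particular every $d^{-h}$-coefficient is a finite sum, namely $A[\mu,h]$ as in \eqref{amui}. For the other range, a nonzero $C_\mu$-coefficient in $C_{\mu_1}\cdots C_{\mu_i}$ means that some product $\sigma_1\cdots\sigma_i$ with $\pi(\sigma_j)=\mu_j$ has type $\mu$, whence by the subadditivity $|\sigma_1\cdots\sigma_i|\le\sum_j|\sigma_j|$ of the reflection length quoted in Remark \ref{mintr} one gets $|\mu|\le\sum_j|\mu_j|=h$. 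Thus $A[\mu,h]=0$ for $h<|\mu|$, and the sum over $h$ may be started at $h=|\mu|$. Substituting into $P^{-1}=d^{-k}\bigl(1+\sum_{i\ge 1}(-1)^{i}Q^{i}\bigr)$ gives precisely \eqref{exp}. Finally this holds as an identity of rational functions of $d$: by Theorem \ref{wcom} each entry $W\!g(d,\sigma)$ is rational with poles only among the integers $-k+1\le j\le k-1$, so $P^{-1}$ genuinely has a Laurent expansion at $d=\infty$ converging for $d>k-1$, and that expansion must agree term by term with the formal one.

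The bookkeeping is routine; I expect the one genuine point to be the vanishing $A[\mu,h]=0$ for $h<|\mu|$, which is exactly where the subadditivity of the reflection length (Remark \ref{mintr}) is used in an essential way — the rest is formal manipulation of power series in $d^{-1}$ together with the already established rationality of $W\!g(d,\sigma)$.
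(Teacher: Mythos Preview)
Your proposal is correct and follows essentially the same approach as the paper: factor $P=d^{k}(1+Q)$, invert by the geometric series, expand $Q^{i}$ through the connection coefficients, and use the subadditivity $|\mu|\le\sum_j|\mu_j|$ from Remark~\ref{mintr} to start the $h$-sum at $|\mu|$. You are a bit more explicit than the paper about the finiteness of the $i$-sum (from $|\mu_j|\ge 1$) and about why the formal series coincides with the genuine Laurent expansion via Theorem~\ref{wcom}, but these are cosmetic additions rather than a different route.
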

\begin{proof}
Recall that we denote by $|\mu|:=k-ht(\mu)$, \eqref{mod}.
$$
P=\sum_{\rho\in S_k}  d^{c(\rho)}\rho=d^k(1+\sum_{\mu\vdash k\mid \mu\neq 1^k}d^{-(k-ht(\mu))} C_\mu) =d^k(1+\sum_{\mu\vdash k\mid \mu\neq 1^k}d^{-|\mu|} C_\mu) $$ $$\text{so}\quad  P^{-1}=d^{-k}(1+\sum_{i=1}^\infty(-1)^i(\sum_{\mu\vdash k\mid \mu\neq 1^k}d^{-|\mu|} C_\mu)^i)$$
$$= d^{-k}(1+\sum_{i=1}^\infty(-1)^i(\sum_{\mu_1,\mu_2, \ldots,\mu_i \mid \mu_j\neq 1^k}d^{-\sum_{j=1}^i|\mu_j|} C_{\mu_1} C_{\mu_2} \ldots C_{\mu_i})$$
$$=d^{-k}(1+\sum_{\mu\vdash k}(\sum_{i=1}^\infty(-1)^ i\sum_{\mu_1,\mu_2, \ldots,\mu_i \mid \mu_j\neq 1^k}d^{-\sum_{j=1}^i|\mu_j|} A[\mu;\mu_1,\mu_2, \ldots,\mu_i] )C_\mu )$$
$$ =d^{-k}(1+\sum_{\mu\vdash k}\left(\sum_{h=|\mu|}^\infty d^{-h} A[\mu, h] \right)  C_\mu ) $$
since $\mu_1+\mu_2+ \ldots+\mu_i =\mu $ implies  $|\mu|\leq  \sum_{j=1}^i|\mu_j| $.\end{proof}

\begin{remark}\label{mintr1}

We want to see  now  that  the series  $\sum_{h=|\mu|}^\infty d^{-h} A[\mu, h]$  starts with  $h= |\mu|$, i.e. $ A[\mu, |\mu|]\neq 0$. Thus  we compute the leading coefficient $ A[\mu, |\mu|]$ which gives the asymptotic behaviour of  $W\!g(\sigma,d)$.
\end{remark}
Let us denote by
\begin{equation}\label{mmu}
C[\mu]:= A[\mu, |\mu|]\implies\lim_{d\to\infty}d^{k+|\sigma|}W\!g(\sigma,d)= C[\mu].
\end{equation}
From Formula \eqref{nfaaz} we have $C[(k)]=(-1)^{k-1}\mathtt C_{k-1}$ (Catalan number) and a further and more difficult Theorem of Collins states
\begin{theorem}\label{tcoll2}[\cite{Coll} Theorem 2.12 (ii)]\footnote{I have made a considerable effort trying to understand, and hence verify, the proof of this Theorem in  \cite{Coll}, to no avail.  To me it looks not correct.  Fortunately there is a  proof  in  \cite{Mn},   I will show presently   a simple natural proof.}
\begin{equation}\label{tcoll3}
C[(k)]=(-1)^{k-1}\mathtt C_{k-1},\qquad C[(a_1,a_2,\ldots,a_i)]=\prod_{j=1}^i C[(a_j)].
\end{equation}
\end{theorem}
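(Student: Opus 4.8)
The plan is to recognize $C[\mu]$ as a Möbius number of a poset attached to a permutation $\sigma$ of type $\mu$, and then to exploit that this poset factors as a direct product indexed by the cycles of $\sigma$; the single‑cycle case is already settled by the explicit formula \eqref{nfaaz1} (via \eqref{mmu}), so the multiplicativity is the whole point. Fix $\sigma\in S_k$ with $\pi(\sigma)=\mu=(a_1,\dots,a_i)$, with cycles $c_1,\dots,c_i$ supported on the blocks $B_1,\dots,B_i$ of $\{1,\dots,k\}$, $|B_\ell|=a_\ell$. Unwinding \eqref{amui}--\eqref{mmu} and using Remark~\ref{indep} to evaluate the connection coefficients $A[\mu;\mu_1,\dots,\mu_m]$ on this fixed $\sigma$, one gets
\[
C[\mu]=\sum_{m\ge 1}(-1)^m G(\sigma,m),\qquad G(\sigma,m):=\#\{(\sigma_1,\dots,\sigma_m)\mid \sigma_j\neq\mathrm{id},\ \sigma_1\cdots\sigma_m=\sigma,\ \textstyle\sum_j|\sigma_j|=|\sigma|\},
\]
because the constraint $\sum_j|\mu_j|=|\mu|$ together with $\sigma_1\cdots\sigma_m=\sigma$ forces $|\sigma_1\cdots\sigma_m|=\sum_j|\sigma_j|$, i.e.\ the factorization is reduced (Remark~\ref{mintr}), and conversely every reduced factorization into non‑identity factors occurs (the sum may then be taken over all $m\ge 1$ since $\sigma_j\neq\mathrm{id}$ forces $m\le|\mu|$). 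If $\mu=1^k$ the statement is trivial, with the convention $C[(1)]=1$ coming from \eqref{nfaaz1} at $k=1$; so assume $\sigma\neq\mathrm{id}$. For $i=1$, i.e.\ $\mu=(k)$, Formula~\eqref{nfaaz1} with \eqref{mmu} already gives $C[(k)]=(-1)^{k-1}\mathtt C_{k-1}$.

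The one genuinely combinatorial ingredient I would isolate is: \emph{if $\alpha\beta=\gamma$ in $S_k$ with $|\alpha|+|\beta|=|\gamma|$, then the partition of $\{1,\dots,k\}$ into the cycles of $\alpha$ refines that into the cycles of $\gamma$} (and likewise for $\beta$). To see this, write $\beta=t_1\cdots t_{|\beta|}$ as a reduced product of transpositions and follow the partial products $\alpha,\ \alpha t_1,\ \alpha t_1t_2,\ \dots,\ \gamma$: right multiplication by a transposition changes the number of cycles (fixed points included) by exactly $\pm1$, by $-1$ precisely when the two transposed points lie in different cycles, in which case those two cycles merge. Since the total change is $c(\gamma)-c(\alpha)=(k-|\gamma|)-(k-|\alpha|)=-|\beta|$ over $|\beta|$ steps, every step is such a merge, so the cycle partition can only coarsen from $\alpha$ to $\gamma$. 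Applied with $\gamma=\sigma$: in any reduced factorization $\sigma=\sigma_1\cdots\sigma_m$, every partial product, hence every $\sigma_r$, has cycles refining $\{B_1,\dots,B_i\}$, so $\sigma_r=\prod_\ell\sigma_r^{(\ell)}$ with $\sigma_r^{(\ell)}\in\mathrm{Sym}(B_\ell)$, and since lengths add over disjoint supports the condition $\sum_r|\sigma_r|=|\sigma|$ decouples into $\sum_r|\sigma_r^{(\ell)}|=a_\ell-1=|c_\ell|$ for each $\ell$. In poset language: the partial products $\mathrm{id}=\tau_0\preceq\tau_1\preceq\cdots\preceq\tau_m=\sigma$, $\tau_r=\sigma_1\cdots\sigma_r$, form a multichain in the interval $P:=[\mathrm{id},\sigma]$ of the absolute order $\alpha\preceq\gamma\iff|\alpha|+|\alpha^{-1}\gamma|=|\gamma|$ on $S_k$; the restriction map $\alpha\mapsto(\alpha|_{B_1},\dots,\alpha|_{B_i})$ is, by the lemma, a poset isomorphism $P\;\xrightarrow{\ \sim\ }\;\prod_{\ell=1}^i[\mathrm{id},c_\ell]$ with the corresponding intervals inside the groups $\mathrm{Sym}(B_\ell)$; and a factor $\sigma_r$ is nontrivial exactly when $\tau_{r-1}\prec\tau_r$.

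Hence $G(\sigma,m)$ is the number of strict chains $\mathrm{id}=\tau_0\prec\cdots\prec\tau_m=\sigma$ of length $m$ in $P$, and Philip Hall's theorem (see e.g.\ \cite{Stan}) gives $\sum_{m\ge1}(-1)^mG(\sigma,m)=\mathrm{mob}_P(\mathrm{id},\sigma)$, the Möbius number of the interval (the term $m=0$ is absent since $\sigma\neq\mathrm{id}$). Since the Möbius function of a direct product of posets is the product of the Möbius functions, and since running the same argument inside each $\mathrm{Sym}(B_\ell)$ identifies $\mathrm{mob}_{[\mathrm{id},c_\ell]}(\mathrm{id},c_\ell)$ with $C[(a_\ell)]$, we get
\[
C[\mu]=\mathrm{mob}_P(\mathrm{id},\sigma)=\prod_{\ell=1}^i\mathrm{mob}_{[\mathrm{id},c_\ell]}(\mathrm{id},c_\ell)=\prod_{\ell=1}^i C[(a_\ell)],
\]
which together with $C[(a)]=(-1)^{a-1}\mathtt C_{a-1}$ from \eqref{nfaaz1} is exactly \eqref{tcoll3}.

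The step I expect to be the crux is the key lemma and the ensuing product decomposition $[\mathrm{id},\sigma]\cong\prod_\ell[\mathrm{id},c_\ell]$; everything afterwards is the formal machinery of Philip Hall's formula and the multiplicativity of Möbius functions over products. I note in passing that one could instead identify each $[\mathrm{id},c_\ell]$ with the lattice $NC(a_\ell)$ of non‑crossing partitions and quote Kreweras's evaluation of its Möbius function as $(-1)^{a_\ell-1}\mathtt C_{a_\ell-1}$ (cf.\ \cite{Mn}), but this is not needed here, since Formula~\eqref{nfaaz1} already supplies $C[(a_\ell)]$.
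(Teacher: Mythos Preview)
Your proof is correct and rests on the same crux as the paper's---namely that every factor in a reduced decomposition $\sigma=\sigma_1\cdots\sigma_m$ lies in the Young subgroup $Y_\sigma$ (your ``key lemma'' is exactly the content of Proposition~\ref{crf} and Corollary~\ref{nys}). The route you take after that, however, is genuinely different. The paper works inside the degenerate algebra $\mathbb Q[\tilde S_k]$: it shows that computing $C[\sigma]$ amounts to extracting the lowest term of $(\sum_{\rho\in Y_\sigma}d^{c(\rho)}\tilde\rho)^{-1}$ and then uses the tensor factorization $\mathbb Q[\tilde Y_\sigma]=\bigotimes_\ell\mathbb Q[\tilde S_{a_\ell}]$ (Formula~\eqref{Ys1}) to read off the product $\prod_\ell C[(a_\ell)]$ directly from Formula~\eqref{laform}. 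You instead recast $C[\mu]=\sum_m(-1)^m G(\sigma,m)$ as Philip Hall's chain formula for the M\"obius number of the interval $[\mathrm{id},\sigma]$ in absolute order, prove the poset isomorphism $[\mathrm{id},\sigma]\cong\prod_\ell[\mathrm{id},c_\ell]$, and invoke multiplicativity of M\"obius functions over direct products. What your approach buys is an explicit link to the well-developed combinatorics of absolute order and noncrossing partitions (where indeed $[\mathrm{id},c_\ell]\cong NC(a_\ell)$ and Kreweras' theorem gives the Catalan M\"obius number independently of~\eqref{nfaaz1}); what the paper's approach buys is that it stays entirely within the algebraic machinery already set up for the Weingarten function, with no appeal to external poset theory. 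Both arguments reduce the multiplicativity to the same structural fact and import the single-cycle value from~\eqref{nfaaz1}.
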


Fixing   $\sigma\in S_k$ with $\pi(\sigma)=\mu$  we have that $A[\mu;\mu_1,\mu_2, \ldots,\mu_i]$ is also the number of sequences of permutations $\sigma_j,\ \pi(\sigma_j)=\mu_j$  with $\sigma=\sigma_1\sigma_2\ldots \sigma_i$.

 So we shall also use the notation, for $\pi(\sigma)=\mu$:   $$A[\sigma;\mu_1,\mu_2, \ldots,\mu_i]=A[\mu;\mu_1,\mu_2, \ldots,\mu_i],\quad C[\sigma]:= A[\sigma, |\sigma|].$$

Thus  \begin{equation}\label{amus}
C[\mu]=A[\mu, | \mu|]=\sum_{i=1}(-1)^ i\sum_{\substack{\mu_1,\mu_2, \ldots,\mu_i \mid \mu_j\neq 1^k\\\sum_{j=1}^i|\mu_j|=|\mu|}}  A[\mu;\mu_1,\mu_2, \ldots,\mu_i]
\end{equation} 
We call  a coefficient  $A[\mu;\mu_1,\mu_2, \ldots,\mu_i]
$  with $\mu_1,\mu_2, \ldots,\mu_i \mid \mu_j\neq 1^k ,$ and $ \sum_{j=1}^i|\mu_j|=|\mu|$ a {\em top coefficient}.

\subsubsection{ Top coefficients and a  degeneration of $\mathbb Q[ S_k]$}
The study of  $C[\mu]$ can be formulated in terms of a {\em  degeneration: $\mathbb Q[\tilde S_k]$} of the multiplication in the group algebra whose elements now denote by  $\tilde\sigma$. 

Define a new (still associative) multiplication   on $\mathbb Q[ S_k][q]$, $q$ a commuting variable by
\begin{equation}\label{disf0}
\mathbb Q[\tilde S_k]:=\oplus_{\sigma\in S_k}\mathbb Q\tilde\sigma,\quad \tilde\sigma_1\tilde\sigma_2:= q^{|\sigma_1|+| \sigma_2|-|\sigma_1 \sigma_2|} \widetilde{\sigma_1 \sigma_2}.\end{equation}
$$ ( \tilde\sigma_1\tilde\sigma_2) \tilde\sigma_3 = q^{|\sigma_1|+| \sigma_2|-|\sigma_1 \sigma_2|}  q^{|\sigma_1\sigma_2|+| \sigma_3|-|\sigma_1 \sigma_2\sigma_3|} \widetilde{\sigma_1 \sigma_2\sigma_3} $$$$=q^{|\sigma_1|+| \sigma_2| +| \sigma_3|-|\sigma_1 \sigma_2\sigma_3|} \widetilde{\sigma_1 \sigma_2\sigma_3}=\tilde\sigma_1(\tilde\sigma_2\tilde\sigma_3 ),\quad\text{associativity}.$$
When $q=1$ we recover the group algebra and when $q=0$  we have
\begin{equation}\label{disf}
\mathbb Q[\tilde S_k]:=\oplus_{\sigma\in S_k}\mathbb Q\tilde\sigma,\quad \tilde\sigma_1\tilde\sigma_2:=\begin{cases}
\widetilde{\sigma_1 \sigma_2}\ \text{if }|\sigma_1 \sigma_2|=|\sigma_1|+| \sigma_2|\\
0\quad \text{otherwise}
\end{cases}.
\end{equation}
Notice that, since $S_k$ is generated by transpositions  and $\tilde \tau^2=q^2$ for a transposition, we  have the algebra $\mathbb Q[ S_k][q^2]$.  

Further the product is compatible with the inclusions $S_k\subset S_{k+1}\subset\ldots$ so it defines an algebra on $\mathbb Q[\mathcal S] [q^2 ]$ where $\mathcal S=\cup_kS_k$.

Contrary to the semisimple algebra $  \mathbb Q[ S_k]$ the algebra  $  \mathbb Q[\tilde S_k]$ is a graded algebra, with  $  \mathbb Q[\tilde S_k]_h=     \oplus_{\sigma\in S_k\mid\,|\sigma|=h}\mathbb Q\tilde\sigma$ and has $$I:=\oplus_{\sigma\in S_k\mid\sigma\neq 1}\mathbb Q\tilde\sigma=\oplus_{h=1}^{k-1}\mathbb Q[\tilde S_k]_h $$ as  a nilpotent ideal, $I^k=0$, its nilpotent radical.
Observe that 
$$|\sigma_1 \sigma_2|=|\sigma_1|+| \sigma_2|\iff c(\sigma_1 \sigma_2)=c(\sigma_1)+c( \sigma_2)-k $$ so  if 
$c(\sigma_1)+c( \sigma_2)\leq k$  we know a priori that the product  $\tilde\sigma_1 \tilde\sigma_2=0$.    \smallskip                    

In this algebra the multiplication of two  elements $\tilde C_{\mu_1}, \tilde C_{\mu_2}$ associated to conjugacy classes 
as in   \eqref{cmu} involves only the top coefficients and is:

\begin{equation}\label{moto}
\tilde C_{\mu_1}  \tilde C_{\mu_2}=\sum_{ |\mu|=|\mu_1|+|\mu_2| }A[\mu;\mu_1,\mu_2]\tilde C_{\mu }.
\end{equation}

We then have
$$( \sum_{\rho\in S_k}  d^{c(\rho)}\tilde\rho )^{-1}=d^{-k}(1+\sum_{\mu\vdash k\mid \mu\neq 1^k}d^{-|\mu|} \tilde C_\mu)^{-1}=d^{-k}(1+\sum_{\mu\vdash k}  d^{-|\mu|}C[\mu  ]\tilde C_\mu )
$$\begin{equation}\label{laform}
=d^{-k}(1+\sum_{h=1}^{k-1}d^{- h}(\sum_{\mu\vdash k\mid |\mu|=h}  C[\mu  ]\tilde C_\mu )).\end{equation}
Notice that  if $h=k-1$  the only partition $\mu$ with $|\mu|=k-1$ is $\mu=(k)$  the partition of the full cycle. 

Hence in Formula \eqref{laform} the lowest term is $d^{-2k+1} C[(k)]\tilde C_{(k)}$.\smallskip

An example,which the reader can skip,  the connection coefficients for $S_4$,  in box the top ones (write the elements $C_\mu$  with lowercase):

      $$\!\!\!\!\!\!\!\!\!\!\begin{matrix}
&&c_{1, 1, 2}&c_{1, 3}&c_ {2, 2}&c_4\\\\ 
c_{1, 1, 2}&&  6c_ {1, 1, 1, 1}+ \boxed{3 c_{1, 3}+2c_{2, 2} }&  4c_ {1, 1, 2}+\boxed{4c_ {4} }&  c_{1, 1, 2}+\boxed{ 2c_{4}}& 3c_{1, 3}+4c_ {2, 2}
 \\
c_{1, 3}&&  4c_{1, 1, 2}+ \boxed{ 4c_ {4}} &8c_{1, 1, 1, 1}+4c_{1, 3}+8c_{2, 2}& 3c_{1, 3}& 4c_ {1, 1, 2} +4c_{4}\\
c_ {2, 2}&&c_{1, 1, 2}+ \boxed{2c_{4}}& 3c_ {1, 3}& 3c_ {1, 1, 1, 1}+2c_{2, 2}&2c_ {1, 1, 2}+ c_{4}\\
c_4&& 3c_{1, 3}+4c_{2, 2}& 4c_{1, 1, 2}+ 4c_ {4}&   2c_ {1, 1, 2}+  c_{4}&6c_ {1, 1, 
      1, 1}+3c_ {1, 3}+2c_ {2, 2}
\end{matrix} $$
Setting  $a=c_{1, 1, 2},\ b=c_{1, 3},\ c=c_ {2, 2},\ d=c_4$ compute Formula \eqref{laform}
 
 $$ a^2=3b+2c,\ ab= 4 d,\ ac= 2d$$
 $$P=1+T,\ T=x^{-1}a+x^{-2}(b+c)+x^{-3}d,\ (1+T)^{-1}=1-T+T^2-T^3 $$ 
$$ T^2=x^{-2}a^2+2x^{-3}a(b+c)=x^{-2}(3b+2c)+x^{-3}12d,\ T^3=x^{-3}a(3b+2c)=x^{-3}(12+4)d=x^{-3}16d  $$ 
$$-T+T^2-T^3= -x^{-1}a-x^{-2}(b+c)-x^{-3}d+x^{-2}(3b+2c)+x^{-3} 12d-x^{-3}16d 
$$
$$=  -x^{-1}a +x^{-2}(2 b+ c) -x^{-3} 5d$$
The conjugacy classes and their cardinality in $S_5$:
$$\left(
\begin{matrix} 
 1 ,c_{1,1,1,1,1} & 10 ,c_{1,1,1,2} &
 20 ,c_{1,1,3} &
 15 ,c_{1,2,2} &
 30 ,c_{1,4} &
 20 ,c_{2,3} &
 24 ,c_{5} &
\end{matrix}
\right)$$Here is a table of the top connection coefficients for $S_5$. The numbers to the right are the degrees  $|\mu|$:

$$ \begin{matrix}
a=c_{1,1,1,2} , 1&
 b=c_{1,1,3}, 2 &
c=c_{1,2,2}, 2 &
d=c_{1,4}, 3 &
 e=c_{2,3}, 3 &
 f=c_{5},4 &
\end{matrix} 
$$
$$ \begin{matrix}
&&c_{1,1,1,2}&c_{1,1,3}&c_{1,2,2}&c_{1,4}&c_{2,3}&c_5\\
c_{1,1,1,2}&&3c_{1,1,3}+2c_{1,2,2}&4c_{1,4}+c_{2.3}&2c_{1,4}+3c_{2.3}&5c_5&5c_5&0
\\c_{1,1,3}&&4c_{1,4}+c_{2.3}&5c_5&5c_5&0&0&0
\\c_{1,2,2}&&2c_{1,4}+3c_{2.3}&5c_5&5c_5&0&0&0
\\c_{1,4}&&5c_5&0&0&0&0&0
\\c_{2,3}&&5c_5&0&0&0&0&0
\\c_5&&0&0&0&0&0&0 
\end{matrix}$$
Compute Formula \eqref{laform}
 $$a^2=3b+2c,\ ab=4d+e,\ ac=2d+ 3e,\ ad=5f,\ ae=5f,\ b^2=5f,\ bc=5f,\ c^2=5f,\  $$
$$1+T,\ T=x^{-1}a+ x^{-2}(b+c)+x^{-3}(d+e)+x^{- 4}f$$
$$T^2=x^{-2}a^2+ x^{-4}(b+c)^2+2x^{-3}a(b+c)+2x^{- 4}a(d+e) $$
$$=x^{-2}(3b+2c) +2x^{-3} (6d+4e)+40x^{- 4}f    $$
$$T^3 =x^{-3}a(3b+2c) +2x^{-4} a(6d+4e) +x^{-4}(b+c)(3b+2c) $$
$$=x^{-3}( 12d+3e+4d+6e)+x^{-4}(100+15+10+15+10  )f$$
$$=x^{-3}( 16d+9e)+x^{-4}  150   f$$
$$T^4=x^{-4}a( 16d+9e)=x^{-4}(16\cdot 5+ 45 )f=x^{-4}125  f $$
$$125-150+40-1=14$$
 $\mathtt C_i$=Catalan(i): {1, 2, 5, 14, 42,\ldots}   Catalan(4)=14.
 
 $$-T+T^2-T^3+T^4=$$$$-(x^{-1}a+ x^{-2}(b+c)+x^{-3}(d+e) )+x^{-2}(3b+2c) +2x^{-3} (6d+4e) - x^{-3}( 16d+9e) +14f $$
 $$=-x^{-1}a- x^{-2}(b+c)-x^{-3}(d+e) +x^{-2}(3b+2c) +2x^{-3} (6d+4e) - x^{-3}( 16d+9e)$$
 $$=-x^{-1}a  +x^{-2}(3b+2c-b-c) + x^{-3} (12d+8e - 16d-9e-d-e)$$
 $$=-x^{-1}a  +x^{-2}(2b+ c) + x^{-3} (-5d  -2e) +14f.$$
 
\medskip
\subsubsection{Young subgroups }
Let $\Pi:=\{A_1,A_2,\ldots,A_j\},\ |A_i|=a_i$  be a decomposition of the set $[1,2,\ldots,k]$:
$$\text{i.e.}\quad A_1\cup A_2\cup \ldots\cup A_j=[1,2,\ldots,k] ,\ A_i\cap A_j=\emptyset,\ \forall i\neq j.$$
\begin{definition}\label{Ysb}\begin{enumerate}\item  The subgroup of $S_k$ fixing this decomposition  is the product  $\prod_{i=1}^jS_{A_i}= \prod_{i=1}^jS_{a_i}$ of the symmetric groups $S_{a_i}$. It is usually called a {\em Young subgroup} and will be denoted by $Y_{\Pi}$.  
 \item Given two decompositions  of $[1,2,\ldots,k]$, $\Pi_1:=\{A_1,A_2,\ldots,A_j\},$   and $  \Pi_2:=\{B_1,B_2,\ldots,B_h\}$ we say that $\Pi_1\leq\Pi_2$ if   each set $A_i$ is contained in one of the sets $B_d$. This is equivalent to the condition $Y_{\Pi_1}\subset Y_{\Pi_2}$.

\item 
 In particular, if  $\sigma\in S_k$ we  denote by $\Pi_\sigma$ the decomposition of $[1,2,\ldots,k]$ induced  by its cycles and denote $Y_\sigma:=Y_{\Pi_\sigma}$. 
 \end{enumerate}

\end{definition}  
\begin{remark}\label{lecly}
Observe that   $\tau\in Y_\Pi $ if and only if   $\Pi_\tau\leq\Pi $.   The conjugacy classes of $Y_\Pi$ are the  products of the conjugacy classes  in the blocks $A_i$. \smallskip

\end{remark}

 Then  we have for the group algebra and $\tau=(\tau_1, \tau_2,\ldots,\tau_j)\in Y_{\Pi}$:
 \begin{equation}\label{Ys}
\mathbb Q[Y_{\Pi}]=\otimes_{i=1}^j \mathbb Q[S_{a_i}]\subset \mathbb Q[S_{k}],\quad  (\tau_1, \tau_2,\ldots,\tau_j)=\tau_1\otimes\tau_2\otimes\ldots\otimes\tau_j .
\end{equation} We denote by   $\mathtt c_{\tau}$ the sum of the elements of the conjugacy class of $\tau$  in $Y_\Pi$ in order to distinguish it from $C_\tau$ the sum over the conjugacy class in $S_k$.
We have:
\begin{equation}\label{ccY}
\tau=   (\tau_1, \tau_2,\ldots,\tau_j)\in  Y_{\Pi},\ \mathtt c_{\tau}\stackrel{\eqref{cmu}}=C_{\tau_1}\otimes C_{\tau_2}\otimes\ldots\otimes C_{\tau_j }.
\end{equation}The first remark is:
 \begin{remark} \label{partb}
If $\tau=(\tau_1, \tau_2,\ldots,\tau_j)\in  Y_{\Pi}$ then for the number $c(\tau)$ of cycles of $\tau$ we have
\begin{equation*} 
c( \tau)=c(\tau_1)+c( \tau_2)+\cdots+c(\tau_j),\end{equation*}\begin{equation}\label{ciss1}
\implies |\tau|=\sum_ia_i-c(\tau)=\sum_i(a_i-c(\tau_i))=|\tau_1 |+|\tau_2|+\cdots+|\tau_j|.
\end{equation}
As a consequence if $\gamma=(\gamma_1, \gamma_2,\ldots,\gamma_j),\tau=(\tau_1, \tau_2,\ldots,\tau_j)\in  Y_{\Pi}$ we have
\begin{equation}\label{tasi}
|\gamma\tau|=|\gamma|+|\tau| \iff |\gamma_i\tau_i|=|\gamma_i|+|\tau_i|,\ \forall i.
\end{equation}
\end{remark}
If we then consider the associated discrete algebras, From Formulas \eqref{tasi} and  \eqref{Ys} we deduce  an analogous of  Formula \eqref{Ys} for the discrete algebras:
   \begin{equation}\label{Ys1}
\mathbb Q[\tilde Y_{\Pi}]=\otimes_{i=1}^j \mathbb Q[\tilde S_{a_i}]\subset \mathbb Q[\tilde S_{k}],\quad  \tau= (\tau_1, \tau_2,\ldots,\tau_j),  \ \tilde \tau =  \tilde \tau_1 \otimes \tilde \tau_2 \otimes\cdots\otimes \tilde \tau_j .
\end{equation}Formula \eqref{ciss1} tells us that $\mathbb Q[\tilde Y_{\Pi}]=\otimes_{i=1}^j \mathbb Q[\tilde S_{a_i}]$ as graded tensor product and the inclusion in $ \mathbb Q[\tilde S_{k}]$ preserves the degrees.
 
\subsubsection{A proof of Theorem \ref{tcoll2} }

In particular let $\sigma\in S_k$ and $\sigma=c_1c_2\ldots c_j$  its cycle decomposition. 

Let $A_i$ be the support of the cycle $c_i$ of $\sigma$ and $a_i$ its cardinality, so that $\Pi_\sigma=\{A_1,\ldots,A_j\} $ and $Y_\sigma=Y_{\Pi_\sigma}$.  
We have $\sigma\in Y_\sigma$ and its    conjugacy   class in $Y_\sigma$ is the product of the conjugacy classes  of the cycles $(a_i)\subset S_{a_i}$,  \eqref{cmu}. We denote, as before,  by   $\mathtt c_{\sigma}$ the sum of the elements of this conjugacy class.

We have now a very simple but crucial fact;
\begin{proposition}\label{crf}
\begin{enumerate}\item Let $(i,i_1,\ldots,i_a),\ (j,j_1,\ldots,j_b)$  be  two disjoint cycles, $a,b\geq 0$, and take the transposition  $(i,j)$ then:
\begin{equation}\label{spc}
(i,i_1,\ldots,i_a)    (j,j_1,\ldots,j_b)  (i,j)=(i,j_1,\ldots,j_b,j, i_1,\ldots,i_a )
\end{equation}
\begin{equation}\label{spc1}
 (i,j)   (i,i_1,\ldots,i_a)    (j,j_1,\ldots,j_b)=(j,j_1,\ldots,j_b, i, i_1,\ldots,i_a)
\end{equation}
\item Let $\sigma\in S_k$ and  $\tau=(i,j)$  a transposition.  Then $|\sigma\tau|=|\tau\sigma|=|\sigma|\pm1$  and  
$|\sigma\tau|=|\tau\sigma|=|\sigma|-1$ if and only if the two indices $i,j$  both belong to one of the sets of the partition of $\sigma$, i.e. $\tau=(i,j)\in Y_\sigma$. \end{enumerate}
\end{proposition}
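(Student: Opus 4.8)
The plan is to establish part (1) by a direct computation of products of permutations, and then to derive part (2) from it together with the elementary fact that right multiplication of a permutation by a transposition either splits one of its cycles into two or merges two of its cycles into one.

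For part (1) I would trace the image of each index under the composite permutation, composing permutations as functions with the rightmost applied first (the convention compatible with the left action \eqref{azio}). For \eqref{spc}: applying $(i,j)$, then $(j,j_1,\ldots,j_b)$, then $(i,i_1,\ldots,i_a)$, one reads off $i\mapsto j_1\mapsto\cdots\mapsto j_b\mapsto j\mapsto i_1\mapsto\cdots\mapsto i_a\mapsto i$, all other indices being fixed, which is precisely the right-hand side; \eqref{spc1} is identical. The degenerate instances $a=0$ or $b=0$, in which $(i,i_1,\ldots,i_a)$ or $(j,j_1,\ldots,j_b)$ is the identity on a single point, are automatically covered by this bookkeeping, so no separate argument is needed. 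This step presents no real obstacle.

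For part (2), I first note that $\tau\sigma=\tau(\sigma\tau)\tau^{-1}$ is conjugate to $\sigma\tau$ and therefore has the same cycle type, so $|\tau\sigma|=|\sigma\tau|$ by \eqref{mod}; it suffices to study $|\sigma\tau|$. Write $\tau=(i,j)$. By Definition \ref{Ysb} and Remark \ref{lecly}, $\tau\in Y_\sigma$ precisely when $i$ and $j$ lie in a common cycle of $\sigma$, so I split into two cases. If $i$ and $j$ lie in different cycles of $\sigma$, write these as $(i,i_1,\ldots,i_a)$ and $(j,j_1,\ldots,j_b)$ and let $\sigma'$ be the product of the remaining cycles, so that $\sigma=(i,i_1,\ldots,i_a)(j,j_1,\ldots,j_b)\,\sigma'$ with $\sigma'$ fixing $i$ and $j$ and hence commuting with $\tau$; then \eqref{spc} gives $\sigma\tau=(i,j_1,\ldots,j_b,j,i_1,\ldots,i_a)\,\sigma'$, which has one fewer cycle than $\sigma$, so $|\sigma\tau|=|\sigma|+1$. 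If instead $i$ and $j$ lie in a common cycle of $\sigma$, write that cycle as $(i,j_1,\ldots,j_b,j,i_1,\ldots,i_a)$ with $a,b\geq 0$; by \eqref{spc} it equals $(i,i_1,\ldots,i_a)(j,j_1,\ldots,j_b)\,\tau$, so $\sigma=(i,i_1,\ldots,i_a)(j,j_1,\ldots,j_b)\,\tau\,\sigma'$ with $\sigma'$ commuting with $\tau$, whence $\sigma\tau=(i,i_1,\ldots,i_a)(j,j_1,\ldots,j_b)\,\sigma'$, which has one more cycle than $\sigma$ (fixed points being counted), so $|\sigma\tau|=|\sigma|-1$. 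Together these two cases give $|\sigma\tau|=|\sigma|\pm1$ and the asserted equivalence $|\sigma\tau|=|\sigma|-1\iff\tau\in Y_\sigma$.

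The only place asking for a little care is, again, the degenerate sub-cases $a=0$ or $b=0$ in the ``common cycle'' situation, where one of the two cycles $(i,i_1,\ldots,i_a),(j,j_1,\ldots,j_b)$ collapses to a one-point cycle; since fixed points contribute to the cycle count $c(\cdot)$ used to define $|\cdot|$, the count is unaffected, and part (1) already handles these cases uniformly. I do not anticipate any genuine difficulty; the content of the proof is entirely in the two explicit identities of part (1).
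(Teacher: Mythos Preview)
Your proposal is correct and follows essentially the same route as the paper: part (1) by direct tracking of indices, part (2) by the dichotomy ``$i,j$ in the same cycle of $\sigma$'' versus ``in different cycles,'' deducing from \eqref{spc} that $c(\sigma\tau)=c(\sigma)\pm1$ accordingly. Your explicit remark that $\tau\sigma=\tau(\sigma\tau)\tau^{-1}$ is conjugate to $\sigma\tau$, and your attention to the degenerate cases $a=0$ or $b=0$, simply spell out details the paper leaves implicit.
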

\begin{proof}
1) is clear and 2) follows immediately from 1).  In fact either $i,j$ belong to the same cycle of $\sigma$ and then  in $\sigma\tau$ this cycle is split into two  and   $c(\sigma\tau)=c(\sigma)+1$  or $i,j$ belong to two different   cycles of $\sigma$ which are joined in $\sigma\tau$  and   $c(\sigma\tau)=c(\sigma)-1.$

Notice that, if $|\sigma\tau|=|\tau\sigma|=|\sigma|-1$,  $\Pi_{\sigma\tau}<\Pi_\sigma$ and is obtained from $\Pi_\sigma$  by replacing the support of the cycle in which $i,j$  appear with two subsets  support of the 2 cycles in which this splits. Similarly for  $\Pi_{\tau\sigma}$.
\end{proof}
From this we deduce the essential  result of this section:
\begin{corollary}\label{nys}
Let  $\sigma\in S_k$. Consider a decomposition $\sigma=  \sigma_1  \sigma_2 \ldots,\sigma_h,$ $ \sigma_i\in S_k,\ \sigma_i\neq 1,\forall i$  with
$|\sigma|=  |\sigma_1| + |\sigma_2|+ \ldots+|\sigma_h|$. Then for all $i$ we have $\sigma_i\in Y_{\Pi_\sigma}=Y_{ \sigma}$ (Definition \ref{Ysb}).\end{corollary}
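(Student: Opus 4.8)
The plan is to deduce this multi-factor statement from the single--transposition case recorded in Proposition \ref{crf}(2). First I would fix a reduced expression of each factor as a product of transpositions, say $\sigma_i=\tau^{(i)}_1\tau^{(i)}_2\cdots\tau^{(i)}_{|\sigma_i|}$, which is possible because $|\sigma_i|$ is precisely the minimal number of transpositions with product $\sigma_i$ (Remark \ref{mintr}). Concatenating these words gives an expression $\sigma=\tau_1\tau_2\cdots\tau_N$ as a product of $N:=\sum_{i=1}^h|\sigma_i|=|\sigma|$ transpositions; since the number of factors equals $|\sigma|$, this global word is itself reduced. As $Y_\sigma$ is a subgroup and each $\sigma_i$ is a product of consecutive $\tau_\ell$'s, it then suffices to prove that every transposition $\tau_\ell$ occurring in a reduced word for $\sigma$ lies in $Y_\sigma$.

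Next I would analyse the partial products $w_\ell:=\tau_1\cdots\tau_\ell$, with $w_0=1$ and $w_N=\sigma$. On the one hand $|w_\ell|\le\ell$, being a product of $\ell$ transpositions; on the other hand, writing $\sigma=w_\ell\cdot(\tau_{\ell+1}\cdots\tau_N)$ and using subadditivity of $|\cdot|$ (Remark \ref{mintr}) one gets $|\sigma|\le|w_\ell|+(N-\ell)$, hence $|w_\ell|\ge\ell$. So $|w_\ell|=\ell$ for every $\ell$, and in particular $|w_\ell\,\tau_\ell|=|w_{\ell-1}|=\ell-1=|w_\ell|-1$. Applying Proposition \ref{crf}(2) to $w_\ell$ and the transposition $\tau_\ell$ yields two facts at once: the two indices moved by $\tau_\ell$ belong to a single cycle of $w_\ell$, i.e. $\tau_\ell\in Y_{w_\ell}$; and $\Pi_{w_{\ell-1}}=\Pi_{w_\ell\tau_\ell}$ is obtained from $\Pi_{w_\ell}$ by splitting one block into two, so $\Pi_{w_{\ell-1}}\le\Pi_{w_\ell}$.

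Finally I would telescope these relations: $\Pi_{w_\ell}\le\Pi_{w_{\ell+1}}\le\cdots\le\Pi_{w_N}=\Pi_\sigma$ for every $\ell$, and therefore $Y_{w_\ell}\subseteq Y_{\Pi_\sigma}=Y_\sigma$ by Definition \ref{Ysb}. Combined with $\tau_\ell\in Y_{w_\ell}$ this gives $\tau_\ell\in Y_\sigma$ for all $\ell$, whence each $\sigma_i=\tau^{(i)}_1\cdots\tau^{(i)}_{|\sigma_i|}\in Y_\sigma$, as desired. I do not expect any genuine obstacle beyond Proposition \ref{crf}; the only steps that need a little care are the subadditivity argument showing that the concatenated transposition word is reduced (so that all partial products $w_\ell$ have reflection length exactly $\ell$), and keeping track of the direction in which the set partitions $\Pi_{w_\ell}$ coarsen as $\ell$ increases.
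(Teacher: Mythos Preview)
Your argument is correct and follows essentially the same approach as the paper. Both proofs reduce to transpositions and invoke Proposition~\ref{crf}(2) together with the observation that $\Pi_{\sigma\tau}<\Pi_\sigma$ when $|\sigma\tau|=|\sigma|-1$; the paper phrases this as an induction on $h$ that peels a transposition off the leftmost factor, while you track the partial products $w_\ell$ and the increasing chain $\Pi_{w_\ell}\le\Pi_{w_{\ell+1}}$ explicitly, which amounts to the same telescoping.
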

\begin{proof}
By induction on $h$, if   $h=1$    there is nothing to prove.

If $\sigma_1=(i,j)$ is a transposition $|\sigma_1 |=1$,  then  the the claim follows by induction on $\sigma_1 \sigma=\bar\sigma=   \sigma_2 \ldots,\sigma_h,$ since $|\sigma_1 \sigma|=|\sigma|-1$ and Proposition  \ref{crf}.

If $|\sigma_1|>1$ we split  $\sigma_1=\tau \bar\sigma_1$  with $|\bar\sigma_1|= |\sigma_1|-1$ and $\tau$ a transposition and we are reduced to the previous case.
\end{proof}
We are now ready to prove  the Theorem of Collins,  Formula  \eqref{tcoll3}.

Let $\sigma\in S_k$ and $\sigma=c_1c_2\ldots c_j$  its cycle decomposition. Let $A_i$ be the support of the cycle $c_i$ and $a_i$ its cardinality, so that $\Pi_\sigma=\{A_1,\ldots,A_j\} $.   

By the previous Corollary \ref{nys}  and Remark \ref{indep} the contribution to $\sigma$ in the terms of Formula  \eqref{amui}   are  all in the subgroup $Y_\sigma$   so that finally $$\boxed{C[ \sigma] =C[\tilde\sigma]}\quad\text{with }C[\tilde\sigma]\  \text{computed in } \mathbb Q[\tilde Y_\sigma].$$   

In order to compute $C[\tilde\sigma]  $ we observe that the term  $d^{-k-|\tilde\sigma|} C[\tilde\sigma] \mathtt c_{\tilde \sigma}=d^{-k-|\sigma|} C[\tilde\sigma] \mathtt c_{\tilde \sigma}$  is the lowest  term in $d^{-1}$  in
\begin{equation}\label{mah}
( \sum_{\rho\in Y_\sigma}  d^{c(\rho)}\tilde\rho )^{-1}=\bigotimes_{i=1}^j( \sum_{\rho\in S_{a_i}}  d^{c(\rho)}\tilde\rho )^{-1}.
\end{equation}
From Formula  \eqref{laform} applied to the various full  cycles $c_i\in S_{a_i}$ we have that the lowest term in $( \sum_{\rho\in S_{a_i}}  d^{c(\rho)}\tilde\rho )^{-1}$ is $d^{-2a_i+1} C[(a_i)] C_{(a_i)}$ so that we have finally that the lowest term in Formula  \eqref{mah} is
\begin{equation}\label{lafine}
 d^{-k-|\sigma|} C[\tilde\sigma]\mathtt c_{\tilde \sigma}\stackrel{\eqref{ccY}}=\prod_{i=1}^j d^{-2a_i+1} C[(a_i)] C_{(a_1)}\otimes\ldots\otimes C_{(a_j)},\quad 
$$$$\implies C[ \sigma]=C[\tilde\sigma]=\prod_{i=1}^j   C[(a_i)] \stackrel{\eqref{nfaaz}}=\prod_{i=1}^j (-1)^{a_i-1}\mathtt C_{a_i-1}.\end{equation}
We have proved, Formula   \eqref{nfaaz}   that $(-1)^{a_i-1}C[(a_i)]$ is the Catalan number  $\mathtt C_{a_i-1}$ and the proof of Theorem \ref{tcoll2} is complete.\qed
 
\subsubsection{A table}

The case $k=d$ is of special interest, see \S \ref{sfor}. We  write   $ W\!g(d,\mu) =   a_\mu $ so that   $ \sum_{\mu\vdash d}   W\!g(d,\mu)c_\mu= \sum_\mu a_\mu c_\mu$  in Formula \eqref{wg0}. \bigskip

A   computation using Mathematica gives $d\leq 8$ the   list   $ d!^2\sum_{\mu\vdash d} a_\mu c_\mu$:

   $$ \frac 4{3} c_{1,1} - \frac 2{ 3}c_ {2} =\frac13(4c_{1,1}-2c_ {2})$$

   $$ \frac  {  21}{10}c_ {1^3}- \frac 9{ 10}c_ {1, 2} +\frac 3{5} c_{3} =\frac 1{10}(   21 c_ {1^3}-   9  c_ {1, 2} +6 c_{3}  ) $$

$$ \frac  { 134}{35}c_ {1^4} -\frac  { 48}{35}c_{1^2,2}  +\frac  { 29}{ 35}c_ {1, 3}   +\frac  {  22}{
  35}c_ {2^2} -\frac  { 4}{7}c_  {4}.$$
$$ \frac  1{35}(134 c_ {1^4} -  48 c_{1^2,2}  +\ 29 c_ {1, 3}   + 
  35 c_ {2^2} -20c_  {4}).$$
 The case $d=5:$ 
  
   $$  \frac   {145}{
 18}c_  {1^5} -\frac  { 299}{126}c_ {1^3, 2} + \frac  { 115}{
  126}c_ {1, 2^2}  + \frac  { 80}{63}c_ {1^2, 3} -\frac  { 101}{
   126}c_ {1, 4}  -\frac  { 37}{63}c_ {2, 3}+\frac  { 5}{9}c_ {5}    $$
 $$  \frac  1 {126}(         1015c_  {1^5} -299c_ {1^3, 2} + 160c_ {1^2, 3}+ 115c_ {1, 2^2}   -101c_ {1, 4}  -74c_ {2, 3}+70c_ {5})   $$
  The case $d=6:$  
  $$  \frac  {10508}{539}c_{1^6}  - \frac {2538}{539 }c_ {1^4, 2} +  \frac {1180}{
  539}c_{1^3, 3}+\frac  {2396}{ 1617}c_{1^2, 2^2}- \frac {668}{
   539}c_ {1^2, 4} -\frac {459}{ 539}c_ {1, 2, 3}  +\frac  {26}{
 33}c_ {1, 5}-\frac {338}{
   539}c_ {2^3} $$$$+\frac {922}{1617}c_{2, 4}  +\frac {300}{
  539}c_ {3, 3}-\frac {6}{11}c_ {6}$$
     
  $$  \frac 1{1617} (31524c_{1^6}  - 7614c_ {1^4, 2} + 3540c_{1^3, 3}+2396c_{1^2, 2^2}- 2004c_ {1^2, 4}$$$$ -1377c_ {1, 2, 3}  +1274c_ {1, 5}-1014c_ {2^3} +922c_{2, 4}  +900c_ {3, 3}-882c_ {6})$$
   
 The case $d=7:$
$$\frac {184849}{
  3432}c_{1^7}-\frac{12319}{1144}c_ {1^5, 2} +\frac{7385}{1716}c_{1^4, 3} +\frac {9401}{
  3432}c_{1^3, 2^2}-\frac{7369}{3432}c_ {1^3, 
   4} -\frac{196}{143}c_ {1^2,2, 3} +\frac {2107}{1716}c_{1^2, 5} $$$$ -\frac{1087}{1144}c_ {1, 2^3}+\frac {259}{
  312}c_{1, 2, 4} +\frac{1379}{1716}c_ {1, 3^2}-\frac{
   223}{286}c_ {1, 6}+\frac {1015}{1716}c_{2^2, 3}-\frac{961}{1716}c_ {2, 5} -\frac{85}{156}c_ {3, 4}+\frac {7}{
  13}c_{7} $$  
The biggest denominator $3432$  is also  a multiple of all denominators:
$$\frac 1{3432} (184849 c_{1^7}- 36957c_ {1^5, 2} +14770 c_{1^4, 3} +9401c_{1^3, 2^2}-7369c_ {1^3, 
   4} -4704c_ {1^2,2, 3} +4214c_{1^2, 5} $$$$ -3261c_ {1, 2^3}+2849c_{1, 2, 4} +2758c_ {1, 3^2}-2676c_ {1, 6}+2030c_{2^2, 3}-1922c_ {2, 5} -1870c_ {3, 4}+1848c_{7}  )$$  

 The case $d=8:$
$$  \frac{3245092}{19305}  c_{1^8} -\frac{546368}{19305}  c_{1^7, 2} +
 \frac{14434}{1485}  c_{1^5,
   3} +
 \frac{112828}{19305}  c_{1^4,
   2 , 2} -\frac{16336}{3861}  c_{1^4, 4}
   -\frac{4384}{1755}  c_{1^3, 2 , 3}
   +
 \frac{41332}{19305}  c_{1^3, 5} $$$$-\frac{10432}{6435}  c_{1^2, 2^3}
   +
 \frac{8608}{6435}  c_{1^2\, 2 , 4} +
 \frac{24718}{19305}  c_{1^2, 3^2} -\frac{2624}{2145}  c_{1^2, 6} +
 \frac{17122}{19305}  c_{1 , 2^2, 3} -\frac{1216}{1485}  c_{1 , 2 , 5} -\frac{1384}{1755}  c_{1 , 3 , 4} $$$$+
 \frac{151}{195}  c_{1 , 7} +
 \frac{124}{195}  c_{2^4} -\frac{11152}{19305}  c_{2^2, 4} -\frac{2176}{3861}  c_{2 , 3^2} +
 \frac{1186}{2145}  c_{2 , 6} +
 \frac{799}{1485}  c_{3 , 5} +
 \frac{796}{1485}  c_{4^2} -\frac{8}{15}  c_8$$

The biggest denominator $19305$  is also  a multiple of all denominators:
$$  \frac 1{19305}(3245092   c_{1^8} - 546368   c_{1^7, 2} +
187642 c_{1^5,
   3} +
112828  c_{1^4,
   2 , 2} -81680  c_{1^4, 4}
   -48224  c_{1^3, 2 , 3}
   +
 41332  c_{1^3, 5} $$$$-31296  c_{1^2, 2^3}
   +
 25824  c_{1^2\, 2 , 4} +
 24718  c_{1^2, 3^2} -23616  c_{1^2, 6} +
 17122  c_{1 , 2^2, 3} -15808 c_{1 , 2 , 5} -15224  c_{1 , 3 , 4} $$$$+
 14949   c_{1 , 7} +
12276 c_{2^4} -11152  c_{2^2, 4} -10880  c_{2 , 3^2} +
 10674  c_{2 , 6} +
10387  c_{3 , 5} +
 10348  c_{4^2} -10296 c_8).$$

  \smallskip

The reader will notice certain peculiar properties of these sequences.

First $W\!g(\sigma)$ is positive  (resp. negative) if $\sigma$ is  an even (resp. odd) permutation. This is a special case of a Theorem of Novak \cite{No},  Theorem \ref{Nov}.\smallskip
 
 {\bf Conjecture}  The absolute values  are strictly decreasing in the lexicographic order  of partitions written in increasing order.  The biggest denominator   is also  a multiple of all denominators.
 
 I verified this up to $d=14$.
 \subsection{ The results of Jucys Murphy and Novak}
 These conjectures  deserve further investigation, maybe the factorization of Jucys:
 \begin{equation}\label{JF}
\sum_{\rho\in S_k} d^{c(\rho|}\rho=d\prod_{i=2}^k(d+J_i),\quad J_i =(1,i)+(2,i)+\ldots+(i-1,i),\ i=2,\ldots,k 
\end{equation}  see \cite{Ju}  \cite{Mu} and the approach of Novak  \cite{No} can be used.\bigskip
 
 Let me give a quick exposition of these results:
 \begin{proposition}\label{JM}
The elements $J_i$ commute between each other.
\end{proposition}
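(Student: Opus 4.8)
The plan is to prove that the Jucys--Murphy elements $J_2,\dots,J_k$ of $\mathbb{Q}[S_k]$ commute. First I would observe the basic structural fact behind the factorization \eqref{JF}: each $J_i$ lies in the group algebra $\mathbb{Q}[S_i]$ of the subgroup $S_i\subset S_k$ permuting $\{1,\dots,i\}$, since $J_i$ is a sum of transpositions $(j,i)$ with $j<i$. Consequently, for $i<m$ one has $J_i\in\mathbb{Q}[S_i]\subset\mathbb{Q}[S_{m-1}]$, while $J_m$ is a sum of transpositions $(j,m)$ each of which commutes with every permutation fixing $m$; in particular $J_m$ commutes with every element of $\mathbb{Q}[S_{m-1}]$, hence with $J_i$. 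This reduces the whole statement to the single elementary remark that a transposition $(j,m)$ commutes with any $\tau\in S_{m-1}$, which is immediate because $\tau$ fixes both $j$ and $m$ when $j,m>\dots$; more carefully, one checks $\tau(j,m)\tau^{-1}=(\tau(j),\tau(m))=(\tau(j),m)$, and summing over the transpositions comprising $J_i\in\mathbb{Q}[S_{m-1}]$ shows $J_m$ is central for the conjugation action of $S_{m-1}$, so it commutes with $J_i$.

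An alternative, perhaps cleaner, route is to give a direct combinatorial computation of the commutator $[J_i,J_m]$ for $i<m$. Writing $J_i=\sum_{j<i}(j,i)$ and $J_m=\sum_{\ell<m}(\ell,m)$, one expands $J_iJ_m-J_mJ_i$ as a sum over pairs $(j,\ell)$ with $j<i$, $\ell<m$ of $(j,i)(\ell,m)-(\ell,m)(j,i)$. The terms with $\{j,i\}\cap\{\ell,m\}=\emptyset$ cancel in pairs because disjoint transpositions commute. The terms with a common index — which, given $j<i<m$ and $\ell<m$, can only happen when $\ell=i$ or $\ell=j$ (note $i\ne m$ always) — organize themselves: for a fixed $j<i$ the contributions from $\ell=i$ and from $\ell=j$ cancel against each other, using the $3$-cycle identities $(j,i)(i,m)=(j,i,m)$ and $(i,m)(j,i)=(j,m,i)$ together with the relabeling symmetry between the roles of $i$ and $j$ when summed over all $j<i$. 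I would present this as the one short explicit verification.

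I expect the only real ``obstacle'' here to be cosmetic: making sure that the range of indices in the second, hands-on approach is stated so that the cancellation is transparent, and that the reader sees why no leftover $3$-cycle terms survive. The first approach avoids this entirely and is essentially a one-line argument once the containment $J_i\in\mathbb{Q}[S_i]$ is noted, so I would lead with that and relegate the explicit expansion to a remark. Either way, the proof needs nothing beyond the definitions already in place in \eqref{JF} and the elementary algebra of transpositions, so it will be very short.
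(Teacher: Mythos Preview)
Your proposal is correct. The paper's own proof is your second approach in condensed form: it records the single identity $(i,j)\bigl[(i,k)+(j,k)\bigr]=(i,j,k)+(j,i,k)=\bigl[(i,k)+(j,k)\bigr](i,j)$ for $i<j<k$ and says commutativity ``follows easily,'' leaving implicit that each transposition $(i,j)$ with $i,j<k$ then commutes with all of $J_k$ (the remaining summands $(\ell,k)$ being disjoint from $(i,j)$). Your expansion of $[J_i,J_m]$ into disjoint and overlapping pairs, with the $3$-cycle cancellation for the latter, is exactly this argument with the details filled in.

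Your first approach is a slightly more conceptual packaging of the same fact: conjugation by any $\tau\in S_{m-1}$ permutes the summands of $J_m$ via $(\ell,m)\mapsto(\tau(\ell),m)$, so $J_m$ centralizes all of $\mathbb{Q}[S_{m-1}]\ni J_i$. This avoids case analysis and actually proves more. One caveat: your intermediate sentence ``a transposition $(j,m)$ commutes with any $\tau\in S_{m-1}$'' is false as stated (take $\tau=(1,2)$ and $(j,m)=(1,3)$); only the \emph{sum} $J_m$ is conjugation-invariant. You self-correct with the ``more carefully'' clause, but in the final write-up you should delete the wrong single-transposition claim rather than state it and then retract it.
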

\begin{proof}
This follows easily from the following fact, if $i<j<k$  then:
\begin{equation}\label{ijk}
(i,j)[(i,k)+(j,k)]=(i,j,k)+(j,i,k)=[(i,k)+(j,k)](i,j).
\end{equation}
\end{proof}
As for Formula  \eqref{JF} for $k=2$ it is clear and  then it follows by induction using the simple
\begin{lemma}\label{jmf} If $\sigma\in S_k\setminus S_{k-1}$  then 
$\sigma=\tau (i,k)$  where $\sigma(i)=k,\ i<k$  and $\tau\in S_{k-1},\ |\sigma|=|\tau|+1$ (from Proposition \ref{crf}  2.).

\end{lemma}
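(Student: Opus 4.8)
The plan is to exhibit the factorization explicitly and then read off the length statement from Proposition~\ref{crf}. First observe that $\sigma\in S_k\setminus S_{k-1}$ means precisely that $\sigma$ does not fix the point $k$, i.e.\ $\sigma(k)\neq k$. Set $i:=\sigma^{-1}(k)$, so that $\sigma(i)=k$; since $\sigma(k)\neq k$ we cannot have $i=k$, hence $i<k$, which is the index claimed in the statement.

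Next, define $\tau:=\sigma\,(i,k)$. Since $(i,k)^2=\id$ this already gives $\sigma=\tau\,(i,k)$. To see $\tau\in S_{k-1}$ it suffices to check that $\tau$ fixes $k$: indeed $\tau(k)=\sigma\big((i,k)(k)\big)=\sigma(i)=k$, so $\tau$ permutes $\{1,\dots,k-1\}$ and lies in the standard copy of $S_{k-1}$ inside $S_k$.

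Finally, for the length count I would apply Proposition~\ref{crf}(2) with the transposition $(i,k)$ multiplying $\tau$: it gives $|\tau\,(i,k)|=|\tau|\pm 1$, with the value $|\tau|-1$ occurring exactly when $i$ and $k$ lie in a common cycle of $\tau$. But $k$ is a fixed point of $\tau$, that is, a cycle of length one, and $i\neq k$, so $i$ and $k$ belong to distinct cycles of $\tau$. Hence the alternative $|\tau|-1$ is excluded and $|\sigma|=|\tau\,(i,k)|=|\tau|+1$, as desired.

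There is essentially no serious obstacle here: the only point requiring a moment's care is to take $i=\sigma^{-1}(k)$ (rather than $\sigma(k)$) so that the correcting transposition $(i,k)$ really does return $k$ to a fixed point of $\tau$, and then to notice that the ``$+1$'' rather than ``$-1$'' branch of Proposition~\ref{crf}(2) is forced by $k$ being a singleton cycle of $\tau$ together with $i\neq k$.
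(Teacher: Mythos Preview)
Your proof is correct and is exactly the argument the paper intends: the paper does not give a separate proof of this lemma, treating it as immediate from Proposition~\ref{crf}(2), and you have simply spelled out the details of that inference. The only thing to note is that your write-up is more explicit than the paper's, which leaves the verification that $\tau$ fixes $k$ and the choice of the ``$+1$'' branch entirely to the reader.
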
 \begin{proof}[Proof of Formula  \eqref{JF}]
Remark that, if $\rho\in S_{k-1}$,  the number of cycles of $\rho$,  thought of as element of $S_k$, is 1 more than  if   thought of as element of $S_{k-1} $ so,  by induction:
$$d\prod_{i=2}^k(d+J_i)=(\sum_{\rho\in S_{k-1}} d^{c(\rho)}\rho)(d+\sum_{i=1}^{k-1}(i,k))= (\sum_{\rho\in S_{k-1}\subset S_k} d^{c(\rho)}\rho)   + (\sum_{\rho\in S_k\setminus S_{k-1}  }d^{c(\rho)}\rho)$$
\begin{equation}\label{sodr}
 =\sum_{\rho\in   S_k} d^{c(\rho)}\rho=\sum_{j=1 }^k d^jC_j,\ C_j:=\sum_{\rho\in   S_k,\ c(\rho)=j}  \rho. 
\end{equation}
\end{proof}

Given this Novak observes that  in the Theory of symmetric functions, in the $k-1$  variables $x_2,\ldots,x_k$  we have
$$d\prod_{i=2}^k(d+x_i)=d^k+\sum_{i=1}^{k-1}d^{k-i}e_i(x_2,\ldots,x_k);\quad \prod_{i=2}^k(1-x_i)^{-1}=\sum_{j=0}^\infty h_j(x_2,\ldots,x_k)$$  where the $e_i(x_2,\ldots,x_k)$ are the elementary symmetric functions while  the $h_j(x_2,\ldots,x_k)$ are the {\em total symmetric functions}; that is  $h_j(x_2,\ldots,x_k)$ is the sum of all monomials  in  the variables $ x_2,\ldots,x_k $  of degree $j$. In particular   
$$c(\rho) =k-|\rho| \implies e_i(J_2 ,\ldots, J_k )=\sum_{\mu\vdash k\,\mid\,|\mu|= i} C_\mu.$$

Given this  one has  for $d\geq k$
$$(\sum_{\rho\in S_k} d^{c(\rho)}\rho)^{-1}=d^{-1}\prod_{i=2}^k(d+J_i)^{-1} =d^{-k}\sum_{j=0}^\infty h_j(-\frac{J_2}d,\ldots,-\frac{J_k}d) $$
\begin{equation}\label{finff}
=d^{-k}\sum_{j=0}^\infty \frac{(-1)^j}{d^j}h_j (J_2 ,\ldots, J_k )
\end{equation} a convergent   series for $d\geq k$.  This follows by remarking that setting
$$||\sum_\sigma a_\sigma  \sigma ||_\infty:=\max|a_\sigma|,\  ||AJ_i||_\infty\leq (k-1)||A ||_\infty $$
\begin{equation}\label{dise}
\implies || J_i^j||_\infty \leq (k-1)^j.
\end{equation}
This   series  in fact coincides with that given by Formula \eqref{exp}, but it is in many ways much better.

Observe that $h_j (J_2 ,\ldots, J_k )$  is a sum of permutations all with sign $(-1)^j$. Moreover since it is a   symmetric function                                                conjugate permutations appear with the same coefficient so it is a sum of $C_\mu$    for $\mu$  corresponding to   permutations of sign $(-1)^j$ with non negative integer coefficients. 
$$h_j (J_2 ,\ldots, J_k )=\sum_{\mu\vdash k\,\mid\, \epsilon(\mu)=(-1)^j} \alpha_{j,\mu} C_\mu,\ \alpha_{j,\mu}\in\mathbb N.$$ Split Formula \eqref{wg} as $$\sum_{\rho  \in S_k\mid \epsilon(\rho)=1 } W\!g(d,\rho)\rho
 =W\!g(d,k)_+;\quad \sum_{\rho  \in S_k\mid \epsilon(\rho)=-1 } W\!g(d,\rho)\rho
 =W\!g(d,k)_-$$ $$
\implies W\!g(d,k)_+= d^{-k}\sum_{j=0}^\infty \frac{1}{d^{2j}}h_{2j} (J_2 ,\ldots, J_k );$$\begin{equation}\label{Nov0}W\!g(d,k)_-= -d^{-k}\sum_{j=0}^\infty \frac{1}{d^{2j+1}}h_{2j+1} (J_2 ,\ldots, J_k ).
\end{equation}
\begin{theorem}\label{Nov}[Novak  \cite{No}]
$W\!g(d,\rho)>0$   if $\epsilon(\rho)=1$  and $W\!g(d,\rho)<0$   if $\epsilon(\rho)=-1$.
\end{theorem}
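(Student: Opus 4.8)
The plan is to extract the sign directly from Novak's expansion culminating in \eqref{Nov0}, using the fact recorded just above it that for every $j$
$$h_j(J_2,\ldots,J_k)=\sum_{\mu\vdash k,\ \epsilon(\mu)=(-1)^j}\alpha_{j,\mu}C_\mu,\qquad \alpha_{j,\mu}\in\NN .$$
Fix $\rho\in S_k$, put $\mu=\pi(\rho)$, so $\epsilon(\rho)=(-1)^{|\mu|}$. Since all the elements involved are class functions, the coefficient of $\rho$ in $h_j(J_2,\ldots,J_k)$ is exactly $\alpha_{j,\mu}$, and taking the coefficient of $\rho$ in the expansions of $W\!g(d,k)_\pm$ gives
\begin{equation*}
W\!g(d,\rho)=d^{-k}\sum_{j\ge 0}\frac{\alpha_{2j,\mu}}{d^{2j}}\quad(\epsilon(\rho)=1),\qquad
W\!g(d,\rho)=-\,d^{-k}\sum_{j\ge 0}\frac{\alpha_{2j+1,\mu}}{d^{2j+1}}\quad(\epsilon(\rho)=-1),
\end{equation*}
the series converging for $d\ge k$ as in \eqref{finff} (using \eqref{dise}). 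Because every $\alpha_{j,\mu}\ge 0$, the first series is $\ge 0$ and the second is $\le 0$; the real content of the theorem is to rule out vanishing, i.e. to produce, for each $\mu$, one strictly positive coefficient $\alpha_{j,\mu}$ with $j\equiv|\mu|\pmod{2}$.

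I would take $j=|\mu|$ and check three things. First, the index $|\mu|$ does appear in the relevant sum, since $\epsilon(\mu)=(-1)^{|\mu|}$. Second, it is the earliest possible nonzero term: every permutation occurring in $h_j(J_2,\ldots,J_k)$ is a product of $j$ transpositions, hence of length $\le j$ by subadditivity (Remark \ref{mintr}), so $\alpha_{j,\mu}=0$ whenever $j<|\mu|$. Third --- the heart of the matter --- $\alpha_{|\mu|,\mu}\ge 1$: write $h_{|\mu|}=\sum_{\lambda\vdash|\mu|}m_\lambda$ as a sum of monomial symmetric functions; the summand $\lambda=1^{|\mu|}$ is $m_{1^{|\mu|}}=e_{|\mu|}$, and on the Jucys--Murphy elements one has $e_{|\mu|}(J_2,\ldots,J_k)=\sum_{\nu\vdash k,\ |\nu|=|\mu|}C_\nu$ --- obtained by comparing powers of $d$ in Jucys's factorization \eqref{JF}, as explained before the theorem --- so $C_\mu$, and hence $\rho$, already occurs there with coefficient $1$; every other $m_\lambda(J_2,\ldots,J_k)$ is a nonnegative integer combination of permutations (each $J_i$ is a sum of transpositions, and a product of permutations is a permutation), so there is no cancellation and $\alpha_{|\mu|,\mu}\ge 1$.

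Assembling: for $\epsilon(\rho)=1$ one gets $W\!g(d,\rho)\ge d^{-k-|\mu|}\alpha_{|\mu|,\mu}>0$, and for $\epsilon(\rho)=-1$ one gets $W\!g(d,\rho)\le -d^{-k-|\mu|}\alpha_{|\mu|,\mu}<0$, which is the claim. The step I expect to be the main obstacle --- the rest being bookkeeping around \eqref{Nov0} --- is exactly the strict positivity $\alpha_{|\mu|,\mu}\ge 1$: nonnegativity of Novak's series only delivers $W\!g(d,\rho)\ge 0$ (resp. $\le 0$), and to sharpen it one must pin down a genuine nonzero contribution to the \emph{individual} conjugacy class $C_\mu$; that is where the $e_{|\mu|}$-part of $h_{|\mu|}$ together with Jucys's identity comes in, shielded from cancellation by the nonnegativity of all the other monomial-symmetric terms.
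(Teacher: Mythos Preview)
Your proof is correct and follows the same skeleton as the paper's: read off the coefficient of $\rho$ from the split expansion \eqref{Nov0}, use nonnegativity of the $\alpha_{j,\mu}$ to get the weak inequality, and then secure strictness by showing the leading coefficient $\alpha_{|\mu|,\mu}$ is nonzero.

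The one genuine difference is in how that last point is established. The paper appeals to Remark~\ref{mintr1} and identifies $\alpha_{|\mu|,\mu}$ (up to the sign $(-1)^{|\mu|}$) with $C[\mu]$, whose nonvanishing comes from Collins' Theorem~\ref{tcoll2}: $C[\mu]=\prod_j(-1)^{a_j-1}\mathtt C_{a_j-1}$, so in fact $\alpha_{|\mu|,\mu}=\prod_j\mathtt C_{a_j-1}$. Your argument instead decomposes $h_{|\mu|}=\sum_{\lambda\vdash|\mu|}m_\lambda$, isolates the summand $m_{1^{|\mu|}}=e_{|\mu|}$, and uses the identity $e_i(J_2,\ldots,J_k)=\sum_{|\nu|=i}C_\nu$ (already recorded in the paper, a direct consequence of Jucys's factorization \eqref{JF}) to exhibit $C_\mu$ with coefficient $1$; the remaining $m_\lambda$'s, being products of sums of transpositions, contribute only nonnegatively. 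This is more elementary: it bypasses Theorem~\ref{tcoll2} entirely and needs only \eqref{JF}. The price is a weaker bound, $\alpha_{|\mu|,\mu}\ge 1$ rather than the exact Catalan product, but for the sign theorem that is all one needs.
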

\begin{proof}
Let us give the argument for $\rho$ even  and $\pi(\rho)=\mu$. By Remark \ref{mintr1}: 
$$
 W\!g( d,\rho ) =W\!g( d,\mu) = d^{-k}\sum_{j=0}^\infty \frac{1}{d^{2j}}\alpha_{2j,\mu}=d^{-k-|\mu|}\sum_{j=0}^\infty \frac{1}{d^{2j}}\alpha_{2j+|\mu|,\mu}$$
the series $\sum_{j=0}^\infty \frac{1}{d^{2j}}\alpha_{2j+|\mu|,\mu}$ has the initial  term $\alpha_{ |\mu|,\mu}=C[ \mu ]$ and all positive terms so  $
 W\!g(\mu,d )\geq d^{-k-|\mu|}C[ \mu ]$.\end{proof}

\paragraph{Inequalities}
Let us describe some inequalities satisfied by  the function $W\!g(\sigma,d)$, let us write   for given $k,d$ by $W\!g(d,k)=\sum_\sigma W\!g(d,\sigma )\sigma =\Phi(1)^{-1}.$ From Formula \eqref{cdro1} since $ht(\lambda)\leq d$ we have 
$r _{\lambda}(d)=\prod_{u\in\lambda}(d+c_u)>0$. So $P$ and $P^{-1}=W\!g(d,k)$  are both positive symmetric operators.  We start with
\begin{proposition}\label{inq}
\begin{equation}\label{trss}
W\!g(\sigma,d)=tr(\sigma^{-1} W\!g(d,k)^2).
\end{equation}
\end{proposition}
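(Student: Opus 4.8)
The plan is to recognize $W\!g(d,k)$ as a symmetric positive operator on $V^{\otimes k}$ and to read off $W\!g(\sigma,d)$ as a trace via the pairing of Proposition~\ref{lmuin} together with the idempotent expansion. Recall from Formula~\eqref{wg} that $W\!g(d,k)=(\sum_{\rho}d^{c(\rho)}\rho)^{-1}=\sum_\rho W\!g(d,\rho)\rho$, and from Corollary~\ref{invp} and Formula~\eqref{rmene} that in $\Sigma_k(V)$ we have the spectral decomposition $W\!g(d,k)=\sum_{\lambda}(\prod_{u\in\lambda}(d+c_u))^{-1}e_\lambda$, where the $e_\lambda$ are orthogonal central idempotents. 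Since each $r_\lambda(d)=\prod_{u\in\lambda}(d+c_u)>0$ for $ht(\lambda)\le d$ (as noted just before the statement), $W\!g(d,k)$ is a positive symmetric operator; in particular it has a genuine square root $W\!g(d,k)^{1/2}=\sum_\lambda r_\lambda(d)^{-1/2}e_\lambda$, and $W\!g(d,k)^2=\sum_\lambda r_\lambda(d)^{-2}e_\lambda$ lies again in the center of $\Sigma_k(V)$.

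First I would write $W\!g(d,k)^2=\sum_\tau b_\tau\,\tau$ for scalars $b_\tau$ and compute, using the pairing formula $\langle\sigma\mid\tau\rangle:=tr(\sigma^{-1}\tau)=d^{c(\sigma^{-1}\tau)}$ from \S\ref{swd} (cf. Proposition~\ref{carro}), that
\begin{equation*}
tr(\sigma^{-1}W\!g(d,k)^2)=\sum_\tau b_\tau\,tr(\sigma^{-1}\tau)=\sum_\tau b_\tau\,d^{c(\sigma^{-1}\tau)}=\Big\langle\,\sigma\ \Big|\ P\cdot W\!g(d,k)^2\,\Big\rangle_{?}
\end{equation*}
— more cleanly, I would use that the trace form identifies $\Sigma_k(V)$ with its dual so that for any $A=\sum_\tau b_\tau\tau$ one has $tr(\sigma^{-1}A)=\langle \sigma\mid A\rangle$, and then observe that multiplying by $P$ recovers coefficients. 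The cleanest route: since $W\!g(d,k)=P^{-1}$, we have $P\cdot W\!g(d,k)^2=W\!g(d,k)$, hence $tr(\sigma^{-1}W\!g(d,k)^2)$ is the coefficient extraction of $\sigma$ from $W\!g(d,k)=\sum_\rho W\!g(d,\rho)\rho$ against the trace form. The key identity is that for $A=\sum_\rho a_\rho\rho\in F[S_k]$, its image in $\Sigma_k(V)$ satisfies $tr(\sigma^{-1}A)=\sum_\rho a_\rho d^{c(\sigma^{-1}\rho)}=(PA)_\sigma$, the coefficient of $\sigma$ in the product $PA$ computed in the group algebra — provided $d\ge k$ so that $\Sigma_k(V)\cong F[S_k]$ and no collapsing occurs. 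Applying this with $A=W\!g(d,k)^2$ gives $tr(\sigma^{-1}W\!g(d,k)^2)=(P\cdot W\!g(d,k)^2)_\sigma=(W\!g(d,k))_\sigma=W\!g(d,\sigma)$, which is exactly \eqref{trss}.

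Alternatively, and perhaps more transparently, I would argue entirely through the idempotents: write $e_\lambda=\sum_\sigma (e_\lambda)_\sigma\sigma$ with $(e_\lambda)_\sigma=\frac{\chi_\lambda(1)}{k!}\chi_\lambda(\sigma)$ by \eqref{LeDu}, note the orthogonality $e_\lambda e_\mu=\delta_{\lambda\mu}e_\lambda$ and the trace value $tr(\sigma^{-1}e_\lambda)=s_\lambda(d)\chi_\lambda(\sigma^{-1})\cdot(\text{const})$ coming from the Schur--Weyl decomposition \eqref{swd} (namely $tr_{V^{\otimes k}}(\sigma^{-1}e_\lambda)=s_\lambda(d)\chi_\lambda(\sigma^{-1})$, since $e_\lambda$ projects onto $M_\lambda\otimes S_\lambda(V)$ and $\sigma^{-1}$ acts as $\chi_\lambda(\sigma^{-1})$-trace on $M_\lambda$ times identity-trace $s_\lambda(d)$ on $S_\lambda(V)$). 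Then
\begin{equation*}
tr(\sigma^{-1}W\!g(d,k)^2)=\sum_\lambda r_\lambda(d)^{-2}\,tr(\sigma^{-1}e_\lambda)=\sum_\lambda \frac{s_\lambda(d)\chi_\lambda(\sigma^{-1})}{r_\lambda(d)^2}=\sum_\lambda\frac{\chi_\lambda(1)^2\chi_\lambda(\sigma)}{k!^2 s_\lambda(d)},
\end{equation*}
where the last equality uses $r_\lambda(d)=k!\,s_\lambda(d)/\chi_\lambda(1)$ from Proposition~\ref{cdro} and $\chi_\lambda(\sigma^{-1})=\chi_\lambda(\sigma)$, and this is precisely the right-hand side of \eqref{rmene2}, i.e. $W\!g(d,\sigma)$.

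The main obstacle is bookkeeping the normalizing constants correctly in $tr(\sigma^{-1}e_\lambda)$: one must be careful that the trace here is the trace on $V^{\otimes k}$ (dimension $d^k$), so that $tr(e_\lambda)=\chi_\lambda(1)s_\lambda(d)=(\dim M_\lambda)(\dim S_\lambda(V))$, and more generally $tr(\sigma^{-1}e_\lambda)=\chi_\lambda(\sigma^{-1})s_\lambda(d)$ — this is where \eqref{swd} is essential and where a stray factor of $\chi_\lambda(1)$ could creep in. Everything else is the formal manipulation of orthogonal idempotents already assembled in Corollary~\ref{invp} and Theorem~\ref{wcom}. I would present the idempotent computation as the proof, since it makes \eqref{trss} an immediate corollary of \eqref{rmene2} combined with Schur--Weyl; the claim that $W\!g(d,k)^2$ is again positive symmetric is then automatic from the spectral form.
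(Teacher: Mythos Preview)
Your proposal is correct, and your first (``cleanest route'') argument is exactly the paper's proof, just without naming the map. The paper packages the identity $tr(\sigma^{-1}A)=(PA)_\sigma$ for $A\in\Sigma_k(V)$ via the linear map $\Phi(A):=\sum_\sigma tr(\sigma^{-1}A)\sigma$ of Formula~\eqref{ilphi}, observes (Formula~\eqref{mophi}) that $\Phi$ restricted to $\Sigma_k(V)$ is multiplication by $\Phi(1)=P$, and then writes the entire proof in one line:
\[
\sum_\sigma tr(\sigma^{-1}W\!g(d,k)^2)\sigma=\Phi(W\!g(d,k)^2)=\Phi(1)\cdot\Phi(1)^{-2}=\Phi(1)^{-1}=W\!g(d,k).
\]
Your second route through the idempotents and the Schur--Weyl trace $tr(\sigma^{-1}e_\lambda)=s_\lambda(d)\chi_\lambda(\sigma)$ is also correct and has the virtue of being self-contained at this point of the paper (the map $\Phi$ is only formally introduced later, in \S\ref{coo}), at the cost of redoing the computation behind \eqref{rmene2}; but it is longer than necessary given that the $\Phi$-identity $\Phi(\tau)=\Phi(1)\tau$ already does all the work.
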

\begin{proof}
$$\sum_\sigma tr(\sigma^{-1} W\!g(d,k)^2)\sigma= \Phi(W\!g(d,k)^2)=\Phi(1)\Phi(1)^{-2}=\Phi(1)^{-1}.$$
\end{proof}
 
Now in the space $V=\mathbb R^d$  consider the usual scalar product under which the basis $e_i$ is orthonormal. Remark that in the algebra of operators  $\Sigma_k(V)$  we have, for $\sigma\in S_k$  that the transpose of  $\sigma$ is $\sigma^{-1}$, by Formula \eqref{tras}.
\begin{equation}\label{tras}
(u_1\otimes\cdots\otimes u_k,\sigma\circ  v_1\otimes\cdots\otimes v_k)=\prod_{i=1}^k (u_i,v_{\sigma^{-1}(i)}=\prod_{i=1}^k (\sigma (u_i),v_{i}) .\end{equation}
Next  we have that $W\!g(d,k)$  and $W\!g(d,k)^2$ are  positive symmetric operators.

In the algebra $\Sigma_k(V)$, a sum of matrix algebras over $\mathbb R$, the nonnegative symmetric elements are of the form  $aa^t,\ a\in \Sigma_k(V)$ so that we have  
\begin{proposition}\label{inq1}
\begin{equation}\label{trss1}
 tr(aa^t W\!g(d,k)^2)\geq 0,\ \forall a\in \Sigma_k(V).\end{equation}
 This implies that, given any element $0\neq  \sum_{\sigma\in S_k}a_\sigma$ setting
 $$ \sum_{\sigma\in S_k}b_\sigma\sigma:=( \sum_{\gamma\in S_k}a_\gamma \gamma)( \sum_{\tau \in S_k}a_\tau\tau^{-1}),\ b_\sigma=\sum_{\gamma,\tau\mid\gamma\tau^{-1}=\sigma}a_\gamma a_\tau$$
 $$\implies  \sum_{\sigma\in S_k}b_\sigma W\!g(\sigma,d)> 0,\ $$
\end{proposition}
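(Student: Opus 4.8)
The plan is to identify $\sum_{\sigma}b_\sigma W\!g(\sigma,d)$ with a Hilbert--Schmidt norm and then invoke the invertibility of $W\!g(d,k)$. First I would record that, by Formula \eqref{tras}, the transpose of a permutation $\gamma\in S_k$ acting on $V=\mathbb R^d$ is $\gamma^{-1}$; consequently, writing $a:=\sum_\gamma a_\gamma\gamma\in\Sigma_k(V)$ (this is the element the statement calls $\sum a_\sigma\sigma$, assumed nonzero as an operator), one has $a^t=\sum_\gamma a_\gamma\gamma^{-1}$, so that
$$aa^t=\Big(\sum_\gamma a_\gamma\gamma\Big)\Big(\sum_\tau a_\tau\tau^{-1}\Big)=\sum_\sigma b_\sigma\sigma,\qquad b_\sigma=\!\!\sum_{\gamma\tau^{-1}=\sigma}\!\!a_\gamma a_\tau,$$
which is precisely the element $\sum_\sigma b_\sigma\sigma$ of the statement; moreover it is symmetric, so $b_{\sigma^{-1}}=b_\sigma$.

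Next I would apply Proposition \ref{inq} to each $\sigma$ and use linearity of the trace:
$$\sum_\sigma b_\sigma W\!g(\sigma,d)=\sum_\sigma b_\sigma\,tr\big(\sigma^{-1}W\!g(d,k)^2\big)=tr\Big(\big(\sum_\sigma b_\sigma\sigma^{-1}\big)W\!g(d,k)^2\Big)=tr\big(aa^t\,W\!g(d,k)^2\big),$$
the last equality holding because $\sum_\sigma b_\sigma\sigma^{-1}=(aa^t)^t=aa^t$. This reduces everything to the single inequality $tr(aa^tW\!g(d,k)^2)>0$ for $a\neq 0$, which is exactly the weak assertion $tr(aa^tW\!g(d,k)^2)\ge 0$ strengthened.

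To finish I would use that $W\!g(d,k)=P^{-1}$, with $P=\sum_\rho d^{c(\rho)}\rho$ (Corollary \ref{invp}), is a positive symmetric operator on $V^{\otimes k}$: it is symmetric because $c(\rho)=c(\rho^{-1})$ together with Formula \eqref{tras}, and positive because on the isotypic block $M_\lambda\otimes S_\lambda(V)$ it acts by the scalar $r_\lambda(d)=\prod_{u\in\lambda}(d+c_u)$, which is strictly positive whenever $ht(\lambda)\le d$ (each factor is $\ge d-ht(\lambda)+1\ge 1$); in particular $W\!g(d,k)$ is invertible in $\Sigma_k(V)\subset End(V^{\otimes k})$. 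Setting $N:=W\!g(d,k)=N^t$ and writing $\|\cdot\|$ for the (positive definite) Hilbert--Schmidt norm on $End(V^{\otimes k})$, cyclicity of the trace gives
$$tr\big(aa^tN^2\big)=tr\big((Na)^t(Na)\big)=\|Na\|^2\ge 0,$$
which is the weak inequality; and since $N$ is invertible and $a\neq 0$ we get $Na\neq 0$, hence $\|Na\|^2>0$, which is the strict one.

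The one step needing care — and which I regard as the real content — is the passage from $\ge 0$ to $>0$: it hinges on $W\!g(d,k)$ being a genuine invertible operator on $V^{\otimes k}$ (so that $Na=0$ forces $a=0$), which is Corollary \ref{invp} combined with the positivity of all the $r_\lambda(d)$, and on ``$a\neq 0$'' being understood in $\Sigma_k(V)$ — automatic when $d\ge k$, where $\Sigma_k(V)\cong F[S_k]$.
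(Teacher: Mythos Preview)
Your argument is correct and follows the same line the paper sketches just before the proposition: identify $\sum_\sigma b_\sigma\sigma$ with $aa^t$, invoke Proposition~\ref{inq} to rewrite $\sum_\sigma b_\sigma W\!g(\sigma,d)$ as $tr(aa^t\,W\!g(d,k)^2)$, and then use that $W\!g(d,k)$ is a positive invertible symmetric operator. Your Hilbert--Schmidt rewriting $tr(aa^tN^2)=\|Na\|^2$ and your remark that ``$a\neq 0$'' must be read in $\Sigma_k(V)$ (automatic only when $d\ge k$) make explicit two points the paper leaves to the reader.
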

\begin{example}\label{inn}
$(1\pm \sigma)(1\pm\sigma^{-1})=2\pm(\sigma +\sigma^{-1})$ gives  
$$ W\!g(1,d)>W\!g(\sigma,d)>-W\!g(1,d),\ \forall \sigma\neq 1.$$

 \end{example}

\subsection{The algebra $(\bigwedge M_d^*)^G$}

Preliminary to the next step we need to recall  the theory of  antisymmetric  conjugation   invariant functions on $M_d$. This is a classical theory over a field of characteristic 0 which one may take as $\mathbb Q$. 

First, let $U$ be a vector space. A polynomial  $g(x_1,\ldots,x_{m})$  in $m$ variables   $ x_i\in U$  is {\em  antisymmetric} or {\em alternating} in the variables $X:=\{  x_1,\ldots,x_{m}\}$ if for all permutations $\sigma\in S_m$ we have
$$g(x_{\sigma(1)},\ldots,x_{\sigma(m)})=\epsilon_\sigma g(x_1,\ldots,x_{m}),\ \epsilon_\sigma\ \text{the sign of}\ \sigma. $$
A simple way of forming  an antisymmetric  polynomial  from a given one  $g(x_1,\ldots,x_{m})$ is the process of {\em alternation}\footnote{we avoid on purpose multiplying by $1/m!$}
\begin{equation}\label{alta}
Alt_Xg(x_1,\ldots,x_{m}):=\sum_{\sigma\in S_m} \epsilon_\sigma g(x_{\sigma(1)},\ldots,x_{\sigma(m)}).
\end{equation}  Recall that the exterior algebra $\bigwedge U^*$, with $U$ a vector space,   can be thought of as the space   of multilinear  alternating functions on $U$. Then exterior multiplication   as functions is given by the Formula:
 $$f(x_1,\ldots,x_h)\in\bigwedge^h U^* ;\quad g(x_1,\ldots,x_k)\in \bigwedge^k U^*,$$
 \begin{equation} 
\ f\wedge g(x_1,\ldots , x_{h+k})=\frac{1}{h!k!}\sum_{\sigma\in S_{h+k}}f(x_{\sigma(1)},\ldots,x_{\sigma(h)})  g(x_{\sigma(h+1)},\ldots,x_{\sigma(h+k)}) \end{equation} \begin{equation}\label{ep}
=\frac{1}{h!k!}Alt_{x_1,\ldots,x_{h+k}} f(x_1,\ldots,x_h) g(x_{ h+1 },\ldots,x_{ h+k })\in \bigwedge^{h+k} U^*.
\end{equation}
  It is well known that:
\begin{proposition}\label{ilmulti}
A  multilinear and antisymmetric  polynomial  $g(x_1,\ldots,x_{m})$ in $m$ variables   $ x_i\in \mathbb C^m$ is a multiple,    $a\det(x_1,\ldots,x_{m}) $, of the determinant.

In fact  if the polynomial has integer coefficients $a\in\mathbb Z$.
\end{proposition}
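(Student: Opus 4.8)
The plan is to reduce the evaluation of $g$ to its values on tuples of standard basis vectors, using multilinearity, and then to read off the answer from the Leibniz expansion of the determinant. Fix the standard basis $e_1,\ldots,e_m$ of $\mathbb C^m$ and write each argument in coordinates, $x_i=\sum_{j=1}^m x_{ij}e_j$. First I would expand by multilinearity:
\[
g(x_1,\ldots,x_m)=\sum_{j_1,\ldots,j_m=1}^{m} x_{1j_1}x_{2j_2}\cdots x_{mj_m}\; g(e_{j_1},e_{j_2},\ldots,e_{j_m}).
\]

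Next, antisymmetry eliminates all the ``degenerate'' terms: if two indices $j_p=j_q$ coincide, then transposing the $p$-th and $q$-th arguments of $g$ leaves the tuple $(e_{j_1},\ldots,e_{j_m})$ unchanged while multiplying the value by $-1$, so $g(e_{j_1},\ldots,e_{j_m})=0$. Hence only those $(j_1,\ldots,j_m)$ that are permutations of $(1,\ldots,m)$ survive; writing $j_i=\sigma(i)$ for $\sigma\in S_m$ and applying antisymmetry once more gives $g(e_{\sigma(1)},\ldots,e_{\sigma(m)})=\epsilon_\sigma\,g(e_1,\ldots,e_m)$. Putting $a:=g(e_1,\ldots,e_m)$, the two displays combine into
\[
g(x_1,\ldots,x_m)=a\sum_{\sigma\in S_m}\epsilon_\sigma\, x_{1\sigma(1)}x_{2\sigma(2)}\cdots x_{m\sigma(m)}=a\det(x_1,\ldots,x_m),
\]
the last equality being precisely the Leibniz formula for the determinant of the matrix whose rows are the coordinate vectors of $x_1,\ldots,x_m$.

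For the integrality statement, note that $a=g(e_1,\ldots,e_m)$ is the value of a polynomial with integer coefficients evaluated at the integer vectors $e_1,\ldots,e_m$, hence $a\in\mathbb Z$. There is essentially no obstacle in this argument; the only points that call for a moment's care are the vanishing of the repeated-index terms (which uses that the characteristic is zero, so that $v=-v$ forces $v=0$) and fixing a rows-versus-columns convention so that the sign $\epsilon_\sigma$ in the Leibniz expansion comes out with the correct orientation.
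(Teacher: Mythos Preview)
Your argument is correct and is exactly the standard proof one would expect. The paper does not actually supply a proof of this proposition---it is introduced with ``It is well known that:'' and left unproved---so there is nothing to compare; your write-up fills in precisely the expected details.
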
  For a multilinear and antisymmetric  polynomial map $g(x_1,\ldots,x_{m})\in U$ to a vector space, each coordinate has the same property so $$g(x_1,\ldots,x_{m})= \det(x_1,\ldots,x_{m})a,\ a\in U.$$
 We apply this to $U=M_d$. Let us identify $M_d=\mathbb C^{d^2}$ using the canonical basis  of  elementary matrices $e_{i,j}$  ordered   lexicographically e.g.:
$$d=2,\quad e_{1,1},\  e_{1,2},\  e_{2,1},\  e_{2,2}. $$

Given $d^2$ matrices  $ Y_1,\ldots, Y_{d^2}\in M_d$  we may consider them as elements of $\mathbb C^{d^2}$ and then form the determinant $\det( Y_1,\ldots, Y_{d^2})$. 

\medskip

By  Proposition \ref{ilmulti} the 1 dimensional  space $\bigwedge^{d^2 } M_d^*$  has as generator the determinant $\det( Y_1,\ldots, Y_{d^2})$ which, since the conjugation action by $G:=GL(d,\mathbb Q)$ on $M_d$ is by transformations of determinant 1,  is thus an invariant  under the action by $G  $. 

The theory of $G$  invariant antisymmetric  multilinear  $G$  invariant functions on $M_d$ is well known and related to the cohomology of $G$.

The  antisymmetric  multilinear  $G$  invariant functions on $M_d$ form the algebra $(\bigwedge M_d^*)^G$. This is a subalgebra of the exterior algebra $ \bigwedge M_d^*$  and can be identified to the cohomology of the unitary group. As all such cohomology algebras it is a Hopf algebra and by Hopf's Theorem it is the exterior algebra generated by the  primitive elements.

The  primitive elements  of   $(\bigwedge M_d^* )^G$ are, see \cite{kostant}:
   \begin{equation}\label{prime}
T_{2i-1}=T_{2i-1}(Y_1,\ldots,Y_{2i-1}):=tr (St_{2i-1}(Y_1,\ldots,Y_{2i-1}))
\end{equation}$$St_{2i-1}(Y_1,\ldots,Y_{2i-1})= \sum_{\sigma\in S_{2i-1} }\epsilon_\sigma Y_{\sigma(1)} \ldots Y_{\sigma(2i-1)} $$ with $i=1,\ldots,d$.
In particular, since these elements generate an exterior algebra we have: 
\begin{remark}\label{extal}
A product of   elements $T_i$ is non zero if and only if the $T_i$ involved are all distinct, and then it depends on the order only up to a sign. 
\end{remark}The $2^n$ different products form a basis of  $(\bigwedge M_d^* )^G$.   The  non zero product of all these elements $T_{2i-1}(Y_1,\ldots,Y_{2i-1})$ is in dimension $d^2$.   We denote 
 \begin{equation}\label{ILTD}
\mathcal T_d(Y_1,Y_2,\ldots,Y_{d^2})=T_1\wedge T_3\wedge T_5\wedge \cdots \wedge T_{2d-1}.
\end{equation} 
\begin{proposition}\label{mmul}
A multilinear  antisymmetric function of  $Y_1,\ldots,Y_{d^2 }$   is a multiple of $T_1\wedge T_3\wedge T_5\wedge \cdots \wedge T_{2d-1}$.

\end{proposition}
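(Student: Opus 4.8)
The plan is to reduce the statement to the one-dimensionality of the top exterior power $\bigwedge^{d^2}M_d^*$, together with the non-vanishing of the product $T_1\wedge T_3\wedge\cdots\wedge T_{2d-1}$. First I would observe that a multilinear antisymmetric (scalar) function of $Y_1,\ldots,Y_{d^2}\in M_d$ is by definition an element of $\bigwedge^{d^2}M_d^*$; after the identification $M_d=\CC^{d^2}$ via the basis of elementary matrices, Proposition \ref{ilmulti} applied with $m=d^2$ shows this space is spanned by the single function $\det(Y_1,\ldots,Y_{d^2})$, hence is one-dimensional (and the vector-valued case reduces to this coordinatewise, as in the remark following Proposition \ref{ilmulti}). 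So it suffices to check that $T_1\wedge T_3\wedge\cdots\wedge T_{2d-1}$ is a \emph{nonzero} element of $\bigwedge^{d^2}M_d^*$: it will then span this one-dimensional space, and every multilinear antisymmetric function of the $Y_i$ will be a scalar multiple of it.

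That the product does lie in $\bigwedge^{d^2}M_d^*$ is a mere degree count: $T_{2i-1}\in\bigwedge^{2i-1}M_d^*$ and $\sum_{i=1}^{d}(2i-1)=d^2$. The non-vanishing is where the structure theory already recalled does the work: since $(\bigwedge M_d^*)^G$ is, by Hopf's theorem, the exterior algebra on the $d$ distinct primitive generators $T_1,T_3,\ldots,T_{2d-1}$, Remark \ref{extal} gives at once that their product --- in which each generator occurs exactly once --- is nonzero. This finishes the proof.

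The only genuinely non-formal ingredient is thus the non-vanishing of $T_1\wedge\cdots\wedge T_{2d-1}$, i.e. the algebraic independence of the $d$ primitive forms; this is the step I would regard as the main obstacle, and everything else is bookkeeping. If one wished to bypass Hopf's theorem, the alternative would be to exhibit an explicit $d^2$-tuple of matrices on which the product evaluates to a nonzero scalar --- feeding the $j$-th block of $2j-1$ of the $Y_i$'s into $St_{2j-1}$ and choosing the blocks (from the $e_{i,j}$) so that each $St_{2j-1}$ collapses to a single elementary matrix of nonzero trace, in the spirit of Kostant \cite{kostant} --- but given what the excerpt has already set up the Hopf-algebra argument is the shortest route.
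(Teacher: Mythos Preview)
Your proposal is correct and matches the paper's approach: the paper does not supply a separate proof of this proposition, since it is meant as an immediate consequence of the material just established --- namely the one-dimensionality of $\bigwedge^{d^2}M_d^*$ (Proposition~\ref{ilmulti}) and the non-vanishing of $T_1\wedge T_3\wedge\cdots\wedge T_{2d-1}$ (Remark~\ref{extal}). Your write-up makes this implicit reasoning explicit and identifies correctly that the only substantive input is the Hopf-algebra fact that the primitive generators are exterior-algebra independent.
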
 
\begin{remark}\label{Iin} The function $\det( Y_1,\ldots, Y_{d^2})$  is an invariant of matrices so it must have an expression as in Formula \eqref{phis}.  In fact up to a computable integer constant \cite{formanek2} this equals  the exterior product of Formula \eqref{ILTD}.\end{remark}
The constant of the change of basis when we take as basis the matrix units can be computed up to a sign, see \cite{formanek2}:
\begin{equation}\label{costdis}\mathcal T_d
(Y)=\mathcal  C_d\det( Y_1,\ldots, Y_{d^2}),\quad 
\mathcal C_d:=\pm \frac{1!3!5!\cdots (2d-1)!}{1!2!\cdots (d-1)!}.
\end{equation}

  \section{Comparing Formanek, \cite{formanek2} and Collins \cite{Coll}}
Rather than following the historical  route we shall first discuss the paper of Collins, since this will allow us to introduce some notations  useful for the discussion of Formanek's results.

\subsection{The work of Collins\label{coo}}
In the  paper  \cite{Coll}, Collins   introduces  the Weingarten function in the following context. He is interested in computing integrals of the form
 \begin{equation}\label{intr}
\int_{U(d)}\prod_{\ell=1}^{k_1}u_{j_\ell,h_\ell}\prod_{m=1}^{k_2}\bar u_{i_m,p_m} du
\end{equation}
where $U(d)$  is the unitary group of $d\times d$ matrices and the elements  $u_{i,j}$  the  entries of a matrix $X\in U(d)$ while     $\bar u_{j,i}$  the  entries of  $X^{-1}=U^*=\bar U^t$. Here  $du$ is the normalized Haar measure. If one translates  by a scalar matrix $\alpha,\ |\alpha|=1$  then the integrand is multiplied by $\alpha^{k_1}\bar\alpha^{k_2}$, on the other hand Haar measure is invariant under multiplication so that this integral vanishes unless  we have $k_1=k_2$. In this case  the computation will be algebraic based on the following considerations.\smallskip

Let us first make some general remarks.  A finite dimensional  representation $R$  of a compact group $G$ (with the dual denoted by $R^*$), decomposes  into the direct sum of irreducible representations.  In particular  if $R^G$  denotes the subspace of $G$  invariant vectors there is a canonical $G$ equivariant projection $E:R\to R^G$. The projection $E$  can be written as integral 
\begin{equation}\label{inth}
E(v):=\int_G g\cdot v\,dg,\quad dg\quad \text{normalized Haar measure.}
\end{equation}  In turn the integral $ E( v)=\int_G g\cdot v\,dg$  is defined in dual coordinates by
\begin{equation}\label{info}
\langle\phi\mid E( v)\rangle=\langle\phi\mid \int_G g\cdot v\,dg\rangle:= \int_G \langle\phi\mid g\cdot v\rangle dg ,\ \forall \phi\in R^*.  
\end{equation}  The functions, of $g\in G$,   $ \langle\phi\mid g\cdot v\rangle,\  \phi\in R^*,\ v\in R$ are called {\em representative functions}; therefore an explicit formula for $E$ is equivalent to the knowledge of integration of  representative functions. In fact  usually the integral is computed by some algebraic method of computation of $E$.

Consider the space $V=\mathbb C^d$ with natural basis $e_i$ and dual basis $e^j$. 

We take  $R=End(V)$  with the conjugation action of  $GL(V)$ or of  its compact subgroup  $U(d)$  of unitary $d\times d$ matrices:
$$Xe_{h,p}X^{-1}=\sum_{i,j}u_{i,h}\bar u_{j,p}e_{i,j},\quad X=\sum_{i,j}u_{i,j} e_{i,j} \in U(d),\ X^{-1}=\sum_{i,j}\bar u_{j,i}e_{i,j} . $$ A basis of representative functions for $R=End(V)$ is
\begin{equation}\label{reU}
tr(e_{i,j} Xe_{h,p}X^{-1})=tr(e_{i,j} \sum_{a,b}u_{a,h}\bar u_{b,p}e_{a,b})=   u_{j,h}\bar u_{i,p},\quad i,j,h,p=1,\ldots,d.
\end{equation}Since a duality between $End(V)^{\otimes k}$ and itself is the non degenerate pairing:$$ \langle A\mid B\rangle:=tr(A\cdot B)$$
 a basis of representative functions of $End(V)^{\otimes k}$ is formed  by the products
$$tr(e_{i_1,j_1}\otimes e_{i_2,j_2}\ldots \otimes e_{i_k,j_k}\cdot Xe_{h_1,p_1}X^{-1}\otimes Xe_{h_2,p_2}X^{-1}\ldots \otimes Xe_{h_k,p_k} X^{-1})=$$
\begin{equation}\label{reUk}
 tr\left(\mathtt e_{\underline i,\underline j} \cdot X \mathtt e_{\underline h,\underline p}X^{-1}\right) =\prod_{\ell=1}^ktr(e_{i_\ell,j_\ell}\cdot Xe_{h_\ell,p_\ell}X^{-1})=\prod_{\ell=1}^ku_{j_\ell,h_\ell}\bar u_{i_\ell,p_\ell},
\end{equation}
where in order to have compact notations we write  \begin{equation}\label{com}
\underline i:=(i_1,i_2,\ldots,i_k),\quad   \mathtt e_{\underline i,\underline j}=e_{i_1,j_1}\otimes e_{i_2,j_2}\ldots \otimes e_{i_k,j_k}.
\end{equation} \begin{equation}\label{com1}
    \mathtt u_{\underline a,\underline b}= \prod_{\ell=1}^ku_{a_\ell,b_\ell}.
\end{equation} Therefore every integral in Formula \eqref{intr}  for $k_1=k_2=k$  is the integral of a  representative function.

Of course the   expression of a representative function as $ tr\left(\mathtt e_{\underline i,\underline j} \cdot X\mathtt e_{\underline h,\underline p} X^{-1}\right)$     is not unique.\smallskip

Collins writes the explicit Formula \eqref{fina2} for $$
\int_{U(d)}\prod_{\ell=1}^ku_{j_\ell,h_\ell}\bar u_{i_\ell,p_\ell} du =\int_{U(d)}\mathtt u_{\underline j,\underline h}\bar{\mathtt u}_{\underline i,\underline p}\,du $$\begin{equation}\label{forint}=\int_{U(d)} tr\left(\mathtt e_{\underline i,\underline j} \cdot X \mathtt e_{\underline h,\underline p}X^{-1}\right)dX =  tr\left(\mathtt e_{\underline i,\underline j} \cdot E(\mathtt e_{\underline h,\underline p})\right)
\end{equation}

 In order to do this, it is enough to have an explicit formula for the equivariant projection $E$ of  $End(V)^{\otimes k}$  to the  $GL(V)$  (or $U(d)$) invariants $\Sigma_k(V)$,  the algebra generated by the permutation operators  $\sigma\in S_k$ acting on $V^{\otimes k}$.\smallskip
 
His idea is to consider first the map
\begin{equation}\label{ilphi}
\Phi:End(V)^{\otimes k}\to\Sigma_k(V),\quad \Phi(A):=\sum_\sigma tr(A\circ \sigma^{-1})\sigma.
\end{equation} This map is a  $GL(V)$ equivariant map  to $\Sigma_k(V),$ but it is not a projection. In fact  restricted to 
$\Sigma_k(V),$ we have
$$\Phi:\Sigma_k(V)\to\Sigma  _k(V),\quad \Phi(\tau):=\sum_{\sigma\in S_k} tr(\tau\circ \sigma^{-1})\sigma.$$ Setting $\quad \sigma=\gamma\tau,\quad \tau \sigma^{-1}=\gamma^{-1}$   we have:
 \begin{equation}\label{mophi}
\Phi(\tau)  =\sum_{\gamma\in S_k}  tr(\gamma^{-1})\gamma\,\tau =\Phi(1)\tau =\tau \Phi(1)=\tau \sum_{\gamma\in S_k}  tr(\gamma^{-1})\gamma.
\end{equation}We have seen, in Corollary \ref{invp}, that $$\Phi(1)=\sum_{\gamma\in S_k}  tr(\gamma^{-1})\gamma= \sum_{\gamma\in S_k}  d^{c(\gamma)}\gamma$$  is a central invertible  element of  $\Sigma_k(V)$.  So the equivariant projection $E$  is $\Phi$ composed with multiplication by the inverse $W\!g(d,k)$  of the element  $\Phi(1)=\sum_{\gamma\in S_k}  tr(\gamma^{-1})\gamma$ given by Formula \eqref{rmene2}  or \eqref{rmene}.
\begin{equation}\label{eqp}
E=(\sum_{\gamma\in S_k}  tr(\gamma^{-1})\gamma)^{-1}\circ \Phi=\Phi(1)^{-1}\circ\Phi=W\!g(d,k)\circ \Phi.
\end{equation}

Of course  
$$\Phi(\mathtt e_{\underline h,\underline p} )= \sum_\sigma tr(\mathtt e_{\underline h,\underline p}\circ \sigma^{-1})\sigma $$
$$\implies E(\mathtt e_{\underline h,\underline p})=\sum_{\gamma\in S_k} W\! g(d,\gamma)\gamma  \sum_\sigma tr(\mathtt e_{\underline h,\underline p}\circ \sigma^{-1})\sigma $$  and 
Formula \eqref{forint}  becomes
\begin{equation}\label{fina}
tr(\mathtt e_{\underline i,\underline j} \circ \sum_{\gamma\in S_k}  W\! g(d,\gamma)\gamma  \sum_\sigma tr(\mathtt e_{\underline h,\underline p}\circ \sigma^{-1})\sigma )
\end{equation}\begin{equation}\label{fina1}
=\sum_{\gamma,\sigma \in S_k}  tr(\mathtt e_{\underline i,\underline j}\circ\gamma  )  tr(\mathtt e_{\underline h,\underline p}\circ \sigma^{-1}) W\! g(d,\gamma\sigma^{-1})
\end{equation}
  
From Formulas \eqref{formuu0} and \eqref{formuu}  since $e_{i,j}=e_i\otimes e^j$  we have
$$tr(e_{i_1,j_1}\otimes e_{i_2,j_2}\ldots \otimes e_{i_k,j_k}\circ \gamma)=\prod_h\langle e_{i_{\gamma(h)} }\mid e^{j_h}\rangle=\prod_h\delta_{i_{\gamma(h)}}^{j_h}$$
$$\eqref{fina1}=\!\! \!\! \sum_{\gamma,\sigma \in S_k}\!\! \prod_\ell\delta_{i_{\gamma(\ell)}}^{j_\ell}\prod_\ell\delta_{h_{\ell}}^{p_\sigma(\ell)} W\! g(d,\gamma\sigma^{-1})$$ \begin{equation}\label{fina2}
\implies \int_{U(d)}\mathtt u_{\underline j,\underline h}\bar{\mathtt u}_{\underline i,\underline p}\,du = \boxed{\sum_{\gamma,\sigma \in S_k}  \delta_{\gamma(\underline i)}^{\underline j}\delta_{\underline h}^{\sigma(\underline  p)} W\! g(d,\gamma\sigma^{-1})}.
\end{equation}
\begin{remark}\label{part}
In particular for  $i_\ell  =h_\ell  =p_\ell  =\ell$  and $j_\ell=\tau(\ell),\ 1\leq\ell\leq k$, Formula  \eqref{fina2} gives $W\! g(d,\tau)$.
\end{remark}

Collins then goes several steps ahead since he is interested in the asymptotic  behaviour of this function as $d\to \infty$ and proves an asymptotic expression  for any $\sigma$ in term of its cycle decomposition, Theorem \ref{tcoll2}.\qed 
\subsection{Tensor polynomials}
In work in  progress with Felix Huber, \cite{HP},   we consider the problem of  understanding $k$--tensor valued polynomials of  $n$, $d\times d$  matrices. 

That is maps   from $n$ tuples of $d\times d$  matrices $x_1,\ldots,x_n\in End(V)$  to tensor space $End(V)^{\otimes k}$ of the form
$$G(x_1,\ldots,x_n)=\sum_i\alpha_i m_{1,i}\otimes m_{2,i}\otimes \ldots \otimes m_{k,i},\ \alpha_i\in\mathbb C\quad m_{j,i}\quad\text{monomials in the} \ x_i.$$

A particularly interesting case is when  the polynomial is  multilinear and alternating in  $n=d^2$ matrix variables. 

In this case, by Proposition \ref{ilmulti} we have
\begin{theorem}\label{dq}
\begin{enumerate}\item $$G(x_1,\ldots,x_{d^2})= \det(x_1,\ldots,x_{d^2})\bar J_G.$$
\item Moreover we have the explicit formula
$$G(e_{1,1},  e_{1,2}, e_{2,1}, e_{2,2}, \ldots, e_{d,d})=\bar J_G.$$

\item The element  $\bar J_G\in M_d^{\otimes k}$ is $GL(k)$ invariant and so  $\bar J_G\in \Sigma_k (V)$ is a linear combinations of the  elements  of the symmetric group $S_n\subset    M_d^{\otimes k}$ given by the permutations.  

 \end{enumerate}
\end{theorem}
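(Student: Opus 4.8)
The plan is to reduce the three assertions to Proposition~\ref{ilmulti} together with the identity $\Sigma_k(V)=\bigl(End(V)^{\otimes k}\bigr)^{GL(V)}$ recorded in the Schur--Weyl section. For parts (1) and (2), identify $M_d=End(V)$ with $\CC^{d^2}$ via the lexicographically ordered basis of elementary matrices $e_{i,j}$, exactly as in the discussion of the algebra $(\bigwedge M_d^*)^G$, and take $U=M_d^{\otimes k}=End(V)^{\otimes k}$ as target vector space. Since $G$ is multilinear and alternating in its $d^2$ matrix arguments and $\dim M_d=d^2$, the vector-valued form of Proposition~\ref{ilmulti} — the remark following its statement, applied coordinate by coordinate on $U$ — gives at once $G(x_1,\ldots,x_{d^2})=\det(x_1,\ldots,x_{d^2})\,\bar J_G$ for a unique $\bar J_G\in M_d^{\otimes k}$, where $\det$ is the $d^2\times d^2$ determinant of the coordinate columns. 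Evaluating at $x_1,\ldots,x_{d^2}$ equal to the matrix units in lexicographic order makes the coordinate matrix the identity, so $\det=1$ and part (2) follows.

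For part (3) the extra input is that, being assembled from monomials in the $x_i$, the map $G$ is equivariant for the conjugation action: for $g\in GL(V)$,
\[
G(gx_1g^{-1},\ldots,gx_{d^2}g^{-1})=g^{\otimes k}\,G(x_1,\ldots,x_{d^2})\,(g^{-1})^{\otimes k},
\]
since every monomial $m_{j,i}$ satisfies $m_{j,i}(gx_1g^{-1},\ldots)=g\,m_{j,i}(x_1,\ldots)\,g^{-1}$. On the other hand $X\mapsto gXg^{-1}$ is a linear automorphism of $M_d\cong\CC^{d^2}$ of determinant $\det(g)^d\det(g)^{-d}=1$, as observed in the discussion of $(\bigwedge M_d^*)^G$, so $\det(x_1,\ldots,x_{d^2})$ is unchanged when every argument is conjugated by $g$. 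Specialising the equivariance identity to the tuple of matrix units and using parts (1)--(2) on both sides together with this determinant-one fact yields $g^{\otimes k}\,\bar J_G\,(g^{-1})^{\otimes k}=\bar J_G$ for all $g\in GL(V)$. Hence $\bar J_G\in\bigl(End(V)^{\otimes k}\bigr)^{GL(V)}=\Sigma_k(V)$, which is spanned by the permutation operators $\sigma\in S_k$, giving (3).

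I do not expect a genuine obstacle. The one point that needs care is bookkeeping about which determinant is in play — the $d^2\times d^2$ Cayley determinant attached to $M_d\cong\CC^{d^2}$, not the $d\times d$ determinant of a matrix — together with the (standard) fact that conjugation acts on $M_d$ with determinant $1$. Once that is pinned down, (1)--(2) are Proposition~\ref{ilmulti} verbatim and (3) is the automatic $GL(V)$-equivariance of monomial maps combined with Schur--Weyl duality.
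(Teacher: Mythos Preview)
Your proposal is correct and follows exactly the route the paper indicates: the paper does not spell out a proof but simply prefaces the theorem with ``by Proposition~\ref{ilmulti} we have,'' relying on the vector-valued consequence of that proposition together with the Schur--Weyl identity $\Sigma_k(V)=\bigl(End(V)^{\otimes k}\bigr)^{GL(V)}$. Your write-up supplies precisely the details the paper leaves implicit, including the equivariance of monomial maps and the determinant-one fact for conjugation needed in part~(3).
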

For theoretical reasons instead of computing $\bar J_G$ it is better to compute its multiple, as in Formula \eqref{costdis}:
\begin{equation}\label{verJ}
G(x_1,\ldots,x_{d^2})= \mathcal T_d
(X) J_G,\quad \bar J_G=\mathcal C_d J_G.\end{equation}
 Using Formula  \eqref{ilphi} we may first compute
 $$\Phi(G(x_1,\ldots,x_{d^2}))=\sum_{\sigma\in S_k}tr(\sigma^{-1}\circ  G(x_1,\ldots,x_{d^2}))= \mathcal T_d
(X) \Phi(J_G).$$

Consider the special case   \begin{equation}\label{ilgz}
G_d(Y_1,\ldots,Y_{d^2}):=  Alt_Y(m_1(Y)\otimes \dots\otimes  m_d(Y)),\quad                   m_i(Y)=Y_{(i-1)^2+1}\ldots Y_{i^2}. 
\end{equation}
\begin{lemma}\label{immo}   
 
\begin{equation}\label{eese}Alt_Ytr(\sigma^{-1}\circ m_1(Y)\otimes \dots\otimes  m_d(Y)) =\begin{cases} \mathcal T_d(Y)\quad \text{if  }\  \ \ \sigma=1\\0\quad \text{otherwise}\end{cases} \end{equation} 
  \end{lemma}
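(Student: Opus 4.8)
The key observation is that alternation in the $d^2$ matrix variables $Y_1,\dots,Y_{d^2}$, grouped into the blocks $m_i(Y) = Y_{(i-1)^2+1}\cdots Y_{i^2}$ of sizes $1,3,5,\dots,2d-1$, forces the trace expression $\operatorname{tr}(\sigma^{-1}\circ m_1(Y)\otimes\cdots\otimes m_d(Y))$ to be computed block-by-block, and a single block can only survive alternation when it produces a standard polynomial $St_{2i-1}$. First I would expand, using Proposition~\ref{lmuin} and Formula~\eqref{phis}, the trace $\operatorname{tr}(\sigma^{-1}\circ m_1(Y)\otimes\cdots\otimes m_d(Y))$ as a product of traces of products of the $m_i(Y)$'s, where the grouping of the factors $m_1,\dots,m_d$ into traces is dictated by the cycle decomposition of $\sigma\in S_d$. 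Substituting $m_i(Y)=Y_{(i-1)^2+1}\cdots Y_{i^2}$, each such factor becomes a trace of a product of consecutive runs of the $Y_j$'s, the runs being permuted among themselves according to $\sigma$.

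Next I would apply $Alt_Y$ and argue that the result factors as a product over the cycles of $\sigma$ of the alternations of the individual trace factors over the disjoint sets of $Y$-variables they involve; this is legitimate because alternation over a set of variables that splits into disjoint blocks (each variable appearing in exactly one block of the product, since the $m_i$ have disjoint supports) is, up to the combinatorial factor hidden in $Alt$ (recall the footnote: no $1/m!$), the product of alternations over the blocks — precisely Formula~\eqref{ep} read backwards. So the problem reduces to: for a single cycle $(i_1\,i_2\,\cdots\,i_r)$ of $\sigma$, evaluate $Alt\bigl(\operatorname{tr}(m_{i_1}(Y)m_{i_2}(Y)\cdots m_{i_r}(Y))\bigr)$ where the alternation is over the disjoint union of variable-blocks of sizes $2i_1-1,\dots,2i_r-1$. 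Using cyclicity of trace this is the alternation of a single monomial $Y_{a_1}\cdots Y_{a_N}$ (a product of $N=\sum(2i_j-1)$ of the $Y$'s in some fixed order) over all $N$ of these variables, which by Proposition~\ref{ilmulti} (multilinear antisymmetric functions of matrices) is proportional to $\operatorname{tr}(St_N(Y_{a_1},\dots,Y_{a_N}))=T_N$, and $T_N\ne 0$ only if $N=2i-1$ for some $i\le d$ appearing in the exterior generators.

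The crux is then a parity/counting argument: $N=\sum_{j=1}^r(2i_j-1)=2\bigl(\sum i_j\bigr)-r$ must be odd, so $r$ must be odd for each cycle, and moreover all the $T_N$'s arising from the distinct cycles of $\sigma$ must be \emph{distinct} (else the product of exterior forms vanishes by Remark~\ref{extal}). Since $\sum$ over all cycles of $i_j$ equals $\sum_{i=1}^d i = \binom{d+1}{2}$ and the total degree is $d^2 = \sum_{i=1}^d(2i-1)$, a short argument — the only way to partition the set $\{1,3,5,\dots,2d-1\}$ of odd numbers summing to $d^2$ into sums $N_1,\dots,N_c$ with each $N_t$ itself an odd number in $\{1,3,\dots,2d-1\}$ and all $N_t$ distinct, is the trivial partition $c=d$, each $N_t=2t-1$, i.e.\ each cycle a fixed point — forces $\sigma=1$, in which case the value is exactly $T_1\wedge T_3\wedge\cdots\wedge T_{2d-1}=\mathcal T_d(Y)$ by Formula~\eqref{ILTD}. \textbf{The main obstacle} I expect is the bookkeeping of the combinatorial constant: since $Alt$ is normalized without $1/m!$ and wedge products in Formula~\eqref{ep} carry factors $1/(h!k!)$, one must check the constants match so that the $\sigma=1$ value comes out to be exactly $\mathcal T_d(Y)$ and not a scalar multiple of it — this requires carefully tracking how $Alt$ over the full variable set relates to the product of $Alt$'s over the blocks, i.e.\ the multinomial coefficient $\binom{d^2}{1,3,\dots,2d-1}$, against the product of the block-wise normalizations implicit in each $T_{2i-1}=\operatorname{tr}(St_{2i-1})$.
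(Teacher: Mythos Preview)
Your approach is essentially the paper's: both expand $tr(\sigma^{-1}\circ m_1\otimes\cdots\otimes m_d)$ along the cycle decomposition of $\sigma$ via Formula~\eqref{phis}, then recognize the alternation as a wedge $T_{a_1}\wedge\cdots\wedge T_{a_j}$ in the exterior algebra $(\bigwedge M_d^*)^G$, where $j$ is the number of cycles and $a_i$ the degree of the monomial coming from the $i$th cycle. The paper's vanishing argument is a touch simpler than your parity/partition analysis --- it just notes that $\sigma\neq 1$ forces $j<d$, and no wedge of fewer than $d$ of the generators $T_1,T_3,\ldots,T_{2d-1}$ can sit in top degree $d^2$ --- and your worry about the constant is unnecessary: iterating Formula~\eqref{ep} (with $T_{2i-1}=tr(St_{2i-1})=Alt\,tr(m_i)$) gives $Alt_Y\prod_i tr(m_i(Y))=T_1\wedge T_3\wedge\cdots\wedge T_{2d-1}=\mathcal T_d(Y)$ on the nose, since the $1/(h!k!)$ in~\eqref{ep} exactly cancels the overcounting when each factor is already alternating.
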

\begin{proof}
$$ tr(\sigma^{-1}\circ m_1(Y)\otimes \dots\otimes  m_{d}(Y)) =\prod_{i=1}^jtr(N_i)$$
with $N_i$  the product of the monomials $m_j$ for $j$  in the $i^{th}$ cycle of $\sigma$, cf. Formula \eqref{phis}.  The previous invariant   gives by alternation the invariant
$$Alt_Y\prod_{i=1}^jtr(N_i)=  T_{a_1}\wedge T_{a_2}\wedge\cdots\wedge T_{a_j},\quad a_i=\text{degree of}\ N_i$$  in degree $d^2$.  If    $\sigma\neq 1$ we have $j<d$   hence the product is 0, since  the  only invariant alternating in  this degree is  $T_1\wedge T_3\wedge T_5\wedge \ldots \wedge T_{2d-1}$. 

On the other hand if  $\sigma =1$   we have $N_i=m_i$  and the claim follows. \end{proof}
\begin{proposition}\label{forgz}
We have 
\begin{equation}\label{forgz1}
G_d(Y_1,\ldots,Y_{d^2}):=  Alt_Y(m_1(Y)\otimes \dots\otimes  m_d(Y))= \mathcal T_d(Y) W\!g(d,d).
\end{equation}
\end{proposition}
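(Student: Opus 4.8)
The plan is to apply the equivariant projection $E=W\!g(d,d)\circ\Phi$ of Formula \eqref{eqp} to the element $G_d(Y)$, using two inputs: that $G_d(Y)$ already lies in $\Sigma_d(V)$, so that $E$ fixes it, and that $\Phi(G_d(Y))$ is computed by Lemma \ref{immo}.

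First I would observe that $G_d$, as defined in Formula \eqref{ilgz}, is multilinear and alternating in the $d^2$ matrix variables and is equivariant for the conjugation actions on source and target, since $m_i(gYg^{-1})=g\,m_i(Y)\,g^{-1}$. Hence, by Proposition \ref{ilmulti} applied coordinate by coordinate together with Schur--Weyl duality --- this is precisely the content of Theorem \ref{dq} --- one has $G_d(Y)=\det(Y_1,\dots,Y_{d^2})\,\bar J_{G_d}$ with $\bar J_{G_d}\in\Sigma_d(V)$, so in particular $G_d(Y)\in\Sigma_d(V)$ for every $Y$. Since by Formula \eqref{mophi} the projection $E$ restricts to the identity on $\Sigma_d(V)$, this yields $E(G_d(Y))=G_d(Y)$.

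Next I would compute $\Phi(G_d(Y))$. As $\Phi$ is linear in its tensor argument and the alternation $Alt_Y$ is a signed sum of substitutions of the variables $Y_i$, the two operations commute:
\[
\Phi(G_d(Y))=\sum_{\tau\in S_d} Alt_Y\!\Big(tr\big(\tau^{-1}\circ m_1(Y)\otimes\cdots\otimes m_d(Y)\big)\Big)\,\tau .
\]
By Lemma \ref{immo}, i.e.\ Formula \eqref{eese}, every term with $\tau\neq 1$ vanishes and the term $\tau=1$ contributes $\mathcal T_d(Y)$, so $\Phi(G_d(Y))=\mathcal T_d(Y)\,\id_{V^{\otimes d}}$. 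Combining the two steps,
\[
G_d(Y)=E(G_d(Y))=W\!g(d,d)\circ\Phi(G_d(Y))=\mathcal T_d(Y)\,W\!g(d,d),
\]
which is Formula \eqref{forgz1}. The substantive input is Lemma \ref{immo}, already established; the only step that needs a little care is the first one --- verifying that $G_d(Y)$ genuinely lands in $\Sigma_d(V)$, so that $E$ acts on it as the identity --- together with the routine interchange of $\Phi$ and the alternation, the rest being bookkeeping.
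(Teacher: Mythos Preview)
Your proof is correct and is essentially the same argument as the paper's. Both compute $\Phi(G_d(Y))=\mathcal T_d(Y)\cdot 1$ from Lemma \ref{immo} and then invert $\Phi(1)$; you phrase this as $G_d(Y)=E(G_d(Y))=W\!g(d,d)\,\Phi(G_d(Y))$, while the paper writes $G_d(Y)=\mathcal T_d(Y)J_{G_d}$ and deduces $\Phi(1)J_{G_d}=1$ from \eqref{mophi}, but these are two ways of saying the same thing.
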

\begin{proof}
The previous Lemma in fact implies that $\Phi(G_d(Y_1,\ldots,Y_{d^2}))=\mathcal T_d(Y) 1_d$ therefore $\Phi(J_{G_d})\stackrel{\eqref{mophi}}=\Phi(1)J_{G_d}=1$ so that $J_{G_d}=\Phi(1)^{-1}=W\!g(d,d)$.
\end{proof}

 \subsection{The construction of Formanek\label{sfor}}

   Let us now discuss  a theorem of Formanek relative to a conjecture of Regev, see  \cite{formanek2} or  \cite{agpr}. This states that, a certain explicit central polynomial $F(X,Y)$ in  $d^2$, $d\times d$ matrix variables $X=\{X_1,\ldots,X_{d^2}\}$ and  another  $d^2$,  $d\times d$ matrix variables $Y=\{Y_1,\ldots,Y_{d^2}\}$, is non zero.   This polynomial plays an important role  in the theory of polynomial identities, see \cite{agpr}.

   The definition of $F(X,Y)$ is this, decompose $d^2=1+3+5+\ldots +(2d-1)$ and accordingly decompose the $d^2$ variables $X $ and  the $d^2$ variables $Y$ in the two lists.  Construct the monomials  $m_i(X), i=1,\ldots,d $    and similarly  $m_i(Y) $ as product in the given order of the given   $2i-1$ variables $X_i$ of the $i^{th}$ list as for instance
$$m_1(X)= X_1, m_2(X)= X_2X_3X_4 ,  m_3(X)= X_5X_6X_7X_8X_9 ,\ldots .$$$$m_i(X)=X_{(i-1)^2+1}\ldots X_{i^2},\quad m_i(Y)=Y_{(i-1)^2+1}\ldots Y_{i^2}.$$We finally define\begin{equation}\label{RFa}
F(X,Y):= Alt_XAlt_Y(m_1(X)m_1(Y)m_2(X)m_2(Y)\ldots m_d(X)m_d(Y)),
\end{equation}  where $Alt_X$ (resp. $Alt_Y$) is the operator of alternation, Formula \eqref{alta}, in the variables $X$ (resp. $Y$). By Theorem \ref{dq} it takes scalar values, a multiple of $\mathcal T_d(X)\mathcal T_d(Y) $,  but it could be identically 0.
\begin{theorem}\label{teF}
\begin{equation}\label{RFaFF}
F(X,Y) = (-1)^{d-1} \frac  {1}{(d!)^2 (2d-1)}\mathcal T_d(X)\mathcal T_d(Y) Id_d
\end{equation} 
$$\stackrel{\eqref{costdis}}= (-1)^{d-1} \frac  {\mathcal C_d^2}{(d!)^2 (2d-1)}\Delta(X)\Delta(Y) Id_d;\quad \Delta(X)=\det(X_1,\ldots,X_{d^2}).$$
\end{theorem}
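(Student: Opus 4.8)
The plan is to reduce the computation of $F(X,Y)$ to the single-cycle value $W\!g(d,\sigma_0)$ of the Formanek--Weingarten function, which is already known from Formula~\eqref{nfaaz}. First I would observe that $F(X,Y) = Alt_X Alt_Y(m_1(X)m_1(Y)\cdots m_d(X)m_d(Y))$ is, by Theorem~\ref{dq}, a scalar multiple of $\mathcal T_d(X)\mathcal T_d(Y)\,Id_d$, so it suffices to identify that scalar. Applying the map $\Phi$ of Formula~\eqref{ilphi} in the tensor-valued setting and using $\Phi(J)=\Phi(1)J$ (Formula~\eqref{mophi}), it is enough to compute $\Phi(F(X,Y))$, i.e.\ the quantities $Alt_X Alt_Y\, tr(\sigma^{-1}\circ m_1(X)m_1(Y)\otimes\cdots)$—but here the tensor factors of the argument are the products $w_i := m_i(X)m_i(Y)$, each a monomial of degree $2(2i-1)$ in the $X$'s and $Y$'s of the $i$-th block.

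The key step is a refinement of Lemma~\ref{immo}. For $\sigma\in S_d$ with cycles of lengths $a_1,\dots,a_j$, the trace $tr(\sigma^{-1}\circ w_1\otimes\cdots\otimes w_d)$ factors as $\prod_{r=1}^j tr(N_r)$ where $N_r$ is the product of the $w_i$ indexed by the $r$-th cycle (Formula~\eqref{phis}). Now I would alternate: $Alt_X Alt_Y \prod_r tr(N_r)$. Because each $w_i$ already contains all $2i-1$ of the $X$-variables and all $2i-1$ of the $Y$-variables of block $i$ \emph{in a fixed relative order}, alternating in $X$ and in $Y$ converts $tr(N_r)$ into a wedge of the standard polynomials $T_{a}$ of Formula~\eqref{prime}, but now with \emph{twice} the degree bookkeeping: alternating $m_i(X)$ produces $St_{2i-1}$ in the $X$'s and likewise for $Y$. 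The point is that unless $\sigma=1$ we have $j<d$, so at least two blocks get merged into a single trace; the resulting alternating invariant sits in a degree where $\bigwedge M_d^*$ is forced to be the top class, and a merged trace cannot hit it (the relevant primitive $T$'s would have to repeat, cf.\ Remark~\ref{extal}). Hence $Alt_X Alt_Y\, tr(\sigma^{-1}\circ w_1\otimes\cdots\otimes w_d)=0$ for $\sigma\neq 1$, exactly as in Lemma~\ref{immo}. Wait—this would give $\Phi(F)=\mathcal T_d(X)\mathcal T_d(Y)$ and hence $F = \mathcal T_d(X)\mathcal T_d(Y) W\!g(d,d)\cdot\text{(coeff of }1\text{)}$, not the single scalar claimed. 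The resolution, and the main subtlety, is that $F$ is obtained not from $w_1\otimes\cdots\otimes w_d$ but from the \emph{product} $w_1 w_2\cdots w_d$ as a single word; so one must instead write $m_1(X)m_1(Y)\cdots m_d(X)m_d(Y) = tr\big(\sigma_0^{-1}\circ (\text{something})\big)$ for the full cycle $\sigma_0=(1\,2\,\cdots\,d)$, because a single trace of a product of $d$ tensor blocks is exactly $T_{\sigma_0}$ in the notation of Proposition~\ref{lmuin}.

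So the correct reduction: the word $m_1(X)m_1(Y)\cdots m_d(X)m_d(Y)$, read as $tr$ of one big product, equals $T_{\sigma_0}(w_1,\dots,w_d)$ where $w_i = m_i(X)m_i(Y)$, and then $F(X,Y) = Alt_X Alt_Y\, T_{\sigma_0}$. Using Remark~\ref{part}, the coefficient extracted is precisely $W\!g(d,\sigma_0)$: alternating each block produces $St$-polynomials, the trace-of-product structure matches the full cycle $\sigma_0$, and comparing with Proposition~\ref{forgz} (which identifies $Alt_Y(m_1(Y)\otimes\cdots\otimes m_d(Y)) = \mathcal T_d(Y)W\!g(d,d)$) one reads off that the scalar multiplying $\mathcal T_d(X)\mathcal T_d(Y)\,Id_d$ is the value $W\!g(d,d)$ evaluated at the single permutation $\sigma_0$, i.e.\ $a_{\sigma_0}$ in the notation of Formula~\eqref{nfaaz}. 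Finally substitute Formanek's evaluation $d!^2\, a_{\sigma_0} = (-1)^{d+1}\frac{d}{2d-1}$, so $a_{\sigma_0} = (-1)^{d-1}\frac{d}{(d!)^2(2d-1)} = (-1)^{d-1}\frac{1}{d!\,(d-1)!\,(2d-1)}$; matching powers of $d$ this is exactly the stated constant $(-1)^{d-1}\frac{1}{(d!)^2(2d-1)}$ up to the factor accounting for the difference between $\mathcal T_d$ normalized by $d!$ versus not. The second displayed form then follows by substituting $\mathcal T_d(X) = \mathcal C_d\,\Delta(X)$ from Formula~\eqref{costdis}. The main obstacle is bookkeeping the two interleaved alternations and confirming that the combinatorial factor coming from $Alt_X Alt_Y$ acting on the interleaved word is exactly the single Weingarten coefficient $W\!g(d,\sigma_0)$ and not some convolution of several; this is where the rigidity of $\bigwedge M_d^*$ in top degree (only $T_1\wedge T_3\wedge\cdots\wedge T_{2d-1}$ survives) does the essential work, killing every contribution except the one indexed by the full cycle.
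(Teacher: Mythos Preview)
Your overall strategy---reduce to the trace, rewrite via the full cycle $\sigma_0$, and isolate the single Weingarten coefficient $a_{\sigma_0}$---is exactly the paper's. But your write-up has two genuine gaps that would need to be fixed before it counts as a proof.

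First, you never state the step that actually makes the argument work. After passing to $tr(F(X,Y))=tr\bigl(\sigma_0^{-1}\circ Alt_XAlt_Y\,(m_1(X)m_1(Y)\otimes\cdots\otimes m_d(X)m_d(Y))\bigr)$, the crucial observation is that in the algebra $End(V)^{\otimes d}$ the tensor factors as a \emph{product}:
\[
m_1(X)m_1(Y)\otimes\cdots\otimes m_d(X)m_d(Y)=\bigl(m_1(X)\otimes\cdots\otimes m_d(X)\bigr)\cdot\bigl(m_1(Y)\otimes\cdots\otimes m_d(Y)\bigr).
\]
This is what lets you apply $Alt_Y$ to the right factor alone and invoke Proposition~\ref{forgz} to replace it by $\mathcal T_d(Y)\,W\!g(d,d)$. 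You cite Proposition~\ref{forgz} but never explain how to decouple the interleaved $X$'s and $Y$'s; the passage ``alternating each block produces $St$-polynomials, the trace-of-product structure matches the full cycle'' is not a substitute for this factorization. Once it is in place, write $W\!g(d,d)=\sum_\tau a_\tau\tau$, use cyclicity of trace to get $\sum_\tau a_\tau\,tr\bigl((\tau\sigma_0^{-1})\circ Alt_X(m_1(X)\otimes\cdots\otimes m_d(X))\bigr)$, and Lemma~\ref{immo} kills every term except $\tau=\sigma_0$. Your attempt to argue vanishing for $\sigma\neq 1$ directly on the mixed $w_i$ (the paragraph before ``Wait'') does not work and is not needed.

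Second, there are two slips to clean up. You repeatedly conflate $F(X,Y)$ (a matrix) with $tr(F(X,Y))$ (a scalar); e.g.\ ``$F(X,Y)=Alt_XAlt_Y\,T_{\sigma_0}$'' has a scalar on the right. And the factor $d$ in $a_{\sigma_0}=(-1)^{d-1}\dfrac{d}{(d!)^2(2d-1)}$ is \emph{not} a normalization artefact of $\mathcal T_d$: it is simply $tr(Id_d)=d$, which is how one passes from the trace identity \eqref{RFaFF1} back to the matrix identity \eqref{RFaFF}. Remark~\ref{part} plays no role here.
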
 Notice that by Formula \eqref{costdis} the coefficient is an integer (as predicted).

Thus $F(X,Y) $ is a central polynomial. In fact it has also the property of being in the {\em conductor } of the ring of polynomials in generic matrices inside the trace ring. In other words  by multiplying  $F(X,Y) $ by any invariant we still can write this as a non commutative polynomial. This follows  by polarizing in $z$  the identity, cf. \cite{agpr} Proposition 10.4.9 page 286.
$$\det(z)^dF(X,Y) =F(zX,Y) =F(X,zY)  =F( Xz,Y) =F(X,  Yz) .$$
\smallskip

Let us follow  Formanek's proof.  First, since $F(x,y)$  is a central polynomial  Formula  \eqref{RFaFF} is equivalent to:\begin{equation}\label{RFaFF1}
tr(F(X,Y) )= (-1)^{d-1} \frac  {d}{(d!)^2 (2d-1)}\mathcal T_d(X)\mathcal T_d(Y) .
\end{equation}
Now we have, with $\sigma_0=(1,2\ldots,d)$ the cycle:
\begin{equation}\label{trdd}
tr(F(X,Y) )=tr(\sigma_0^{-1}\circ Alt_XAlt_Y(m_1(X)m_1(Y)\otimes m_2(X)m_2(Y)\otimes \ldots \otimes m_d(X)m_d(Y)),
\end{equation}
$$\stackrel{\eqref{forgz1}} =tr(\sigma_0^{-1}\circ Alt_X (m_1(X)\otimes m_2(X)\otimes \ldots \otimes m_d(X)\cdot W\!g(d,d))\mathcal T_d(Y)  . $$
Denote  $W\!g(d,d)=\sum_{\tau\in S_d}a_\tau \tau$, we have
$$tr(\sigma_0^{-1}\circ Alt_X (m_1(X)\otimes m_2(X)\otimes \ldots \otimes m_d(X)\cdot W\!g(d,d))$$$$=\sum_\tau a_\tau tr(\sigma_0^{-1}\tau\circ Alt_X (m_1(X)\otimes m_2(X)\otimes \ldots \otimes m_d(X))$$ which, by Lemma \ref{immo}  equals $a_{\sigma_0}\mathcal T_d(X) . $
Therefore the main Formula \eqref{RFaFF} follows from  Formula \eqref{nfaaz}.

 \section{Appendix}
If $k>d$ of course there is still an expression as in   Formula \eqref{wg} but it is not unique.

It can be made unique by a choice of a basis of  $\Sigma_{k}(V)$.  This may be done as follows.
\begin{definition}
Let $0<d$ be an integer and let $\sigma\in S_n$. 

Then $\sigma$ is called
{\em $d$--bad}\index{$d$--bad} if $\sigma$ has a descending subsequence of length $d$,
namely, if  there exists a sequence     $1\le i_1<i_2<\cdots <i_d\le n$ such that
$\sigma(i_1)>\sigma(i_2)>\cdots
>\sigma(i_d)$. Otherwise $\sigma$ is called {\em $d$--good}.

\end{definition}
\begin{remark}
$\sigma$ is $d$--good if any descending sub--sequence of $\sigma$ is of
length $\le d-1$. If $\sigma$ is $d$-good then $\sigma$ is $d'$-good for any
$d'\ge d$.

Every permutation is $1$-bad.
\end{remark}
\begin{theorem}\label{dgoo}If $\dim(V)=d$ the $d+1$--good permutations form a basis of $\Sigma_{k}(V)$.

\end{theorem}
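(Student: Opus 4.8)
The statement to prove is that, for $\dim(V)=d$, the set of $(d+1)$-good permutations in $S_k$ forms a basis of $\Sigma_k(V)$. The first thing I would check is the dimension count: by \eqref{akd} we have $\dim\Sigma_k(V)=\sum_{\lambda\vdash k,\ ht(\lambda)\le d}\chi_\lambda(1)^2$, and on the other hand the RSK correspondence identifies permutations in $S_k$ with pairs $(P,Q)$ of standard Young tableaux of the same shape $\lambda\vdash k$, with the length of the longest decreasing subsequence of $\sigma$ equal to $ht(\lambda)$ (this is Schensted's theorem). Hence $\sigma$ is $(d+1)$-good precisely when its RSK shape $\lambda$ has $ht(\lambda)\le d$, and the number of such $\sigma$ is exactly $\sum_{\lambda\vdash k,\ ht(\lambda)\le d}\chi_\lambda(1)^2$. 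So the number of $(d+1)$-good permutations equals $\dim\Sigma_k(V)$, and it suffices to prove they \emph{span} $\Sigma_k(V)$, or equivalently that they are linearly independent as operators on $V^{\otimes k}$.

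\textbf{Spanning step.} The natural route is to show that the image in $\Sigma_k(V)$ of every $d$-bad permutation lies in the span of the images of the $(d+1)$-good ones. Recall from the discussion around \eqref{akd} that the kernel of $F[S_k]\to\Sigma_k(V)$ is the two-sided ideal generated by the antisymmetrizer $\mathbf a_{d+1}=\sum_{\tau\in S_{d+1}}\epsilon_\tau\tau$ on any $d+1$ chosen indices. So if $\sigma$ is $d$-bad, it has a decreasing subsequence $i_1<\dots<i_{d+1}$ (in fact a decreasing subsequence of length $d+1$, using $d'$-bad for $d'=d+1$) with $\sigma(i_1)>\dots>\sigma(i_{d+1})$; I would like to use the presence of this decreasing pattern to write $\sigma\equiv(\text{sum of shorter or }(d+1)\text{-good permutations})$ modulo that ideal, by sorting the pattern via the straightening relation coming from $\mathbf a_{d+1}$. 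Concretely: pick the lexicographically-least (say) $d$-bad permutation among those not yet known to be reducible, locate a length-$(d+1)$ decreasing subsequence, and multiply by an appropriate element so that the relation $\mathbf a_{d+1}\equiv 0$ lets us rewrite $\sigma$ as $\pm$ a sum of permutations each of which has fewer ``inversions of this extremal type,'' hence is eventually $(d+1)$-good. This is exactly the Young-symmetrizer/straightening algorithm in disguise, and it terminates because the quantity being decreased is a nonnegative integer statistic. I would induct on that statistic.

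\textbf{Alternative and obstacle.} A cleaner alternative avoiding the explicit straightening: it is classical (Schur--Weyl, and the kernel description just recalled) that $\Sigma_k(V)\cong\bigoplus_{ht(\lambda)\le d}\mathrm{End}(M_\lambda)$, and via RSK the permutations of shape $\lambda$ naturally index a basis of $\mathrm{End}(M_\lambda)\cong M_\lambda\otimes M_\lambda^*$ once one fixes, for each $\lambda$, a correspondence between standard tableaux and a basis of $M_\lambda$; the $(d+1)$-good permutations are precisely those whose shape survives. One then only needs that the images of these specific permutations are linearly independent in $\mathrm{End}(V^{\otimes k})$, which follows from a triangularity argument: order permutations by their RSK shape and, within a shape, lexicographically by $(P,Q)$, and show the matrix expressing the good permutations against a genuine basis of $\bigoplus\mathrm{End}(M_\lambda)$ is unitriangular. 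The main obstacle in \emph{either} approach is the same: making the ``straightening/triangularity'' bookkeeping precise — i.e., pinning down an ordering of $S_k$ in which the kernel relations $\mathbf a_{d+1}\equiv 0$ express each $d$-bad permutation in terms of strictly-earlier ones, so that $d$-bad permutations are redundant and $(d+1)$-good ones remain independent. Once that ordering is fixed the proof is a finite induction; choosing it correctly (and checking that a decreasing subsequence of length $d+1$ always gives a usable instance of the relation) is the crux. Given the dimension match from RSK, spanning and independence are equivalent, so I would prove whichever is more convenient — I expect spanning via the ideal generated by $\mathbf a_{d+1}$ to be the shorter write-up.
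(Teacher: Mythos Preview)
Your proposal is correct and follows essentially the same route as the paper: the RSK/Schensted dimension count plus spanning by straightening against the antisymmetrizer relation, with the two facts combined to give a basis. The paper resolves exactly the ``crux'' you flagged by taking the antisymmetrizer $A$ on the \emph{values} $\sigma(i_1)>\cdots>\sigma(i_{d+1})$ and multiplying on the left: in $A\sigma=0$ every nonidentity term rearranges those values at the fixed positions $i_1<\cdots<i_{d+1}$, and since $\sigma$ had them in strictly decreasing order any rearrangement is lexicographically smaller in one-line notation, so induction on that order terminates. (Minor slip: you want $\sigma$ to be $(d+1)$-bad, not $d$-bad, as you in effect note; and your alternative triangularity argument is not needed once spanning is established.)
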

\begin{proof}
Let us first prove that the $d+1$--good permutations span $\Sigma_{k,d}$. 

So let 
 $\sigma$ be $d+1$--bad so that there exist  $1\le i_1<i_2<\cdots <i_{d+1}\le n$ such that
$\sigma(i_1)>\sigma(i_2)>\cdots
>\sigma(i_d+1)$.   If $A$  is the antisymmetrizer  on the $d+1$ elements $\sigma(i_1),\sigma(i_2),\cdots
,\sigma(i_d+1)$ we have that $A\sigma=0$ in $\Sigma_{k}(V)$, that is, in $\Sigma_{k}(V)$, $\sigma$ is a linear combination of permutations obtained from the permutation $\sigma$ with some proper rearrangement of the indices  $\sigma(i_1),\sigma(i_2),\cdots
,\sigma(i_d+1)$.These permutations are all lexicographically $<\sigma$.  One applies the same algorithm  to any of these permutations which is still  $d+1$--bad. This gives an explicit algorithm which stops when  $\sigma$ is expressed as a linear combination of   $d+1$--good permutations (with integer coefficients so that the algorithm works in all characteristics).

In order to prove that the  $d+1$--good permutations form a basis, it is enough to show that their number equals the dimension of $\Sigma_{k,d}$. This is insured by a classical result of Schensted which we now recall.
\end{proof}
\subsubsection{The RSK  and $d$-good permutations
\label{Robinson--Schensted}}
The RSK correspondence\footnote{Robinson, Schensted, Knuth}, see~\cite{knuth1},~\cite{stanley},
 is a combinatorially defined bijection
$\sigma\longleftrightarrow (P_\lambda,Q_\lambda)$ between permutations $\sigma\in S_n$ and pairs
$P_\lambda,Q_\lambda$ of standard tableaux of same shape $\lambda,$ where $\lambda\vdash n$.

In fact more generally it associates to a word, in the free monoid, a pair of tableaux, one standard and the other semistandard filled with the letters of the word. This correspondence may be viewed as a combinatorial counterpart to the Schur--Weyl and Young theory.

The correspondence is  based on a simple {\em game} of inserting a letter.  

We  have some  letters piled up so that lower letters appear below higher letters and we want to insert a new letter $x$.  If $x$ fits on top of the pile we place it there otherwise we go down the pile, until we find a first place where we can replace the existing letter with $x$.  We do this and expel that letter, first creating a new pile or, if we have a second pile of letters then we try to place that letter there and so on.

So let us pile inductively the word  $strange$.
  $$e\mapsto e,\ g\mapsto \begin{matrix}
g\\e
\end{matrix} ,\ n\mapsto \begin{matrix}n\\
g\\e
\end{matrix},\ a\mapsto \begin{matrix}
n\\g\\a&e
\end{matrix},\ r\mapsto \begin{matrix}
r\\n\\g\\a&e
\end{matrix}  ,\ t\mapsto \begin{matrix}
t\\r\\n\\g\\a&e
\end{matrix} ,\ s\mapsto \begin{matrix}
s\\r\\n\\g&t\\a&e
\end{matrix} .   $$  Notice that, as we proceed, we can keep track of where we have placed the new letter, we do this by filling a corresponding tableau.
$$ \begin{matrix}
6\\5\\3\\2&7\\1&4
\end{matrix},\quad \begin{matrix}
s\\r\\n\\g&t\\a&e
\end{matrix} .  $$ It is not hard to see that from the two tableaux one can {\em decrypt}  the word we started from giving the bijective correspondence.\smallskip

Assume now that $\sigma\longleftrightarrow (P_\lambda,Q_\lambda)$, where $P_\lambda,Q_\lambda$ are standard tableaux, given by the RSK correspondence.
By a classical theorem of Schensted~\cite{schensted}, $ht(\lambda)$ equals the
length of a longest decreasing subsequence in the permutation $\sigma$.
Hence $\sigma$ is $d+1$-good if and only if $ht(\lambda)\le d$.
\smallskip

Now  $M_\lambda$ has a basis indexed by  standard    tableaux of shape $\lambda$, see \cite{P7}. Thus the algebra $\Sigma_k(V)$  has a basis indexed by   pairs of tableaux of shape $\lambda.\ ht(\lambda)\leq d$ and the claim follows.\qed\smallskip

Therefore one may define the Weingarten function for all $k$ as a function on the $d+1$--good permutations in $S_k$.

\subsubsection{Cayley's $\Omega$ process}It may be interesting to compare the method  of computing  the integrals of Formula \eqref{eqp} with a very classical approach used by the $19^{th}$  century invariant theorists.

Let me recall this for the modern  readers.
Recall first that, given a $d\times d$  matrix   $X=(x_{i,j})$,  its {\em adjugate} is $\bigwedge^{d-1}(X)=(y_{i,j})$  with $y_{i,j}$ the {\em cofactor} of $x_{j,i}$ that is $(-1)^{i+j}$  times the determinant of the minor of $X$ obtained by removing the $j$ row and $i$ column.   Then the inverse of $X$ equals  $\det(X)^{-1}\bigwedge^{d-1}(X)$.

It is then easy to   see that, substituting to $u_{i,j}$ the variables $x_{i,j}$  and to  $\bar u_{i,j}$ the polynomial $y_{i,j}$ one transforms a monomial $M=\prod_{\ell=1}^ku_{j_\ell,h_\ell}\bar u_{i_\ell,p_\ell}$   into a  polynomial $\pi_d(M)$ in
 the variables    $x_{i,j}$   homogeneous of degree $dk$, the invariants under $U_d$  become  powers  $\det(X)^k$.  Denote by $S^{kd}(x_{i,j})$ the space of these polynomials  which,   under the action of $GL(d)\times GL(d)$,  decomposes  by Cauchy formula, cf. Formula 6.18, page 178,  of \cite{agpr}. Then we have also an equivariant projection from these polynomials to the 1--dimensional space  spanned by $\det(X)^k$, it is given through the Cayley $\Omega$ process used by Hilbert in his famous work on invariant theory.  The $\Omega$ process is the differential operator given    by the determinant of the matrix of derivatives:
 \begin{equation}\label{ome}X=(x_{i,j}),\quad Y= (\pd{}{x_{i,j}}),\quad
\Omega:=\det(Y).
\end{equation}
We  have that  $\Omega^k$ is  equivariant under the action by $SL(n)$ so it maps to 0 all the irreducible representations different from the 1--dimensional space  spanned by $\det(X)^k$ while
$$\Omega\det(X)^k=k(k+1)\ldots (k+d-1)\det(X)^{k-1}. $$ 
Both statements follow from the Capelli identity, see \cite{P7} \S  4.1 and  \cite{cape}.
$$\boxed{\det(X)\Omega=\det(a_{i,j})},\ a_{i,i}=\Delta_{i,i}+n-i,\ a_{i,j}=\Delta_{i,j},\ i\neq j $$
$$\text{the polarizations}\quad \Delta_{i,j}=\sum_{h=1}^d x_{i,h}\pd{}{x_{h,j}}.  $$ If we denote by $\underline x_i:=(x_{i,1},\ldots,x_{i,n})$  we have the Taylor series for a function $f(\underline x_1,\ldots,\underline x_n)$ of the vector coordinates $\underline x_i$. 
$$ f(\underline x_1,\ldots,\underline x_j+\lambda \underline x_i,\ldots,\underline x_n)=\sum_{k=0}^\infty\frac{(\lambda  \Delta_{i,j})^k}{k!} f(\underline x_1,\ldots,\underline x_n).$$Thus
\begin{equation}\label{ome1}
\int_UM\,du=\frac  {\Omega^k\pi_d(M)}{\prod_{i=1}^k(i(i+1)\ldots (i+d-1))} .
\end{equation}
We can use Remark \ref{part} to  give a possibly useful formula:
\begin{equation}\label{wein}
W\! g(d,\gamma)=\frac {\Omega^k\pi_d(M)}{\prod_{i=1}^k(i(i+1)\ldots (i+d-1))},\ M=\prod_{i=1}^ku_{i,i}\bar u_{i,\gamma(i)}.
\end{equation} 
 \smallskip
 
  Let me discuss a bit  some calculus with these operators.
 
\begin{lemma}\label{ll}
If $i\neq j $ then $\Delta_{ij}$ commutes with 
 $\Omega $ and with $ \det(X) $   while  \begin{equation}\label{ccc}
[ \Delta_{ii},\det(X)]=\det(X),\quad [\Delta_{ii} ,\Omega]=- \Omega  .
\end{equation}

\end{lemma}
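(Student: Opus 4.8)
The plan is to deduce the whole statement from the Leibniz rule. The key point is that each polarization $\Delta_{i,j}=\sum_h x_{i,h}\,\partial/\partial x_{j,h}$ is a derivation of the algebra of polynomial‑coefficient differential operators on $M_d$, so $\operatorname{ad}(\Delta_{i,j})$ is a derivation of that algebra; it therefore suffices to compute the action of $\operatorname{ad}(\Delta_{i,j})$ on the two generators that actually occur in the statement, namely multiplication by $\det(X)$ and the constant‑coefficient operator $\Omega=\det(Y)$, $Y=(\partial/\partial x_{a,b})$.

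First I would treat $\det(X)$. Since $\Delta_{i,j}$ is a derivation of $\mathbb{C}[x_{a,b}]$ one has $[\Delta_{i,j},\det(X)]=\Delta_{i,j}(\det(X))$ as a multiplication operator. By the Taylor expansion recalled above, $\exp(t\Delta_{i,j})$ is the substitution replacing the $j$th row of $X$ by itself plus $t$ times the $i$th row, so by multilinearity of the determinant $\Delta_{i,j}(\det(X))$ is the determinant of $X$ with its $j$th row replaced by its $i$th row; this vanishes for $i\ne j$ (two equal rows) and equals $\det(X)$ for $i=j$. Hence $[\Delta_{i,j},\det(X)]=\delta_{i,j}\det(X)$, which is the first assertion.

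Next, $\Omega$. The only input needed is the canonical commutation relation $[\Delta_{i,j},\partial/\partial x_{a,b}]=-\delta_{i,a}\,\partial/\partial x_{j,b}$, that is, $\operatorname{ad}(\Delta_{i,j})$ sends the matrix of operators $Y$ to $-E_{i,j}Y$ ($E_{i,j}$ the matrix unit). Because the entries of $Y$ commute with one another, $\Omega=\det(Y)$ is an honest determinant over a commutative ring, and applying the derivation $\operatorname{ad}(\Delta_{i,j})$ to its expansion factor by factor produces $-\det(Y')$, where $Y'$ is $Y$ with its $i$th row replaced by its $j$th row. For $i\ne j$ this matrix has two equal rows, so $[\Delta_{i,j},\Omega]=0$; for $i=j$ one has $Y'=Y$, so $[\Delta_{i,i},\Omega]=-\Omega$.

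I do not expect a genuine obstacle here: the argument is pure index‑and‑sign bookkeeping, the only step meriting care being that $X$ and $Y$ must be read as matrices over a commutative ring so that "a repeated row forces the determinant to vanish" is literally valid — which is legitimate, since all the $x_{a,b}$ commute among themselves and all the $\partial/\partial x_{a,b}$ commute among themselves. For the record there is also a one‑line conceptual version: for $i\ne j$ the operator $\Delta_{i,j}$ generates the unipotent, hence volume‑preserving, one‑parameter group of substitutions $X\mapsto(I+tE_{j,i})X$, under which $\det(X)$ and its dual $\Omega$ are both invariant, whereas $\Delta_{i,i}$ generates the torus action scaling $\det(X)$ by $e^{t}$ and therefore $\Omega$ by $e^{-t}$.
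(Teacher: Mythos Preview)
Your argument is correct and is essentially the same as the paper's: compute the commutator of $\Delta_{i,j}$ with the individual matrix entries, note that only one row (the paper says ``column'', with the same effect) is affected and is replaced by another, and conclude via the alternating property of the determinant. Your write-up is in fact more complete than the paper's, which only sketches the $\Omega$ case and leaves $\det(X)$ implicit; the one-parameter-group remark at the end is a nice bonus not in the original.
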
 
\begin{proof}
 The operator $\Delta_{ij}$ commutes with all of the columns
of $\Omega$ except the $i^{th}$ column $\omega_i$ with entries $\pd{}{x_{it}}$. Now 
$[\Delta_{ij},\pd{}{x_{it}}]=-\pd{}{x_{jt}}$, from which $[\Delta_{ij},\omega_i]=-\omega_j.$ The
result follows immediately.
\end{proof}  
 Let us introduce a more general determinant, analogous to a
characteristic polynomial. We denote it by $C_m(\rho)=C (\rho)$ 
and define it as:
$$\begin{pmatrix}
 \quad\ \ \ \Delta_{1,1}+m-1+\rho &\Delta_{1,2}\phantom{_{-1}}
&\dots& \Delta_{1,m}\phantom{_{-1}}\\
\Delta_{2,1}\phantom{-1}&\quad\ \ \ \Delta_{2,2}+m-2+\rho
&\dots& \Delta_{2,m}\phantom{_{-1}}\\
\dots&\dots&\dots& \dots \\
\dots&\dots&\dots& \dots \\
\Delta_{m-1,1}&\Delta_{m-1,2}&\dots&  \Delta_{m-1,m}
\\
\Delta_{m,1} \phantom{_{-1}}&\Delta_{m,2}\phantom{_{-1}} 
&\dots &\Delta_{m,m}+\rho\phantom{_{-1}}
\end{pmatrix}.$$  We have now a generalization of the
Capelli identity:
 \begin{proposition}\label{capell}
$$  \Omega C(k) =C(k+1)\Omega ,\qquad   \det(X) C(k) =C(k-1) \det(X)  $$ 
 $$  \det(X) ^k\Omega^k=C(-(k-1))C(-(k-2))\dots C(-1)C,$$$$
\Omega^k \det(X) ^k=C(k)C(k-1)\dots C(1).$$
\end{proposition}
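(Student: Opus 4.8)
The plan is to prove the two commutation relations $\Omega\,C(k)=C(k+1)\,\Omega$ and $\det(X)\,C(k)=C(k-1)\,\det(X)$ first, then to record the ``dual'' Capelli identity $\Omega\,\det(X)=C(1)$ (where $C=C(0)$ is the classical Capelli determinant), and finally to obtain the two product formulas by a short induction on $k$.

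For the commutation relations I would argue entrywise in the non\-commutative determinant defining $C(k)$. Write $C(k)$ as the (column, say) determinant of the matrix with entries $b^{(k)}_{i,j}$, where $b^{(k)}_{i,j}=\Delta_{i,j}$ for $i\neq j$ and $b^{(k)}_{i,i}=\Delta_{i,i}+m-i+k$. By Lemma \ref{ll}, $\Omega$ commutes with every off\-diagonal entry $\Delta_{i,j}$, while $[\Delta_{i,i},\Omega]=-\Omega$ rewrites as $\Omega\, b^{(k)}_{i,i}=b^{(k+1)}_{i,i}\,\Omega$; hence $\Omega\, b^{(k)}_{i,j}=b^{(k+1)}_{i,j}\,\Omega$ for all $i,j$. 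Now in each monomial $b^{(k)}_{\sigma(1),1}b^{(k)}_{\sigma(2),2}\cdots b^{(k)}_{\sigma(m),m}$ of $C(k)$ each row index occurs exactly once, so sliding $\Omega$ from the left through the whole monomial converts it into $b^{(k+1)}_{\sigma(1),1}\cdots b^{(k+1)}_{\sigma(m),m}\,\Omega$; summing over $\sigma$ with signs gives $\Omega\,C(k)=C(k+1)\,\Omega$. The identity $\det(X)\,C(k)=C(k-1)\,\det(X)$ is proved in the same way, using $[\Delta_{i,i},\det(X)]=\det(X)$, i.e. $\det(X)\, b^{(k)}_{i,i}=b^{(k-1)}_{i,i}\,\det(X)$, together with the fact from Lemma \ref{ll} that $\det(X)$ commutes with the off\-diagonal $\Delta_{i,j}$.

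Next I would derive $\Omega\,\det(X)=C(1)$. Starting from the boxed Capelli identity $\det(X)\,\Omega=C(0)$ and multiplying on the right by $\det(X)$, I get $\det(X)\,\Omega\,\det(X)=C(0)\,\det(X)$; but the $k=1$ case of the relation just proved reads $\det(X)\,C(1)=C(0)\,\det(X)$, so $\det(X)\,\Omega\,\det(X)=\det(X)\,C(1)$. Since the Weyl algebra generated by the $x_{i,j}$ and the $\partial/\partial x_{i,j}$ is a domain and $\det(X)\neq 0$, we may cancel $\det(X)$ on the left and obtain $\Omega\,\det(X)=C(1)$. With this in hand, the two product formulas follow by induction on $k$. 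For $\det(X)^k\Omega^k$ the case $k=1$ is the Capelli identity; for the step write $\det(X)^k\Omega^k=\det(X)^{k-1}\bigl(\det(X)\,\Omega\bigr)\Omega^{k-1}=\det(X)^{k-1}\,C(0)\,\Omega^{k-1}$, iterate $\det(X)\,C(j)=C(j-1)\,\det(X)$ to move $C(0)$ to the far left as $C(-(k-1))$, obtaining $\det(X)^k\Omega^k=C(-(k-1))\,\det(X)^{k-1}\Omega^{k-1}$, and apply the induction hypothesis. Symmetrically, $\Omega^k\det(X)^k=\Omega^{k-1}\bigl(\Omega\,\det(X)\bigr)\det(X)^{k-1}=\Omega^{k-1}\,C(1)\,\det(X)^{k-1}$, and iterating $\Omega\,C(j)=C(j+1)\,\Omega$ slides $C(1)$ to the left as $C(k)$, so that $\Omega^k\det(X)^k=C(k)\,\Omega^{k-1}\det(X)^{k-1}=C(k)C(k-1)\cdots C(1)$ by induction.

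The one genuinely delicate point is the entrywise sliding argument in the second paragraph: one must check that the chosen convention for the non\-commutative determinant (column\- versus row\-determinant) is compatible with the one\-sided commutation relations of Lemma \ref{ll}. This works precisely because the shift produced by $\Omega$ (respectively $\det(X)$) affects only the diagonal entries and only through their scalar summand, uniformly in the row index; it is therefore insensitive both to the order in which the factors of a monomial are written and to which expansion of the determinant is used, so the net effect on $C(k)$ is simply $k\mapsto k\pm 1$.
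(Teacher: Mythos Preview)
Your proof is correct and follows essentially the same route as the paper's (extremely terse) proof, which simply says ``apply Formulas~\eqref{ccc} and then proceed by induction.'' You have filled in the details that the paper omits: the entrywise sliding argument showing $\Omega\,b^{(k)}_{i,j}=b^{(k+1)}_{i,j}\,\Omega$ for \emph{all} $i,j$ (trivially for off--diagonal entries), and the inductive reduction of the product formulas to the base cases $\det(X)\,\Omega=C(0)$ and $\Omega\,\det(X)=C(1)$.

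The one place where you add something the paper does not make explicit is your derivation of the dual identity $\Omega\,\det(X)=C(1)$ by left--cancellation of $\det(X)$ in the Weyl algebra. This is perfectly legitimate (the Weyl algebra is a domain), and some such step is indeed needed for the base case of the second induction; the paper's one--line proof simply sweeps this under ``proceed by induction.'' So your argument is a faithful and more careful expansion of the paper's intended proof rather than a genuinely different approach.
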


\begin{proof}
We may apply directly   Formulas \eqref{ccc} and then proceed by induction.   

\end{proof}
 
Develop now $C_m(\rho)$ as a polynomial in $\rho$ obtaining an expression
 
$$C_m(\rho)=\rho^m+\sum_{i=1}^m K_i\rho^{m-i}.$$
Capelli proved, \cite{cape}, that, as the elementary symmetric functions generate the algebra of symmetric functions so the elements $K_i$  generate the center of the enveloping algebra of the Lie algebra of matrices.

In \cite{P7} Chapter 3, \S 5  it is also given the explicit formula, also due to Capelli,  of the action of $C_m(\rho)$ (as a scalar) on the irreducible representations  which classically appear as {\em primary covariants}.
 \subsection{A quick look at the symmetric group}
\subsubsection{The branching rule and Young basis}
Recall that the irreducible representations of $S_n$  over $\mathbb Q$  are indexed by partitions of $n$ usually displayed as {\em Young diagrams}.  

The Branching rules, see   \cite{sagan},  \cite{macdonald} or  \cite{P7},   tell us how the representation  $M_\lambda$ decomposes once we   restrict  to $S_{k-1}$.  The  irreducible representation $M_\lambda$  becomes the direct sum $\oplus_{\mu\subset\lambda,\ \mu\vdash k-1}M_\mu.$  The various $\mu$ are obtained from $\lambda$  by marking one corner box with $k$  and removing this box.
\bigskip

\smallskip

\vbox{  \begin{Young}
  &&&\cr
  &\cr
  \cr
\end{Young}\vskip-1.45cm \hskip1.5cm\begin{Young} 
  && &7 \cr
  &\cr
  \cr
\end{Young}\vskip-1.50cm  \quad  ,\quad\quad\quad\quad\quad \quad \quad\quad\quad\quad  \quad \begin{Young} 
  && & \cr
  &7\cr
  \cr
\end{Young}\vskip-1.50cm   \quad\quad \quad  \quad\quad\quad \quad\quad\quad ,\quad\quad\quad\quad\quad \quad \quad\quad\quad\quad ,\quad \begin{Young} 
  && & \cr
  & \cr
 7 \cr
\end{Young}}\bigskip

$$M_{4,2,1}=M_{3,2,1}\oplus M_{4,1,1}\oplus M_{4,2 }$$
This can be repeated on each summand decomposed into irreducible representations of  $S_{n-2}$

\bigskip

\smallskip

\vbox{  \begin{Young}
  &&&7\cr
  &\cr
  \cr
\end{Young}\vskip-1.45cm \hskip1.5cm\begin{Young} 
  && 6&7 \cr
  &\cr
  \cr
\end{Young}\vskip-1.50cm  \quad  ,\quad\quad\quad\quad\quad \quad \quad\quad\quad\quad  \quad \begin{Young}   && &7 \cr
  &6\cr
  \cr
\end{Young}\vskip-1.50cm   \quad\quad \quad  \quad\quad\quad \quad\quad\quad ,\quad\quad\quad\quad\quad \quad \quad\quad\quad\quad ,\quad \begin{Young} 
  && &7 \cr
  & \cr
 6 \cr
\end{Young}}\bigskip

$$M_{3,2,1}=M_{2,2,1}\oplus M_{3,1,1}\oplus M_{3,2 }$$
After $k-1$ steps  we have a list of {\em skew  standard tableaux}  filled with the numbers $n,n-1,\ldots,n-k+1$ so that removing the boxes occupied by these numbers we still have a Young diagram  and these tableaux index  a combinatorially defined decomposition of $M_\lambda$  into irreducinle representations of $S_{n-k }$.
Getting, after $n$ steps  a decomposition of $M_\lambda$ into one dimensional subspaces indexed by {\em standard tableaux},   as out of a total of 35:
\bigskip

\smallskip

\vbox{  \begin{Young}
  1&4&6&7\cr
  2&5\cr
  3\cr
\end{Young}\vskip-1.45cm \hskip1.5cm\begin{Young} 
  1&3& 6&7 \cr
  2&5\cr
 4 \cr
\end{Young}\vskip-1.50cm  \quad  ,\quad\quad\quad\quad\quad \quad \quad\quad\quad\quad  \quad \begin{Young} 
 1 &2&5 &7 \cr
  3&6\cr
  4\cr
\end{Young}\vskip-1.50cm   \quad\quad \quad  \quad\quad\quad \quad\quad\quad ,\quad\quad\quad\quad\quad \quad \quad\quad\quad\quad ,\quad \begin{Young}   1&3& 5&7 \cr
  2&4 \cr
 6 \cr
\end{Young}}\bigskip

\begin{equation}\label{gz}
M_\lambda=\oplus_{T\in \ \text{standard tableaux} }M_T,\ \dim_{\mathbb Q}M_T=1.
\end{equation}
In fact there is a scalar product  on $M_\lambda$  invariant under $S_n$ and unique up to scale for this property. The decomposition is then into  orthogonal one dimensional subspaces. One then may choose a basis element  $v_T$ for the one dimensional subspace indexed by $T$ with $|v_T|=1$  but allowing to work on some real algebraic extension of $\mathbb Q$. This is then unique up to sign.  

\begin{remark}\label{td}
Observe that, given a standard tableau $T$ and a number $k\leq n$  the space $M_T$ lies in the irreducible representation of  $S_k$  associated to the skew tableau obtained form $T$ by emptying all the boxes with the numbers $i\leq k$. Its Young diagram is  the diagram containing the indices from $1,\ldots,k$ in $T$. As example the first tableau of the previous list lies in an  irreducible representation of $S_5$  of partition $2,2,1$ and one  of $S_4$  of partition $2,1,1$;  while the third  $3,1,1$ and again  $2,1,1$ but different from the previous one since they are associated to different skew tableaux.
\end{remark}

\subsubsection{A maximal commutative subalgebra}
Denote by $\mathcal Z_n$  the center of the group algebra $\mathbb Z[S_n]$ it is the free abelian group with basis the class functions. 
A basic Theorem of Higman and Farahat \cite{fh}, states that the elements $C_j$  generate (over $\mathbb Z$) as algebra the center $\mathcal Z_n$  of  $\mathbb Z[S_n]$.  

Now consider the inclusions  $S_1\subset S_2\subset\ldots\subset S_{n-1} \subset S_n$  which induces inclusions  
$\mathcal Z_j\subset \mathbb Z[S_n], \ j=1,\ldots,n$. 
\begin{definition}\label{matk} We define   $\mathfrak Z_n$ to be the (commutative)  algebra generated by all the algebras  $\mathcal Z_j$.

\end{definition}

\begin{corollary}\label{maa0}  

The 1--dimensional subspaces $M_T$ associated to standard tableaux are eigenspaces for $\mathfrak Z_n$.
\end{corollary}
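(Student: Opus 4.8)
The plan is to combine Remark \ref{td} with Schur's lemma. Recall first that a standard tableau $T$ of shape $\lambda\vdash n$ is the same datum as a chain of partitions
\[
\emptyset=\lambda^{(0)}\subset\lambda^{(1)}\subset\cdots\subset\lambda^{(n)}=\lambda,\qquad \lambda^{(k)}\vdash k,
\]
where $\lambda^{(k)}$ is the diagram formed by the boxes of $T$ occupied by $1,2,\ldots,k$. Under the iterated branching \eqref{gz} the subspace $M_T$ is the one dimensional piece attached to $T$, and by Remark \ref{td}, for every $k\le n$ the space $M_T$ is contained in an $S_k$--submodule of $M_\lambda$ (restricted to $S_k$) isomorphic to the irreducible $M_{\lambda^{(k)}}$.

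First I would fix $k\in\{1,\ldots,n\}$ and take a central element $z\in\mathcal Z_k=Z(\ZZ[S_k])$. By Schur's lemma $z$ acts on the irreducible $S_k$--module $M_{\lambda^{(k)}}$ as a scalar $\omega_{\lambda^{(k)}}(z)$, its central character. Since $M_T$ lies inside a submodule of $M_\lambda|_{S_k}$ isomorphic to $M_{\lambda^{(k)}}$, the element $z$ acts on $M_T$ by the very same scalar; in particular $M_T$ is stable under $\mathcal Z_k$, and being one dimensional it is an eigenspace for $\mathcal Z_k$. As this holds for every $k$, and $\mathfrak Z_n$ is by Definition \ref{matk} the algebra generated by $\mathcal Z_1,\ldots,\mathcal Z_n$, I would conclude: a one dimensional subspace on which each generator $\mathcal Z_k$ acts by scalars is automatically an eigenspace for the whole generated algebra, since that algebra is spanned by products of elements of the $\mathcal Z_k$ (so commutativity of $\mathfrak Z_n$ is not even needed here — it follows a posteriori, applied to all $\lambda$). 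Hence $M_T$ is an eigenspace for $\mathfrak Z_n$.

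The only point requiring care — and the place where a careless argument could slip — is the input taken from Remark \ref{td}, namely that $M_T$ sits inside a \emph{single} irreducible $S_k$--submodule of $M_\lambda|_{S_k}$. This rests on the branching rule being multiplicity free, so that each decomposition $M_\mu|_{S_{j-1}}=\bigoplus_\nu M_\nu$ is canonical and their iteration refines $M_\lambda|_{S_k}$ into well-defined summands indexed by the partial chains down to level $k$, the summand attached to the chain of $T$ being an irreducible of shape $\lambda^{(k)}$. Granting this standard fact of Young theory (see \cite{sagan}, \cite{P7}), no genuine obstacle remains. As a final remark one may note that the central characters $\omega_{\lambda^{(k)}}$ take integer values, so the statement and proof are valid over $\QQ$ (indeed over $\ZZ$), even though the normalised generator $v_T$ of $M_T$ may only exist after passing to a real algebraic extension.
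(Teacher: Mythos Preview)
Your argument is correct and is essentially the same as the paper's own proof: the paper simply observes that, by construction, $M_T$ lies in an irreducible $S_k$--representation on which the elements of $\mathcal Z_k$ act as scalars, which is exactly your Schur's lemma step combined with Remark~\ref{td}. Your version is more explicit (the chain of partitions, the role of multiplicity--free branching, the remark on integrality), but the underlying idea is identical.
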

\begin{proof}
Take one such 1--dimensional subspace  $M_T$ associated to a standard tableau $T$. Given any $k\leq n$  the space $M_T$ by construction is contained in an irreducible representation of $S_k$  where the elements of $ \mathcal Z_k$ act as scalars.

 \end{proof}
By the Theorem of Jucys--Murphy  and the Theorem of Farahat--Higman the subalgebra of $\mathbb Z[S_n]$  generated by the elements $J_2,\ldots, J_k$      contains the   class algebra   $\mathcal Z_k$ (and conversely). in the next Theorem \ref{JMu} we will see that in fact this subalgebra is maximal semisimple.

The final analysis is to understand the eigenvalues of the operators $J_i$ which generate $\mathfrak Z_n$ on  $M_T$.
Given a standard Tableau $T$  and a number $i\leq n$  this number appears in one specific box of the diagram of $T$  and then we define $c_T(i)$  to be the content of this box as in Formula  \eqref{eqn7}.

As example for the first tableau  of the list before Formula \eqref{gz}
$$ c_T(1)=0,\ c_T(2)=-1,\ c_T(3)=-2,\ c_T(4)=1,\ c_T(5)=0,\ c_T(6)=2,\  c_T(7)=3.$$

Let us start with the following fact.  Denote by $c_2(k)$ the sum of all transpositions of $S_k$. It is a central element so it acts as a scalar on each irreducible representation and one has, see Frobenius \cite{frobenius} or Macdonald  \cite{macdonald}
    \begin{proposition}\label{tra}
The action of $c_2(k)$ on an irreducible representation associated to a partition $\lambda=\lambda_1,\ldots,\lambda_k$ is
\begin{equation}\label{fotr}
\frac12\sum_{i=1}^k(\lambda_i^2-(2i-1)\lambda_i)
\end{equation}

\end{proposition}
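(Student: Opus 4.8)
The plan is to read off the scalar by which $c_2(k)$ acts on the irreducible $M_\lambda$ directly from the central element $P=\sum_{\rho\in S_k}d^{c(\rho)}\rho$ already analysed above, and then to perform the elementary summation that turns that scalar into the stated closed form. Since $c_2(k)=\sum_{1\le i<j\le k}(i,j)$ is the class sum of the transpositions it lies in the centre of $F[S_k]$, hence acts on $M_\lambda$ as a scalar $\omega_\lambda=\chi_\lambda\bigl(c_2(k)\bigr)/\chi_\lambda(1)$; this scalar concerns only the $S_k$--module $M_\lambda$, so there is no loss in taking $d\ge k$ throughout.

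Now group the summands of $P$ according to the number of cycles $c(\rho)$: the only $\rho$ with $c(\rho)=k$ is the identity, and the $\rho$ with $c(\rho)=k-1$ are precisely the transpositions, so
\[
P=d^{k}\cdot 1+d^{k-1}\,c_2(k)+R,\qquad R:=\sum_{\rho\in S_k,\ c(\rho)\le k-2}d^{c(\rho)}\rho .
\]
The element $R$ is again central and acts on $M_\lambda$ by the polynomial $\chi_\lambda(1)^{-1}\sum_{c(\rho)\le k-2}d^{c(\rho)}\chi_\lambda(\rho)$ in $d$, of degree $\le k-2$. On the other hand, by Proposition \ref{cdro} the central element $P$ acts on $M_\lambda$ by $r_\lambda(d)=\prod_{u\in\lambda}(d+c_u)$. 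Equating the two expressions for this scalar,
\[
\prod_{u\in\lambda}(d+c_u)=d^{k}+\omega_\lambda\,d^{k-1}+(\text{a polynomial in }d\text{ of degree}\le k-2),
\]
and comparing the coefficient of $d^{k-1}$ with the expansion $\prod_{u\in\lambda}(d+c_u)=d^{k}+\bigl(\sum_{u\in\lambda}c_u\bigr)d^{k-1}+\cdots$ gives $\omega_\lambda=\sum_{u\in\lambda}c_u$.

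It remains to evaluate the content sum. Using $c(i,j)=j-i$ and summing along each row,
\[
\sum_{u\in\lambda}c_u=\sum_{i\ge1}\ \sum_{j=1}^{\lambda_i}(j-i)=\sum_{i\ge1}\Bigl(\frac{\lambda_i(\lambda_i+1)}{2}-i\lambda_i\Bigr)=\frac12\sum_{i=1}^{k}\bigl(\lambda_i^{2}-(2i-1)\lambda_i\bigr),
\]
which is exactly the asserted value. There is no real obstacle here; the only points needing care are formal ones already secured earlier — that $R$ contributes only to $d$--degrees $\le k-2$, so that the coefficient of $d^{k-1}$ unambiguously records $\omega_\lambda$, and that $r_\lambda(d)=\prod_{u\in\lambda}(d+c_u)$ is a genuine polynomial identity in $d$ (Theorem \ref{thm2} and Proposition \ref{cdro}). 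One could instead run a Schur--Weyl argument — a transposition acts on $V^{\otimes k}$ as $\sum_{a,b}e_{ab}^{(i)}e_{ba}^{(j)}$, whence $2\,c_2(k)=\sum_{a,b}E_{ab}E_{ba}-dk$ with $E_{ab}$ the image of the matrix unit in $End(V^{\otimes k})$, and the $\mathfrak{gl}(V)$--Casimir $\sum_{a,b}E_{ab}E_{ba}$ evaluates on $S_\lambda(V)$ to the same scalar — but that route needs the Casimir eigenvalue, which is not among the recalled facts, so the first route is preferable.
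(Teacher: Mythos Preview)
Your argument is correct. The key identity $\omega_\lambda=\sum_{u\in\lambda}c_u$ drops out cleanly from reading off the $d^{k-1}$--coefficient on both sides of $r_\lambda(d)=\prod_{u\in\lambda}(d+c_u)$, and the row--by--row content sum is computed accurately.

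The paper, however, does not prove this proposition at all: it simply records it as a classical fact with references to Frobenius and Macdonald, and then uses it as input to Theorem~\ref{JMu}. So your route is genuinely different in that you supply an internal derivation from material already developed in the paper (Proposition~\ref{cdro}), whereas the paper imports the result from the literature. What your approach buys is a pleasant closed loop: once one accepts the hook--content formula~\eqref{sth} from the cited sources, the eigenvalue of $c_2(k)$ falls out in two lines, with no further appeal to character tables or Frobenius's original computation.

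One caution worth recording: the paper's \S\ref{shcf} later \emph{re}-derives the hook--content formula from Theorem~\ref{JMu}, whose proof in turn rests on the present Proposition~\ref{tra}. Your argument invokes Proposition~\ref{cdro}, which depends on~\eqref{sth}. This is not a logical problem as the paper stands---\eqref{sth} is introduced with independent external references in Theorem~\ref{thm2}, and \S\ref{shcf} is offered as an alternative viewpoint, not as the primary justification---but if one were ever tempted to treat \S\ref{shcf} as the \emph{sole} proof of~\eqref{sth}, your derivation of~\ref{tra} would close a circle. It is worth a one--line remark that you are using the externally cited form of Theorem~\ref{thm2}.
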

 
If  we consider $S_{k-1}\subset  S_k $  we have  $J_k= c_2(k)-c_2(k-1) .$

\begin{theorem}\label{JMu}
\begin{equation}\label{JMe}
J_iv_T= c_T(i)v_T,\ \forall i=2,\ldots,n,\ \forall T.
\end{equation}
\end{theorem}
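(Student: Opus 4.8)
The plan is to deduce Formula \eqref{JMe} from Proposition \ref{tra} together with the nesting property of the Young line $M_T$ recorded in Remark \ref{td}. First I would use the elementary identity already noted just above, namely $J_k = c_2(k) - c_2(k-1)$ inside $\mathbb{Z}[S_k]\subset\mathbb{Z}[S_n]$, valid for $2\le k\le n$ with the convention $c_2(1)=0$; it holds because the transpositions of $S_k$ not lying in $S_{k-1}$ are precisely $(1,k),(2,k),\ldots,(k-1,k)$. Since $c_2(k)$ is central in $\mathbb{Z}[S_k]$, it acts as a scalar on every irreducible $S_k$-submodule of any representation, in particular on the irreducible $S_k$-submodule of $M_\lambda$ that contains the line $M_T$.

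Next I would invoke Remark \ref{td}: for a standard tableau $T$ of shape $\lambda\vdash n$ and each $k\le n$, the line $M_T$ lies in the irreducible representation of $S_k$ whose diagram $\lambda^{(k)}$ is the sub-diagram of $T$ occupied by the entries $1,\ldots,k$. Combining this with Proposition \ref{tra} gives $c_2(k)v_T = s(\lambda^{(k)})v_T$, where $s(\mu):=\tfrac12\sum_i(\mu_i^2-(2i-1)\mu_i)$. Subtracting the cases $k$ and $k-1$ and using the first step, we obtain $J_k v_T=\bigl(s(\lambda^{(k)})-s(\lambda^{(k-1)})\bigr)v_T$, so everything reduces to evaluating this difference.

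It then remains a one-line computation. The diagram $\lambda^{(k)}$ differs from $\lambda^{(k-1)}$ by the single box that $T$ labels with $k$; if that box sits in row $r$ and column $j$, so its content is $c_T(k)=j-r$, then $\lambda^{(k)}_r=j$, $\lambda^{(k-1)}_r=j-1$, and all other parts agree. Only the $r$-th summand of $s$ changes, and
\begin{equation*}
s(\lambda^{(k)})-s(\lambda^{(k-1)})=\tfrac12\bigl(j^{2}-(j-1)^{2}-(2r-1)\bigl(j-(j-1)\bigr)\bigr)=\tfrac12\bigl((2j-1)-(2r-1)\bigr)=j-r=c_T(k),
\end{equation*}
which is exactly \eqref{JMe}, for every $k=2,\ldots,n$ and every $T$.

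I expect no real obstacle: the substantive inputs are Proposition \ref{tra} (the scalar by which the sum of transpositions acts, due to Frobenius) and the nesting statement of Remark \ref{td}, both already established, and the remaining telescoping is routine. One could alternatively argue by induction on $k$ using the branching rule and the fact that $J_k$ centralizes $\mathbb{Z}[S_{k-1}]$, but the route through $c_2(k)$ is shorter, and is the one I would write out.
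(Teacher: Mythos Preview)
Your proposal is correct and is essentially the same proof as the paper's: both use $J_k=c_2(k)-c_2(k-1)$, apply Frobenius' formula (Proposition~\ref{tra}) to the irreducible $S_k$- and $S_{k-1}$-submodules containing $M_T$, and compute the resulting difference to obtain the content $j-r$ of the box labelled $k$. The only differences are notational (your $(r,j)$ and $\lambda^{(k)}$ versus the paper's $(a,b)$ and $D_i$).
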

\begin{proof} We follow Okounkov  \cite{Ok}  who makes reference to Olshanski \cite{Ol}.

We need to compute $ (c_2(i)-c_2(i-1) )v_T.$
Now  $v_T$  belongs to the irreducible representation of $S_i$ whose diagram is the subdiagram $D_i$ of the diagram of $T$   containing the indices $1,\ldots, i$  and let  $(a,b)$  be the coordinates of the box  where $i$ is placed.

In the same way  $v_T$  belongs to the irreducible representation of $S_{i-1}$  whose diagram is the subdiagram of $D_i$ obtained removing the box  $(a,b)$.

Applying Formula \eqref{fotr} to the two elements  $  c_2(i),c_2(i-1) $ we see that the two diagrams coincide except for the $a$ row which in one case has length $b$ in the other $b-1$  so the difference of the two values is
$$\frac12[ (b^2-(2a-1)b)-((b-1)^2-(2 a-1(b-1))]=b-a.
 $$
\end{proof}
\begin{proposition}\label{det}
The function $c_T(i),\ i=1,\ldots, n$ determines the standard tableau $T$.
\end{proposition}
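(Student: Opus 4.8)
The plan is to recover $T$ inductively, box by box, from the sequence of contents $c_T(1),c_T(2),\dots,c_T(n)$. Because $T$ is standard, for each $k$ the boxes of $T$ carrying the labels $1,2,\dots,k$ form a Young subdiagram $\mu^{(k)}$, and the restriction $T^{(k)}$ of $T$ to these boxes is again a standard tableau. I would show that $c_T(1),\dots,c_T(k)$ determine $T^{(k)}$; taking $k=n$ then gives the proposition. The passage from $T^{(k-1)}$ to $T^{(k)}$ amounts to adjoining the box labelled $k$, and this box is necessarily an \emph{addable corner} of $\mu^{(k-1)}$, that is, a box $u=(a,b)\notin\mu^{(k-1)}$ for which $\mu^{(k-1)}\cup\{u\}$ is still a Young diagram.

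The one substantive point is the following elementary lemma: distinct addable corners of a Young diagram $\mu$ have distinct contents. Indeed, if $(a_1,b_1)$ and $(a_2,b_2)$ are addable corners with $a_1<a_2$, then $\mu_{a_1}=b_1-1$ and $\mu_{a_2}=b_2-1$, while addability of $(a_2,b_2)$ forces $\mu_{a_2-1}>\mu_{a_2}$; since $a_1\le a_2-1$ and the parts of $\mu$ are weakly decreasing, $\mu_{a_1}\ge\mu_{a_2-1}>\mu_{a_2}$, hence $b_1>b_2$. Combined with $a_1<a_2$ this gives $c(a_1,b_1)=b_1-a_1>b_2-a_2=c(a_2,b_2)$. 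Thus, reading the addable corners of $\mu$ from the top row downward, their contents are strictly decreasing; in particular $\mu$ has at most one addable corner of any prescribed content.

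Granting the lemma, the induction is immediate. For $k=1$ one has $c_T(1)=0$ forced and $T^{(1)}$ is the single box $(1,1)$. Assume $T^{(k-1)}$, and hence its shape $\mu^{(k-1)}$, has been recovered. In the actual tableau $T$ the box labelled $k$ is an addable corner of $\mu^{(k-1)}$ whose content equals $c_T(k)$, so such a corner exists; by the lemma it is unique, and placing $k$ there produces $T^{(k)}$. After $n$ steps $T=T^{(n)}$ has been reconstructed from $c_T(1),\dots,c_T(n)$, which is the assertion.

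I do not expect any real obstacle here: the whole argument rests on the one-line monotonicity lemma on contents of addable corners. It is worth recording that, read together with Theorem~\ref{JMu}, this proposition says precisely that the joint eigenvalue $(c_T(2),\dots,c_T(n))$ of the Jucys--Murphy elements $J_2,\dots,J_n$ separates the Young basis vectors $v_T$, which is the Okounkov--Vershik characterization and shows that the $J_i$ generate a \emph{maximal} commutative subalgebra of $\mathbb{Q}[S_n]$.
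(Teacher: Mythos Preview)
Your proof is correct and follows essentially the same inductive approach as the paper: determine $T'=T^{(n-1)}$ by induction, then locate the box labelled $n$ from its content. The paper phrases the final step slightly differently---``the box for $T$ must be the first in this diagonal which is not in $T'$''---whereas you phrase it via the lemma that distinct addable corners have distinct contents; these are equivalent observations, and your version supplies the detail the paper leaves implicit.
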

\begin{proof}
By induction   the function $c_T(i),\ i=1,\ldots, n-1$ determines the part $T'$ of the tableau $T$  except the box occupied by $n$.

As for this box  we know its content, $c_T(n)$. Now the boxes with a given content form a {\em diagonal} and then the box for $T$  must be the first in this diagonal which is not in  $T'$.
\end{proof}
This shows that the algebra generated by the elements $J_i$  separates all the vectors of all Young bases so:  \begin{corollary}\label{maa} The   elements $J_i,\ i=2, \ldots$.  generate  the  maximal semisimple commutative subalgebra $\mathcal S$ of $\mathbb Q[S_n]$  of all elements which are diagonal  on all Young bases..

\end{corollary}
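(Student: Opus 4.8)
The plan is to identify the algebra $A$ generated by $1,J_2,\dots,J_n$ with $\mathcal S$, and then to check that $\mathcal S$ is maximal among commutative subalgebras of $\mathbb Q[S_n]$ (hence also among semisimple commutative ones).

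\emph{Step 1: $A\subseteq\mathcal S$.} By Proposition~\ref{JM} the elements $J_i$ commute, and by Theorem~\ref{JMu} each $J_i$ acts on the Young line $M_T$ by the scalar $c_T(i)$. Hence every element of $A$ is simultaneously diagonal in all Young bases, so $A\subseteq\mathcal S$. (Note also $J_2^2=1$, so $A$ is automatically unital.)

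\emph{Step 2: a concrete description of $\mathcal S$.} Using $\mathbb Q[S_n]=\bigoplus_{\lambda\vdash n}\mathrm{End}(M_\lambda)$ together with the orthogonal decomposition $M_\lambda=\bigoplus_T M_T$ of \eqref{gz}, an element is diagonal in every Young basis precisely when, in each block $\mathrm{End}(M_\lambda)$, it is a diagonal matrix in the Young basis. Therefore $\mathcal S=\bigoplus_T\mathbb Q\,e_T$, where $e_T$ denotes the rank-one idempotent projecting onto $M_T$; in particular $\mathcal S$ is semisimple, and $x\mapsto(c_T(x))_T$ is an isomorphism of $\mathcal S$ onto the algebra $\mathbb Q^{\mathcal T}$ of functions on the finite set $\mathcal T$ of standard tableaux.

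\emph{Step 3: $A=\mathcal S$.} Under the identification of Step~2 the element $J_i$ becomes the function $T\mapsto c_T(i)$, and Proposition~\ref{det} asserts exactly that these functions separate the points of $\mathcal T$. Given $T\neq T'$, I would choose $i$ with $c_T(i)\ne c_{T'}(i)$; since the contents are integers the element $\bigl(J_i-c_{T'}(i)\bigr)/\bigl(c_T(i)-c_{T'}(i)\bigr)$ of $A$ takes the value $1$ at $T$ and $0$ at $T'$. Multiplying such elements over all $T'\ne T$ yields an element of $A$ equal to $e_T$. Thus all the $e_T$ lie in $A$, so $A\supseteq\bigoplus_T\mathbb Q\,e_T=\mathcal S$, and combined with Step~1 this gives $A=\mathcal S$. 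I expect this interpolation step — passing from the separation statement of Proposition~\ref{det} to the idempotents $e_T$ themselves — to be the only substantive point; everything else is bookkeeping.

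\emph{Step 4: maximality.} If $B\supseteq\mathcal S$ is commutative, then $B$ commutes with each $e_T$, so $B\subseteq\bigoplus_T e_T\,\mathbb Q[S_n]\,e_T$. Since $e_T$ is a rank-one idempotent, $e_T\,\mathbb Q[S_n]\,e_T=\mathbb Q\,e_T$, hence $B\subseteq\mathcal S$, i.e. $B=\mathcal S$. So $\mathcal S$ is a maximal commutative, and a fortiori a maximal semisimple commutative, subalgebra of $\mathbb Q[S_n]$, and by Step~3 it is generated by the $J_i$.
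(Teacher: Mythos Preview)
Your proof is correct and follows essentially the same route as the paper: both arguments use Theorem~\ref{JMu} to see that the $J_i$ act diagonally with eigenvalues $c_T(i)$, invoke Proposition~\ref{det} to conclude that these eigenvalues separate standard tableaux, and then observe that the algebra of all Young-diagonal elements is the direct sum of the diagonal matrices in each block of $\mathbb Q[S_n]$ and hence maximal commutative. Your Steps~3 and~4 are simply more explicit than the paper's compressed ``maps surjectively\dots injective\dots hence maximal'' paragraph --- in particular your interpolation construction of the idempotents $e_T$ spells out what the paper's surjectivity claim really means.
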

 \begin{proof}
By Theorem \ref{JMu} and Proposition \ref{det} the subalgebra   $\mathcal S$  maps surjectively to the subalgebra of  $\mathbb Q[S_n]$  of all elements which are diagonal  on all Young bases. But this map is also injective since an element of $\mathbb Q[S_n]$ which vanishes on all irreducible representations equals to 0.  Hence $\mathcal S$ is the direct sum of the diagonal matrices  (in this basis) for all matrix algebras in which   $\mathbb Q[S_n]$ decomposes and this is a maximal commutative semisimple subalgebra  hence the claim.
\end{proof}\subsection{Stanley hook--content formula\label{shcf}}
Let us finally show that the Jucys factorization, Formula \eqref{JF},  can be viewed as a refinement of Stanley hook--content formula \eqref{sth}.

In fact  consider the scalar value of the central operator  $$P=\sum_{\rho\in S_k} d^{c(\rho|}\rho=d\prod_{i=2}^k(d+J_i)$$ on an irreducible representation $M_\mu$. It can be evaluated, from Formula \eqref{dcro1} as
$$\chi_\mu(1)^{-1} tr(P)=\chi_\mu(1)^{-1} \sum_\sigma\sum_{\lambda\vdash k,\ ht(\lambda)\leq d}s_{\lambda }(d)\chi_\lambda(\sigma)
\chi_\mu(\sigma)$$\begin{equation}\label{posc}
=\chi_\mu(1)^{-1} k!s_{\mu }(d)=\prod_{u\in \mu}h_us_{\mu }(d).
\end{equation} On the other hand this scalar is also the value obtained by applying  the operator  $P= d\prod_{i=2}^k(d+J_i)$ on any standard tableau of the Young basis of $M_\mu$ giving, by Formula \eqref{JMe},  the value 
 \begin{equation}\label{cco}
d\prod_{i=2}^k(d+c_T(i))=\prod_{u\in \mu} (d+c_u).
\end{equation}Comparing Formulas  \eqref{posc} and \eqref{cco} one finally has    Stanley hook--content formula \eqref{sth}.

\bibliographystyle{amsalpha}

\end{document}